\newtheorem{theorem}{Theorem}[section]
\newtheorem{corollary}[theorem]{Corollary}
\newtheorem{lemma}[theorem]{Lemma}
\newtheorem{definition}[theorem]{Definition}
\newtheorem{remark}[theorem]{Remark}
\newtheorem{example}[theorem]{Example}
\newtheorem{proposition}[theorem]{Proposition}
\numberwithin{equation}{section}
\DeclareMathOperator{\supp}{supp} 
 \DeclareMathOperator{\dom}{dom}
\DeclareMathOperator{\ess}{ess}
\DeclareMathOperator{\loc}{loc} \DeclareMathOperator{\comp}{comp}
\newcommand{\be}{\begin{equation}}
\newcommand{\ee}{\end{equation}}
\newcommand{\floor}[1]{\lfloor#1 \rfloor}
\newcommand\R{{\mathbb{R}}}
\newcommand\N{{\mathbb{N}}}
\newcommand\gH{{\mathfrak{H}}}
\newcommand\gt{{\mathfrak{t}}}
\newcommand\gb{{\mathfrak{b}}}
\newcommand\gq{{\mathfrak{q}}}
\newcommand{\gd}{{d}}
\newcommand{\gA}{{\alpha}}
\newcommand{\gB}{{\beta}}
\newcommand\cH{{\mathcal{H}}}
\newcommand\rH{{\rm{H}}}
\newcommand\rD{{\rm{d}}}
\begin{document}

\title[Schr\"odinger operators with $\delta'$-interactions]{Spectral theory of semi-bounded  Schr\"odinger  operators with $\delta'$-interactions}

\author[A.\ Kostenko]{Aleksey Kostenko}
\address{Fakult\"at f\"ur Mathematik\\
Universit\"at Wien\\
Nordbergstr. 15\\
1090 Wien, Austria}

\author[M.\ Malamud]{Mark Malamud}
\address{Institute of Applied Mathematics and Mechanics\\
NAS of Ukraine\\ R. Luxemburg str. 74\\
Donetsk 83114\\ Ukraine}

\keywords{Schr\"odinger operator,
$\delta'$-point interaction, quadratic form, spectral properties}
\subjclass[2010]{
34L40;
47E05;
81Q10;
34L05}

\thanks{{\it The research was funded by the Austrian Science Fund (FWF) under project No.\ M1309--N13}}





\begin{abstract}
We study spectral properties of Hamiltonians $\rH_{X,\gB,q}$ with $\delta'$-point interactions
on a discrete set $X=\{x_k\}_{k=1}^\infty\subset\R_+$. 
Using the form approach, we establish analogs of some classical results on operators $\rH_q=-d^2/dx^2+q$ with locally
integrable potentials $q\in L^1_{\loc}(\R_+)$. In particular, we establish analogues of the Glazman--Povzner--Wienholtz theorem, the Molchanov discreteness criterion, and the Birman theorem
on stability of an essential spectrum. It turns out that in contrast to the case of Hamiltonians with $\delta$-interactions, spectral properties of operators $\rH_{X,\gB,q}$ are closely connected with those of $\rH_{X,q}^N=\oplus_{k}\rH_{q,k}^N$, where $\rH_{q,k}^N$ is the Neumann realization
of $-d^2/dx^2+q$ in $L^2(x_{k-1},x_k)$.
\end{abstract}

\maketitle

\tableofcontents

\section{Introduction}\label{intro}

The main object of the paper is the Hamiltonian $\rH_{X,\gB,q}$ associated in $L^2(\R_+)$ with the formal differential expression
\be\label{eq:i01}
\tau_{X,\gB,q}:=-\frac{d^2}{dx^2}+q(x)+\sum_{k=1}^\infty\gB_k(.,\delta'_k)\delta'_k,
\ee
where $\delta'_k:=\delta'(x-x_k)$ is the derivative of a Dirac delta function centered at $x_k$.  It is also assumed that $q\in L^1_{\loc}(\R_+)$ and $X=\{x_k\}_{k=1}^\infty$ is a strictly increasing sequence such that $x_k\uparrow +\infty$.

Hamiltonians with $\delta'$-interactions are known as exactly solvable models of quantum mechanics (see \cite{Alb_Ges_88}). These models called "solvable" since their resolvents can be computed explicitly in terms of the interaction strengths and the location of the sources. As a consequence the spectrum, the eigenfunctions, and further spectral properties can be determined explicitly. Models of this type have been extensively discussed in the physical literature, mainly in atomic, nuclear and solid state physics. Note also that a connection of  Hamiltonians with $\delta'$-interactions and the  so-called Krein--Stieltjes strings  has recently been discovered in \cite{KosMal09}, \cite{KosMal10}.

The existence of the model \eqref{eq:i01} was pointed out in 1980 by Grossmann, Hoegh--Krohn and Mebkhout \cite{GroHoeMeb80}.
However, the first rigorous treatment of \eqref{eq:i01} was made by Gesztesy and Holden in \cite{GesHol87}. Namely, they defined the Hamiltonian $\rH_{X,\gB,q}$ by using the method of boundary conditions. To be precise, the operator $\rH_{X,\gB,q}$ is defined in $L^2(\R_+)$ as the closure of the symmetric operator $\rH_{X,\gB,q}^0$,
\be\label{eq:h_b3}
\rH_{X,\gB,q}:=\overline{\rH_{X,\gB,q}^0}.
\ee
where
 \begin{align}
\rH_{X,\gB,q}^0f:=&\tau_qf=-f''+q(x)f,\quad f\in\dom(\rH_{X,\gB,q}^0),\label{eq:h_b1}\\
\dom(\rH_{X,\gB,q}^0):=\Big\{f\in W^{2,1}_{\comp}&([0,b)\setminus X):\, f'(0)=0,\Big. \nonumber
\\ &\Big.  \begin{array}{c}
f'(x_k+)=f'(x_k-)\\ f(x_k+)-f(x_k-)=\gB_k f'(x_k) \end{array},\, \tau_q f\in L^2(\R_+)\Big\}.\label{eq:h_b2}
\end{align}
Note that in the case $\gB_k=0$, $k\in\N$, the Hamiltonian $\rH_{X,\gB,q}$ coincides with the free Hamiltonian $\rH_q$. If $\gB_k= \infty$, then the boundary condition at $x_k$  reads as
$f'(x_k+)=f'(x_k-) = 0$. Therefore, the operator $\rH_{X,\infty,q}$ becomes
\be\label{eq:h_N}
 \rH_{X,\infty,q}:=\rH_{X,q}^N = \bigoplus_{k\in \N} \rH_{q,k}^N, \quad  \dom(\rH_{X,q}^N) = \bigoplus_{k\in \N}\dom(\rH_{q,k}^N),
\ee
where $\rH_{q,k}^N$ is the Neumann realization of $\tau_q=-\frac{d^2}{dx^2}+q$ in $L^2(x_{k-1},x_k)$. If $q=\bold{0}$, then we set $\rH_{X}^N:=\rH_{X,\bold{0}}^N$ and $\rH_{k}^N:=\rH_{\bold{0},k}^N$.

The literature on point interactions is vast and for a comprehensive review we refer to \cite{Alb_Ges_88}, \cite{Exn05} and \cite{KosMal12}. The main novelty of the present paper is that we use the form approach for the study of spectral properties of $\rH_{X,\gB,q}$.
Note that the form approach has successfully been applied to Hamiltonians with $\delta$-interactions
\be\label{eq:hxa}
\rH_{X,\gA,q}=-\frac{d^2}{dx^2}+q(x)+\sum_{k=1}^\infty\gA_n\delta(x-x_k),\quad \gA_k\in\R,
\ee
(see for instance \cite{AKM_10}, \cite{Bra85} and the references therein).
As distinguished from this case,
up to now it was not clear how to apply the form approach for the study of Hamiltonians with $\delta'$-interactions (cf. \cite[Section 7.2]{Exn08}). Indeed, a very naive guess is to consider  a single $\delta'$-interaction at $x_0$  as the following form sum
\[
\gt'[f]=\int_{\R} |f'(x)|^2\, dx +\gB_0|f'(x_0)|^2
\]
defined on the domain
\[
 \dom(\gt')=\{f\in W^{1,2}(\R):\, f'(x_0) \ \text{exists and is finite}\}.
\]
Clearly, the form $\gt'$ is not closable. However\footnote{In the paper \cite{BehLanLot12}, which appeared during the preparation of the present paper, Hamiltonians with a $\delta'$-interaction supported on a hypersurface are treated in the framework  of the form approach.},
 one needs to consider a $\delta'$-interaction as a form sum of two forms $\gt_N$ and $\gb$, where
\be\label{eq:3.5}
\gt_N[f]:=\int_{\R} |f'(x)|^2\, dx,\quad \dom(\gt_N):=W^{1,2}(\R\setminus\{x_0\}),
\ee
and
\be\label{eq:3.6}
\gb[f]:=\frac{|f(x_0+)-f(x_0-)|^2}{\gB_0},\quad \dom(\gt_N):=W^{1,2}(\R\setminus\{x_0\}).
\ee
Let us note that the operator
\be
\rH_{x_0}^N:=-\frac{d^2}{dx^2}, \quad \dom(\rH_{x_0}^N)=\{f\in W^{2,2}(\R\setminus \{x_0\}):\ f'(x_0+)=f'(x_0-)=0\},
\ee
is associated with the form $\gt_N$. Clearly, $\rH_{x_0}^N$ is the direct sum of Neumann realizations of $-\frac{d^2}{dx^2}$ in $L^2(-\infty,x_0)$ and $L^2(x_0,+\infty)$, respectively. Note that the form $\gb$ is infinitesimally form bounded with respect to the form $\gt_N$ and hence, by the KLMN theorem,  the form
\be\label{eq:3.7}
\gt'[f]:=\gt_N[f]+\gb[f], \quad \dom(\gt'):=W^{1,2}(\R\setminus\{x_0\}),
\ee
is closed and lower semibounded and gives rise to a self-adjoint operator 
\be\label{eq:3.8}
\rH'=-\frac{d^2}{dx^2},\quad \dom(\rH'):=\Big\{f\in W^{2,2}(\R\setminus\{x_0\}): \begin{array}{c}
f'(x_0+)=f'(x_0-)\\
f(x_0+)-f(x_0-)=\gB_0f'(x_0+)\end{array}\Big\}.
\ee
A precise definition of the form in the case of an infinite sequence $X$ is given in Section \ref{ss:2.2}.

 Let us emphasize that the definition of a $\delta'$-interaction via the form sum \eqref{eq:3.7}
 allows to observe the key difference between $\delta$ and $\delta'$ point interactions.
 Namely, the Hamiltonian $\rH_{X,\gA,q}$ with $\delta$-interactions \eqref{eq:hxa} is usually treated as a form perturbation of the free Hamiltonian $\rH_{q}=-d^2/dx^2+q(x)$ in $L^2(\R_+)$ (see \cite{AKM_10}, \cite{Bra85}).
 However, the Hamiltonian $\rH_{X,\gB,q}$ with $\delta'$-interactions on $X$ can be treated  as a form perturbation of the operator $\rH_{X,q}^N$ defined by \eqref{eq:h_N}.
For instance, in the case of infinitely many interaction centers and $q=\bold{0}$, the free Hamiltonian $\rH_{0}$ has
 purely absolutely continuous spectrum although the spectrum of $\rH_{X}^N$ is purely point.
 Let us also mention that the idea to consider Hamiltonians with $\delta'$-interactions
 $\rH_{X,\gB,q}$ as a perturbation (in a resolvent sense) of the Neumann realization
 $\rH_{X,q}^N$ was already used by P. Exner in \cite{Exn95}  to study the  spectral  properties
 of $\delta'$ Wannier--Stark Hamiltonians.


Using the form approach, we establish a number of results on semibounded Hamiltonians $\rH_{X,\gB,q}$.
Let us describe the content of the paper and the main results.

In Section \ref{sec:sb}, we investigate self-adjointness and semiboundedness of Hamiltonians with $\delta'$-interactions.
Our first main result is the analog of the classical Glazman--Povzner--Wienholtz
theorem (see, e.g.,  \cite{Ber68}, \cite{Gla65}, \cite{Wie58}).

%
%
\begin{theorem}\label{th_III.1}
If the minimal operator $\rH_{\min} = \rH_{X,\gB,q}$ is lower
semibounded, then it is self-adjoint, $\rH_{X,\gB,q} =
(\rH_{X,\gB,q})^*$.
\end{theorem}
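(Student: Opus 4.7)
I would adapt the classical Povzner--Wienholtz localization argument to the $\delta'$-setting. After subtracting a constant we may assume $\rH_{X,\gB,q}\geq I$. Since $\rH_{X,\gB,q}$ is closed and symmetric, its self-adjointness is equivalent to $\ker((\rH_{X,\gB,q})^*)=\{0\}$. A standard integration-by-parts argument against test functions from $\dom(\rH^0_{X,\gB,q})$ identifies $\dom((\rH_{X,\gB,q})^*)$ with those $u\in L^2(\R_+)$ that lie in $W^{2,1}_{\loc}$ on every $(x_{k-1},x_k)$, satisfy $u'(0+)=0$, the interface conditions $u'(x_k+)=u'(x_k-)$ and $u(x_k+)-u(x_k-)=\gB_k u'(x_k)$, and have $\tau_q u\in L^2(\R_+)$.

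\textbf{Localization.} Fix $u\in\ker((\rH_{X,\gB,q})^*)$, so $\tau_q u=0$ on each subinterval. Pick cutoffs $\varphi_n\in C^\infty_c([0,+\infty))$ with $\varphi_n\equiv 1$ on $[0,n]$, $\supp\varphi_n\subset[0,2n]$ and $|\varphi_n'|\leq C/n$. Since $u\in W^{1,2}_{\loc}$ on each sub-interval, the product $\varphi_n u$ has compact support, lies in $W^{1,2}(\R_+\setminus X)$, and therefore belongs to the form domain $\dom(\gt_{X,\gB,q})$.

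\textbf{IMS-type identity and conclusion.} The heart of the argument is the localization identity
\begin{equation*}
\gt_{X,\gB,q}[\varphi_n u]=\int_0^\infty|\varphi_n'(x)|^2\,|u(x)|^2\,dx+\re\,\langle(\rH_{X,\gB,q})^*u,\,\varphi_n^2 u\rangle_{L^2(\R_+)}.
\end{equation*}
To prove it, integrate $\re\langle\tau_q u,\varphi_n^2 u\rangle$ by parts on each subinterval $(x_{k-1},x_k)$; the boundary terms at every $x_k$, after invoking $u'(x_k+)=u'(x_k-)$ and $u(x_k+)-u(x_k-)=\gB_k u'(x_k)$, collapse to $\sum_k\varphi_n(x_k)^2\gB_k|u'(x_k)|^2$, which equals the $\delta'$-jump contribution $\sum_k|(\varphi_n u)(x_k+)-(\varphi_n u)(x_k-)|^2/\gB_k$ appearing in $\gt_{X,\gB,q}[\varphi_n u]$, since $(\varphi_n u)(x_k+)-(\varphi_n u)(x_k-)=\varphi_n(x_k)\gB_k u'(x_k)$. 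Because $(\rH_{X,\gB,q})^*u=0$, the identity reduces to $\gt_{X,\gB,q}[\varphi_n u]=\|\varphi_n' u\|^2$; combining with the form bound $\gt_{X,\gB,q}[\cdot]\geq\|\cdot\|^2$ (inherited from $\rH_{X,\gB,q}\geq I$) yields
\begin{equation*}
\|\varphi_n u\|^2\leq\frac{C^2}{n^2}\int_n^{2n}|u(x)|^2\,dx\xrightarrow[n\to\infty]{}0,
\end{equation*}
while $\|\varphi_n u\|^2\to\|u\|^2$ by monotone convergence, forcing $u=0$.

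\textbf{Main obstacle.} The principal technical point is the IMS identity: the $\delta'$-jump terms in $\gt_{X,\gB,q}[\varphi_n u]$ must absorb exactly the interface boundary contributions produced by integrating by parts across each $x_k$. Both interface conditions on $u$ enter here decisively, and this is the step at which the specific $\delta'$-structure (rather than a $\delta$-coupling or no coupling at all) is essential. A secondary point is the transfer of the lower bound from $\rH_{X,\gB,q}$ to $\gt_{X,\gB,q}$, which relies on $\dom(\rH^0_{X,\gB,q})$ being a form core for $\gt_{X,\gB,q}$, as established earlier in the paper.
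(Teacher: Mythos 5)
Your overall architecture — reduce to $\ker((\rH_{X,\gB,q})^*)=\{0\}$, localize a putative kernel element $u$ with cutoffs, and play the lower bound against an IMS-type identity — is exactly the paper's Povzner--Wienholtz strategy, and your integration-by-parts identity is correct as stated. The genuine gap is the inequality $\gt_{X,\gB,q}[\varphi_n u]\ge\|\varphi_n u\|^2$. The hypothesis of the theorem is lower semiboundedness of the \emph{operator}, which yields $\gt_{X,\gB,q}[f]=(\rH^0_{X,\gB,q}f,f)\ge\|f\|^2$ only for $f\in\dom(\rH^0_{X,\gB,q})$. With your cutoffs, $\varphi_n u$ does \emph{not} lie in $\dom(\rH^0_{X,\gB,q})$: since $u$ genuinely jumps at $x_j$ whenever $u'(x_j)\neq 0$, one has $(\varphi_n u)'(x_j\pm)=\varphi_n(x_j)u'(x_j)+\varphi_n'(x_j)u(x_j\pm)$, so the interface condition $(\varphi_n u)'(x_j+)=(\varphi_n u)'(x_j-)$ fails at every interaction point $x_j$ in the transition region where $\varphi_n'(x_j)\neq 0$. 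To get the bound on the larger form domain you would need $\gt_{X,\gB,q}$ to be lower semibounded and closable with $\dom(\rH^0_{X,\gB,q})$ as a form core; but in the paper the closability of the form and the identification of its associated operator (Proposition \ref{cor:2.4}(ii)) are \emph{deduced from} Theorem \ref{th_III.1}, so invoking them here is circular, and without closability a lower bound does not pass from a dense subspace to the whole form domain. (Note also that lower semiboundedness of the operator does not by itself give lower semiboundedness of the form on all of ${W}^{1,2}(\R_+\setminus X;q,\gB)$; cf.\ Remark \ref{rem3.9}.)

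The fix is precisely the paper's choice of cutoffs: besides $0\le\varphi_n\le1$ and a bounded derivative, one demands $\varphi_n'(x_j)=0$ at every interaction point $x_j$ in the transition region, as in \eqref{eq:phi2}. Then $\varphi_n u$ satisfies both interface conditions, is compactly supported, and hence belongs to $\dom(\rH^0_{X,\gB,q})$, so the operator bound $(\rH^0_{X,\gB,q}(\varphi_n u),\varphi_n u)\ge\|\varphi_n u\|^2$ applies directly and no form-theoretic input is needed. Your $|\varphi_n'|\le C/n$ decay then becomes unnecessary: with $|\varphi_n'|\le 2$ on a transition interval of fixed length one already gets $\int_0^{x_n}|u|^2\,dx\le 4\int_{x_n}^{x_n+1}|u|^2\,dx\to 0$ because $u\in L^2(\R_+)$. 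With this modification your argument coincides with the paper's proof.
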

A similar result for the Hamiltonians $\rH_{X,\gA,q}$ with $\delta$-interactions has been obtained in  \cite{AKM_10} (see also  \cite{HryMyk12} for another proof).

Theorem  \ref{th_III.1}  immediately implies  the following  important statement (see Corollary \ref{cor:2.4}): \emph{if
the form $\gt_{X,\gB,q}$ is lower semibounded, then it is
closable and the self-adjoint operator associated with its closure
coincides with $\rH_{X,\gB,q}$}.

Let us also present some explicit conditions
for the lower semi-boundedness of $\rH_{X,\gB,q}$.
For instance (see Proposition \ref{cor:3.3}), {\em assuming  $\gd^*=\sup_k\gd_k<\infty$,   we show   that the operator $\rH_{X,\gB,q}$ is  lower semibounded whenever  $q$ and
$\gB$ satisfy the following conditions: }
   \begin{equation}\label{I_brinck} 
C_0:=\sup_{k\in\N} \frac{1}{\gd_k}\int_{\Delta_k}q_-(x) dx <
\infty, \qquad\ \
C_1:=\sup_{k\in\N}
\frac{(\gB_k^{-1})^-}{\min\{\gd_k,\gd_{k+1}\}}<\infty.
  \end{equation}
Here
\[
q_-(x):=(|q(x)|- q(x))/2,\quad \text{and} \quad (\gB_k^{-1})^- := \Big(\frac{1}{|\gB_k|}
- \frac{1}{\gB_k}\Big)/2.
\]
 \emph{Emphasize that \eqref{I_brinck} is also necessary
if both $q$ and $\gB$ are negative}.

Combining this statement with Theorem \ref{th_III.1} yields the self-adjointness  of $\rH_{X,\gB,q}$ under the conditions \eqref{I_brinck}. In particular,
$\rH_{X,\gB,q}$ is self-adjoint and lower semibounded if $q$ is lower semibounded and $\gB$ is
nonnegative.

Section \ref{sec:III} is devoted to the problem of discreteness of $\sigma(\rH_{X,\gB,q})$.
The main result is the following analog of the classical  Molchanov discreteness criterion
\cite{Mol53}, \cite{Bri59}, \cite{Gla65} (see also \cite{AKM_10} for  the case of Hamiltonians \eqref{eq:hxa}).
      \begin{theorem}\label{th_discretcriter}
Let  
$q\in L^1_{\loc}(\R_+)$,   $d^*<\infty$  
 and conditions  \eqref{I_brinck} be  satisfied.  

\begin{itemize}
\item[(i)] If the spectrum $\sigma(\rH_{X, \beta,q})$ of the (lower semibounded) Hamiltonian $\rH_{X,\beta,q}$  
is discrete, then the following two conditions are satisfied:
    \begin{equation}\label{Intro2.1}
\lim_{x\to\infty}\int^{x + \varepsilon}_x q(t)dt   
= \infty \qquad \text{for every}\ \ \varepsilon>0,
    \end{equation}
and
%
       \begin{equation}\label{Intro2.2}
\lim_{k\to\infty}\frac{1}{\gd_k} \left(\int_{\Delta_k}q(x) dx +  \Big(\frac{1}{\gB_{k-1}} +
\frac{1}{\gB_k}\Big) \right) = \infty. 
    \end{equation}
    \item[(ii)] If $q$ satisfies condition \eqref{Intro2.1} and
    \begin{equation}\label{Intro2.2B}
\lim_{k\to\infty}\frac{1}{\gd_k} \int_{\Delta_k}q(x) dx  = \infty, 
    \end{equation}
    then the spectrum of $\rH_{X,\gB,q}$ is discrete.
    \end{itemize}
    \end{theorem}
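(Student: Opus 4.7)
The plan rests on the form-theoretic viewpoint developed in the introduction: $\gt_{X,\gB,q} = \gt_{X,q}^N + \gb_\gB$, where $\gt_{X,q}^N$ is the form of the Neumann direct sum $\rH_{X,q}^N = \oplus_k \rH_{q,k}^N$ and $\gb_\gB[f] := \sum_k |f(x_k+)-f(x_k-)|^2/\gB_k$ is the jump perturbation. Discreteness of $\sigma(\rH_{X,\gB,q})$ is equivalent to compactness of the embedding $(\dom(\gt_{X,\gB,q}),\|\cdot\|_{\gt_{X,\gB,q}}) \hookrightarrow L^2(\R_+)$. Under the standing hypothesis \eqref{I_brinck} the negative part of $\gb_\gB$ is controlled by a small fraction of $\gt_{X,q}^N$, so that compactness is governed by essentially the same coercivity estimates as those for $\rH_{X,q}^N$.

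\textbf{Necessity (i).} Assume the spectrum is discrete. If \eqref{Intro2.1} fails for some $\varepsilon>0$, there exist $y_n\to\infty$ with $\int_{y_n}^{y_n+\varepsilon} q\,dx \leq C$. Smooth bump functions $f_n\in C_c^\infty((y_n,y_n+\varepsilon))$ of unit $L^2$-norm satisfy $\gb_\gB[f_n]=0$ (no jumps are incurred) and $\gt_{X,\gB,q}[f_n]$ is uniformly bounded, yet $f_n\rightharpoonup 0$ without strong convergence, contradicting compactness. For \eqref{Intro2.2}, if it fails along a subsequence $k_j$, the orthonormal test sequence $f_j := \gd_{k_j}^{-1/2}\mathbf{1}_{\Delta_{k_j}}\in\dom(\gt_{X,\gB,q})$ satisfies
\[
\gt_{X,\gB,q}[f_j] = \frac{1}{\gd_{k_j}}\Big(\int_{\Delta_{k_j}} q\,dx + \frac{1}{\gB_{k_j-1}} + \frac{1}{\gB_{k_j}}\Big),
\]
which is bounded along $k_j$ while $f_j\rightharpoonup 0$, again contradicting compactness.

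\textbf{Sufficiency (ii).} By a standard Glazman-lemma argument, it suffices to show that for every $\Lambda>0$ there exists $N$ such that $\gt_{X,\gB,q}[f]\geq\Lambda\|f\|^2$ for all $f\in\dom(\gt_{X,\gB,q})$ supported in $[x_N,\infty)$. The key local estimate is
\[
\int_{\Delta_k}(|f'|^2 + q|f|^2)\,dx \geq c_k\int_{\Delta_k}|f|^2\,dx,
\]
with $c_k\to\infty$. Its proof uses the decomposition $f|_{\Delta_k}=\bar f+g$ into its mean $\bar f$ and mean-zero remainder $g$, together with $\|g\|_\infty^2\leq \gd_k\|g'\|^2$ and $\|g\|^2\leq(\gd_k/\pi)^2\|g'\|^2$, and a parameter-dependent AM-GM inequality to absorb the cross term $\int q\,\bar f\, g$. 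Condition \eqref{Intro2.2B} combined with $\gd^*<\infty$ then forces $c_k\to\infty$, while condition \eqref{Intro2.1} is applied on sub-intervals shorter than $\gd_k$ to handle possible clustering of $q$ within a single $\Delta_k$. Finally, the negative part of $\gb_\gB$ across pairs of adjacent intervals is absorbed into a small fraction of $c_k\int_{\Delta_k}|f|^2$ by means of \eqref{I_brinck}.

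\textbf{Main obstacle.} The principal technical difficulty is the local Poincar\'e--Molchanov inequality yielding $c_k$ of the correct order: the AM-GM absorption of the cross term has to be tuned so that the $q_-$ contribution (controlled by \eqref{I_brinck}) does not destroy the growth rate $\gd_k^{-1}\int_{\Delta_k} q\,dx$, and the simultaneous treatment of the negative part of $\gb_\gB$ on adjacent intervals requires careful bookkeeping of which interaction points contribute to which local estimate.
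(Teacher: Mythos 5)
Your proposal is essentially correct and its core computations coincide with the paper's, but the packaging differs in a few places worth comparing. For the necessity of \eqref{Intro2.2} you use exactly the paper's argument (Proposition \ref{prop:4.3}): the normalized indicators $h_k=\gd_k^{-1/2}\chi_{\Delta_k}$ are bounded in the energy space along a bad subsequence yet orthonormal in $L^2$, contradicting Rellich. For the necessity of Molchanov's condition \eqref{Intro2.1} the paper (Proposition \ref{prop4.5}) argues indirectly: it shows the embedding $\gH_q=W^{1,2}(\R_+;q)\hookrightarrow\gH_{X,\gB,q}$ is continuous, composes with the (assumed compact) embedding into $L^2(\R_+)$, and then invokes Molchanov's classical theorem as a black box. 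Your direct bump-function argument is more elementary and works, provided you note that boundedness of $\int_{y_n}^{y_n+\varepsilon}q\,dt$ together with the first condition in \eqref{I_brinck} gives a uniform bound on $\int_{y_n}^{y_n+\varepsilon}q_+\,dt$ (it is $q_+$, not $q$, that you need in order to bound $\gt_{X,\gB,q}[f_n]$ from above). For sufficiency the paper factors through the Neumann realization: it first proves the discreteness criterion for $\rH_{X,q}^N$ (Theorem \ref{th:discr_N}, via uniform smallness of the $L^2$-tails of the unit ball of $W^{1,2}(\R_+\setminus X;q)$, splitting the intervals $\Delta_k$ into short ones handled by \eqref{Intro2.2B} and long ones subdivided into pieces of length $\varepsilon$ handled by \eqref{Intro2.1}), and then transfers the compact embedding to $\gH_{X,\gB,q}$ using the topological identification $\gH_{X,\gB,q}=W^{1,2}(\R_+\setminus X;\gB^+,q_+)$ supplied by Proposition \ref{cor:3.3} and Lemma \ref{lem_III.3}. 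Your local coercivity estimate is the same inequality in disguise, and the mean/mean-zero decomposition is an equally valid substitute for the paper's choice of a mean-value point $y_k$ with $|f(y_k)|^2=\int_{\Delta_k}q|f|^2\,dx/\int_{\Delta_k}q\,dx$.

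Two points in your sketch are loose. First, the negative parts of $q$ and of $\gb_X$ are not absorbed ``into a small fraction of $c_k\int_{\Delta_k}|f|^2$''; under \eqref{I_brinck} they are infinitesimally form bounded with respect to $\gt_X$ (Lemma \ref{lem_III.1}), i.e.\ absorbed into an arbitrarily small multiple of the kinetic term plus a \emph{fixed} constant times $\|f\|_{L^2}^2$, and this constant remainder is harmless only because the positive local bounds $c_k$ tend to infinity. Second, if you use Glazman's decomposition lemma rather than Rellich's criterion you must justify it for forms containing the jump terms $|f(x_k+)-f(x_k-)|^2/\gB_k$ with sign-indefinite $\gB$; cutting off with a function that is locally constant near each $x_k$ makes this work, but it is an extra step that the paper's compact-embedding route avoids entirely. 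Neither point is a fatal gap.
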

Let us mention that  {\emph{conditions \eqref{Intro2.1} and \eqref{Intro2.2B} provide a discreteness criterion for
the Neumann realization $\rH_{X,q}^N$ given by \eqref{eq:h_N}}} (see Theorem \ref{th:discr_N}).

Note that conditions \eqref{Intro2.1} and \eqref{Intro2.2} remain necessary under weaker assumptions on $q$ and $\gB$ (see Propositions \ref{prop4.5} and \ref{prop:4.3}). In particular, by Proposition \ref{prop:4.3}, if the Hamiltonian  $\rH_{X,\beta,q}$ is  semibounded from below and condition \eqref{Intro2.2} is violated, then the spectrum of $\rH_{X,\gB,q}$ is not discrete.
Let us also mention that {\em Molchanov's condition \eqref{Intro2.1} remains necessary for the discreteness
of the Hamiltonian  $\rH_{X,\beta,q}$, however, it is no longer sufficient without additional assumptions on $q$ and $\gB$}.
In particular, if
$0<\gd_*\le \gd^*<+\infty$ and $\inf_k(\gB_k^{-1})^-<\infty$ and $q$ satisfies \eqref{I_brinck}, then Molchanov's condition \eqref{Intro2.1} provides a criterion for
the operator $\rH_{X,\gB,q}$ to have discrete spectrum
(see Corollary \ref{cor3.9}). Moreover, it implies that  {\em  the spectrum of $\rH_{X,\gB}$ is not discrete if the Hamiltonian $\rH_{X,\gB}$ is lower semibounded} (Corollary \ref{cor3.8}).

Note that there is a gap  between necessary and sufficient conditions for the (semibounded)
Hamiltonians $\rH_{X,\gB,q}$ to be discrete.
In fact, there are counterexamples showing that the Hamiltonian $\rH_{X,\gB,q}$ has discrete spectrum 
although condition  \eqref{Intro2.2B} does not hold.
Let us also mention that in the case $q\in L^\infty$ the discreteness conditions for non-semibounded Hamiltonians $\rH_{X,\beta,q}$ have been obtained in \cite{KosMal09}, \cite{KosMal10}  using the above mentioned connection with Krein--Stielties strings \cite{KK71}.


In Section \ref{sec:cs}, we show that essential spectra of operators $\rH_{X,\gB,q}$ and $\rH_{X,q}^N$ are closely connected.  Recall that  for a self-adjoint operator $T$ in a Hilbert space $\cH$ {\em the essential spectrum} $\sigma_{\ess}(T)$ of $T$ is the set
\be\label{eq:ess}
\sigma_{\ess}(T):=\{\lambda\in\R: \ {\rm rank}(E_T(\lambda-\varepsilon,\lambda+\varepsilon))=\infty\ \text{for all}\ \varepsilon>0\},
\ee
where $E_T(\cdot)$ is the spectral measure of $T$.

The central result of Section \ref{sec:cs} is the following theorem.

     \begin{theorem}\label{thContSpec}
Assume that $\gd^*<\infty$ and $q\in L^1_{\loc}(\R_+)$ satisfies the first
condition in \eqref{I_brinck}. If
     \begin{equation}\label{eq:0.15}
 \frac{|\beta_k|^{-1}}{\min\{d_k, d_{k+1}\}} \to 0 \qquad
\text{as}\qquad k\to \infty.
       \end{equation}
then
\be
\sigma_{\ess}(\rH_{X,\beta,q})=\sigma_{\ess}(\rH_{X,q}^N).
\ee

If, in addition,
     \begin{equation}
\lim_{k\to \infty}\frac{1}{d_k}\int_{\Delta_k}|q(x)|dx = 0,
     \end{equation}
then
      \begin{equation}\label{1.12}
\sigma_{\ess}(\rH_{X,\beta,q})=
\sigma_{\ess}(\rH_{X}^N).
     \end{equation}
     \end{theorem}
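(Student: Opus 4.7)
The plan is to identify both operators via their quadratic forms, as set up in Section~\ref{ss:2.2}, and then show that the form generated by the $\delta'$-interactions is a relatively form-compact perturbation of the form associated with $\rH_{X,q}^N$. The form version of Birman's theorem on stability of the essential spectrum under relatively form-compact perturbations then yields $\sigma_{\ess}(\rH_{X,\beta,q})=\sigma_{\ess}(\rH_{X,q}^N)$.

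First, I would note that both $\rH_{X,\beta,q}$ and $\rH_{X,q}^N$ are lower semibounded and self-adjoint under the present hypotheses (condition \eqref{eq:0.15} together with $d^*<\infty$ implies the second inequality in \eqref{I_brinck}, and Proposition~\ref{cor:3.3}, Theorem~\ref{th_III.1}, and Corollary~\ref{cor:2.4} yield the claim). On the common form domain $W^{1,2}(\R_+\setminus X)$ they admit the representation
\[
\gt_{X,\beta,q}[f]=\gt_{X,q}^N[f]+\gb[f],\qquad \gb[f]:=\sum_{k=1}^\infty \frac{|f(x_k+)-f(x_k-)|^2}{\beta_k},
\]
where $\gt_{X,q}^N[f]=\int_{\R_+}(|f'|^2+q|f|^2)\,dx$ is the form associated with $\rH_{X,q}^N$.

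The central technical ingredient is the standard trace estimate on each interval $\Delta_k=(x_{k-1},x_k)$: for $f\in W^{1,2}(\Delta_k)$,
\[
|f(x_k-)|^2\leq \frac{2}{d_k}\|f\|^2_{L^2(\Delta_k)}+2d_k\|f'\|^2_{L^2(\Delta_k)},
\]
and analogously for $|f(x_k+)|^2$ in terms of $\Delta_{k+1}$. Combining these with $d_k\leq d^*<\infty$ and \eqref{eq:0.15}, one obtains
\[
\frac{|f(x_k+)-f(x_k-)|^2}{|\beta_k|}\leq \varepsilon_k\bigl(\|f\|^2_{L^2(\Delta_k\cup\Delta_{k+1})}+\|f'\|^2_{L^2(\Delta_k\cup\Delta_{k+1})}\bigr),
\]
with $\varepsilon_k\to 0$. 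Splitting $\gb=\gb^{\leq K}+\gb^{>K}$, the tail $\gb^{>K}$ is bounded by $(\sup_{k>K}\varepsilon_k)$ times the form norm of $\gt_{X,q}^N$ and becomes arbitrarily small, while the finite head $\gb^{\leq K}$ is a sum of boundedly many point-trace terms, each continuous on $W^{1,2}$ of a bounded interval and hence compact with respect to the $L^2$-norm. A standard diagonal argument then gives $\gb[f_n]\to 0$ whenever $f_n\rightharpoonup 0$ is bounded in the form norm of $\gt_{X,q}^N$; this is precisely the form-compactness of $\gb$ relative to $\gt_{X,q}^N$, and Birman's theorem delivers the first identity.

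For the second equality I would apply the same scheme to $\rH_{X,q}^N$ and $\rH_X^N$, the perturbing form now being multiplication by $q$. The additional assumption $d_k^{-1}\int_{\Delta_k}|q|\,dx\to 0$, combined with $d_k\leq d^*$ and the uniform embedding
\[
\|f\|^2_{L^\infty(\Delta_k)}\leq C\Bigl(\tfrac{1}{d_k}\|f\|^2_{L^2(\Delta_k)}+d_k\|f'\|^2_{L^2(\Delta_k)}\Bigr),
\]
produces the same smallness-at-infinity plus finite-rank-on-the-head decomposition and hence the form-compactness of $q$ relative to $\rH_X^N$. The main obstacle is making the splitting argument uniform: one must simultaneously control the local Sobolev constants, for which $d^*<\infty$ is essential since otherwise long intervals destroy the $W^{1,2}\hookrightarrow C$ embedding, and the tails, for which \eqref{eq:0.15} and the decay of $d_k^{-1}\int_{\Delta_k}|q|\,dx$ provide exactly the required uniform smallness.
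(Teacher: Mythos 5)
Your proposal is correct and follows essentially the same route as the paper: the paper also establishes that the jump form $\gb_X$ (and, for the second equality, the potential form $\gq$) is relatively form-compact with respect to the Neumann form, using exactly the trace estimate \eqref{3.3}, a finite-rank head plus uniformly small tails controlled by \eqref{eq:0.15} and $\gd^*<\infty$ (Lemmas \ref{lem5.1} and \ref{lem5.2}), and then invokes Birman's theorem (Theorem \ref{th2.2}). The only cosmetic difference is that the paper first proves compactness of the unsigned form $|\gb_X|$ and passes to $\gb_X$ via the domination Lemma \ref{lem3.2}, whereas you estimate the signed form directly.
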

%
%
First of all, let us mention that in contrast to the case of Hamiltonians $\rH_{X,\gA,q}$ with $\delta$-interactions (cf. \cite[Theorem 1.3]{AKM_10}), the spectrum of Hamiltonians $\rH_{X,\gB,q}$ with $\delta'$-interactions remains purely singular under "small" perturbations.

Further, the spectrum of $\rH_{X}^N$ can be explicitly computed in terms of distances $\{\gd_k\}$ and hence we immediately obtain various examples of operators with exotic essential spectra (see Corollaries \ref{cor:5.8}, \ref{cor5.1} and \ref{cor:5.10}). For instance (Corollary \ref{cor:5.10}),
{\em if conditions of Theorem \ref{thContSpec} are satisfied and, in addition, $\lim_{k\to\infty}d_k = 0$,  then
     \begin{equation}\label{1.13}
\sigma_{\ess}(\rH_{X,\beta,q})=\sigma_{\ess}(\rH_{X,q}^N) =
\sigma_{\ess}(\rH_{X}^N)=\{0\},
   \end{equation}
i.e., the spectrum of   $\rH_{X,\beta,q}$  is purely  point and accumulates only at $0$ and $\infty$}. Examples of Schr\"odinger operators with locally integrable potentials and with $\delta$-interactions having exotic essential spectra can be found in \cite{GorHolMol07}, \cite{Ism85}, and \cite{IsmKos10}.

Note that, by Theorem \ref{thContSpec}, in the case $q\ge 0$, the
negative spectrum of the Hamiltonian $\rH_{X,\gB,q}$ is bounded
from below and discrete provided that \eqref{eq:0.15} holds.

Moreover, in Proposition \ref{prop6.3} we generalize one result of Birman on
the $h$-stability (\cite{Bir61}, and \cite[Theorem 2.4]{Gla65}). It
complements Theorem \ref{thContSpec} and shows that, under additional mild
assumptions on $\gB$ and $q$, condition
\eqref{eq:0.15} becomes in a sense  necessary if we replace the
equality $\sigma_{\ess}(\rH_{X, \gB,q})=\sigma_{\ess}(\rH_{X,q}^N)$ by the
family of equalities $\sigma_{\ess}(\rH_{X, h\gB,
q})=\sigma_{\ess}(\rH_{X,q}^N)$, $h\in(0,+\infty)$.

In the final Appendix we collect necessary notions and facts on quadratic forms.

\textbf{Notation:} 
%
%
%
%
Let $ \gd=\{\gd_n\}_{n=1}^\infty\subset\R_+$ be a positive sequence and $p\in[1,\infty]$. By $\ell^p(d)=\ell^p(\N; d)$
we denote the weighted Banach space of sequences $f=\{f_n\}_{n=1}^\infty$
equipped with the norm $\|f\|_{\ell^p(d)}= (\sum_{n=1}^\infty \gd_n|f_n|^p)^{1/p}$.
If $\gd_k=1$ for all $k$, then $\ell^p=\ell^p(\N): = \ell^p(\N; {\bf 1})$.
Let $X$ be a discrete subset of $\R_+$. We shall use the following Sobolev spaces
($n\in\N$):
%
%
\begin{eqnarray*}
&W^{n,p}(\R_+\setminus X):=\{f\in L^p(\R_+): f\in W^{n,p}[x_{k-1},x_k],\  k\in \N,\,  f^{(n)}\in
  L^p(\R_+)\},\\
&W^{n,p}_0(\R_+\setminus X):=\{f\in
W^{n,p}(\R_+): f(x_k)=...=f^{(n-1)}(x_k)=0,\, k\in \N\},\\
&W^{n,p}_{\comp}(\R_+\setminus X):=W^{n,p}(\R_+\setminus X)\cap L^p_{\comp}(\R_+).
\end{eqnarray*}


\section{Semiboundedness and self-adjointness.}\label{sec:sb}

\subsection{The Hamiltonian $\rH_{X,\gB,q}$}\label{ss:2.1}
Let $q=\overline{q}\in L^1_{\loc}(\R_+)$. Assume also that $\gB=\{\gB_k\}_1^\infty\subset \R$ and $X=\{x_k\}_1^\infty\subset \R_+$ is such that $x_k\uparrow +\infty$.  We
set $x_0=0$ and
\begin{align}\label{eq:2.1}
\Delta_k := [x_{k-1}, x_{k}],\quad &d_k := |\Delta_k|=x_k - x_{k-1},\quad  k\in \N,\\
\gd_*:=&\inf_{k\in\N}\gd_k,\qquad \gd^*:=\sup_{k\in\N}\gd_k. \label{eq:2.1B}
\end{align}

 The main object of the paper is the operator $\rH_{X,\gB,q}$ defined in $L^2(\R_+)$ as the closure of the following symmetric operator
\begin{align}\label{eq:h_0}
&\rH^0_{X,\gB,q}f:=\tau_q f=-f''+q(x)f,\quad f\in \dom(\rH_{X,\gB,q}),\\
\dom(\rH_{X,\gB,q}^0)=&\Big\{f\in W^{2,1}_{\comp}(\R_+\setminus X): f'(0)=0, \begin{array}{c}  f'(x_n+)=f'(x_n-)\\
 f(x_n+)-f(x_n-)=\gB_n f'(x_n)\end{array}\Big\}\label{eq:dom_0}.
\end{align}

If $\gB_k=\infty$ for all $k\in \N$, then the conditions at $x_k$ read as
$f'(x_k-) = f'(x_k+)=0$, $k\in \N$, and the operator $\rH_{X,\infty,q}$ takes the form
\be\label{eq:H^N}
\rH_{X,\infty,q}= \rH_{X,q}^N:=\bigoplus_{k\in \N} \rH_{q,k}^N, \quad  \dom(\rH_{X,q}^N) = \bigoplus_{k\in \N}\dom(\rH_{q,k}^N),
\ee
where
$\rH_{q,k}^N$ is the Neumann realization of $\tau_q$ in $L^2[x_{k-1},x_k]$,
\begin{align}\label{eq:HkN1}
&\rH_{q,k}^Nf:=\tau_qf=-f''+q(x)f,\quad f\in\dom(\rH_{q,k}^N),\\
\dom(\rH_{q,k}^N)&=\{f\in W^{2,1}(\Delta_k):\, f'(x_{k-1})=f'(x_k)=0,\ \tau_qf\in L^2(\Delta_k) \}.\label{eq:HkN2}
\end{align}


Next assuming that $q\in L^\infty(\R_+)$, we can specify the description of the domain $\dom(\rH_{X,\gB,q})$
equipped with the graph norm of the operator $\rH_{X,\gB,q}$.  
   \begin{proposition}\label{prop:2.1}
Let $q\in L^\infty(\R_+)$. Then:
\begin{itemize}
\item[(i)]  The operator $\rH_{X,\gB,q}$ is self-adjoint and its domain is given by
 \be
 \dom\big(\rH_{X,\gB,q}\big):=\Big\{f\in  W^{2,2}(\R_+\setminus X):\,  f'(0)=0, 
\begin{array}{c}
f'(x_k+) = f'(x_k-)\\ f(x_k+)-f(x_k-) = \gB_k f(x_k) \end{array}\Big\}.\label{eq:dom2}
 \ee
\item[(ii)] The embedding
\be\label{eq:2.5A}
 W^{2,2}(\R_+\setminus X) \hookrightarrow  W^{1,2}(\R_+\setminus X), 
\ee
holds and is continuous  if and only if $\gd_* >0$.
\item[(iii)] If $ \gd_* >0$, then
the embedding %
  \begin{align}\label{eq:2.5B}
\dom(\rH_{X,\gB,q}) 
\hookrightarrow  W^{1,2}(\R_+\setminus X) 
    \end{align}
holds  and  is continuous.
\end{itemize}
\end{proposition}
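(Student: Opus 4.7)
The plan is to prove the three parts in sequence, with the substance concentrated in part (i). Denote by $T$ the operator $\tau_q$ defined on the right-hand side of \eqref{eq:dom2}. First I would verify $T$ is symmetric: integrating by parts on each $\Delta_k$ and using the interior conditions $f'(x_k+)=f'(x_k-)=:f'(x_k)$ and $f(x_k+)-f(x_k-)=\gB_kf'(x_k)$ (and the analogues for $g$), the boundary contributions at every $x_k$ collapse to $(\overline{\gB_k}-\gB_k)f'(x_k)\overline{g'(x_k)}=0$ because $\gB_k\in\R$; the term at $0$ vanishes by $f'(0)=g'(0)=0$. For the reverse inclusion $T^*\subset T$, take $f\in\dom T^*$ with $T^*f=h$; testing against $g\in C_c^\infty(\Delta_k)$ gives $\tau_qf=h$ distributionally in each interior, which with $q\in L^\infty$ forces $f|_{\Delta_k}\in W^{2,2}(\Delta_k)$ for every $k$ and the global bound $\|f''\|_{L^2(\R_+)}\le\|h\|_{L^2}+\|q\|_\infty\|f\|_{L^2}$. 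Testing with $g$ supported in a small neighbourhood of a single $x_k$ (or of $0$) and integrating by parts once more recovers the jump and continuity conditions of \eqref{eq:dom2} (respectively $f'(0)=0$), so $f\in\dom T$ and $T=T^*$.

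To finish (i) I would show $\overline{\rH_{X,\gB,q}^0}=T$ by a cutoff approximation: choose $\chi_n\in C^\infty([0,\infty))$ with $\chi_n\equiv1$ on $[0,n]$, $\chi_n\equiv0$ on $[n+1,\infty)$, and $\chi_n'(x_k)=0$ at each grid point $x_k\in(n,n+1)$ (only finitely many constraints). The product rule then yields $(\chi_nf)'(x_k+)-(\chi_nf)'(x_k-)=\chi_n'(x_k)[f(x_k+)-f(x_k-)]=0$ and $(\chi_nf)(x_k+)-(\chi_nf)(x_k-)=\chi_n(x_k)\gB_kf'(x_k)=\gB_k(\chi_nf)'(x_k)$, so $\chi_nf\in\dom(\rH_{X,\gB,q}^0)$; dominated convergence together with $q\in L^\infty$ gives $\chi_nf\to f$ and $\tau_q(\chi_nf)\to\tau_qf$ in $L^2(\R_+)$, completing the proof.

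For part (ii), the implication $\gd_*>0\Rightarrow$ continuity follows by rescaling the unit-interval estimate $\|g'\|_{L^2(0,1)}^2\le C(\|g\|_{L^2}^2+\|g''\|_{L^2}^2)$ via $g(y)=f(dy)$ to obtain $\|f'\|_{L^2(0,d)}^2\le C(d^{-2}\|f\|^2+d^2\|f''\|^2)$; on $\Delta_k$ with $d_k\le1$ the weight $d_k^{-2}$ is bounded by $\gd_*^{-2}$, while for $d_k>1$ one partitions $\Delta_k$ into subintervals of length in $[1,2]$ and sums the unit-length inequality, giving $\|f'\|_{L^2(\Delta_k)}^2\le C(\gd_*)(\|f\|^2_{\Delta_k}+\|f''\|^2_{\Delta_k})$, which sums over $k$ to the desired continuous embedding. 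Conversely, if $\gd_*=0$, select $k_j$ with $d_{k_j}=1/j^2$ and let $f$ be piecewise linear from $0$ to $1$ on each $\Delta_{k_j}$ and zero elsewhere; then $f''=0$ and $\|f\|_{L^2}^2\le\sum d_{k_j}<\infty$, so $f\in W^{2,2}(\R_+\setminus X)$, whereas $\|f'\|_{L^2}^2=\sum 1/d_{k_j}=\infty$, so $f\notin W^{1,2}(\R_+\setminus X)$ and the set-theoretic inclusion already breaks. Part (iii) is then immediate: for $f\in\dom(\rH_{X,\gB,q})$ the identity $f''=qf-\rH_{X,\gB,q}f$ together with $q\in L^\infty$ yields $\|f''\|\le\|q\|_\infty\|f\|+\|\rH_{X,\gB,q}f\|$, so the graph norm dominates the $W^{2,2}(\R_+\setminus X)$-norm, and composition with (ii) produces the continuous embedding into $W^{1,2}(\R_+\setminus X)$.

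The main obstacle lies in the cutoff step of (i): multiplying $f\in\dom T$ by $\chi_n$ a priori produces the parasitic product-rule term $\chi_n'(x_k)[f(x_k+)-f(x_k-)]$, which (because $f$ is genuinely discontinuous at $x_k$) spoils the continuity of the derivative and hence the Gesztesy--Holden jump relation. Imposing $\chi_n'(x_k)=0$ at the finitely many grid points in the transition zone $(n,n+1)$ eliminates this obstruction, and the remainder of the argument is routine.
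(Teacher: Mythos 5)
Your parts (ii) and (iii) are correct and essentially reproduce the paper's argument: the rescaled interpolation inequality for sufficiency (the paper invokes Kato's inequality with a tuned integer parameter, which amounts to the same estimate), an explicit piecewise-linear counterexample when $\gd_*=0$ for necessity, and the identity $f''=qf-\rH_{X,\gB,q}f$ combined with (ii) for part (iii). Part (i) is where you depart from the paper, which simply quotes the self-adjointness of $\rH_{X,\gB,q}$ from Buschmann--Stolz--Weidmann and then performs only the local integration by parts to identify the domain; your attempt to prove everything from scratch is more ambitious but contains a genuine gap in the closure step.

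The cutoff argument for $\dom(T)\subseteq\dom\big(\overline{\rH^0_{X,\gB,q}}\big)$ requires $\|2\chi_n'f'+\chi_n''f\|_{L^2}\to0$, and this fails when $\gd_*=0$. The space $W^{2,2}(\R_+\setminus X)$ only puts $f$ and $f''$ in $L^2(\R_+)$, not $f'$, and the Remark immediately following the proposition exhibits an $f$ belonging to the right-hand side of \eqref{eq:dom2} with $f'\equiv1$ on $(1,\infty)\setminus X$. For that $f$, integrating $\chi_n''f$ by parts (the jump terms vanish precisely because you imposed $\chi_n'(x_k)=0$) gives $\int_n^{n+1}\big(2\chi_n'f'+\chi_n''f\big)dx=\int_n^{n+1}\chi_n'f'\,dx=\int_n^{n+1}\chi_n'\,dx=-1$, so by Cauchy--Schwarz $\big\|\tau_q(\chi_nf)-\chi_n\tau_qf\big\|_{L^2}\ge1$ for every $n$, and $\chi_nf\not\to f$ in the graph norm. (A secondary issue: your symmetry computation for $T$ needs the boundary term $f'\bar g-f\bar g'$ to vanish at infinity, which is not automatic when $f'$ and $g'$ fail to lie in $L^2(\R_+)$.) Both problems disappear when one factor is compactly supported, so the local testing you already carry out does prove $\big(\rH^0_{X,\gB,q}\big)^*\subseteq T$ and $T\subseteq\big(\rH^0_{X,\gB,q}\big)^*$, hence $\big(\rH^0_{X,\gB,q}\big)^*=T$; but to pass from this to $\overline{\rH^0_{X,\gB,q}}=T$ one still needs the essential self-adjointness of $\rH^0_{X,\gB,q}$, which is exactly the input the paper takes from the literature. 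As written, your proof of (i) is complete only under the additional hypothesis $\gd_*>0$.
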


     \begin{proof}
    (i) Self-adjointness of $\rH_{X,\gB,q}$ was established in \cite{BusStoWei95} (see also \cite[\S 6]{KosMal09}). Therefore, $\overline{\rH^0_{X,\gB,q}}=(\rH_{X,\gB,q}^0)^*=\rH_{X,\gB,q}$. Noting that $\dom\big( (\rH_{X,\gB,q}^0)^*\big)\subseteq W^{2,2}(\R_+\setminus X)$ and integrating by parts the expression $(\rH_{X,\gB,q}^0f,g)_{L^2}$,  the second claim follows.

(ii)  Let us prove necessity. 
For any sequence of positive numbers $a=\{a_k\}_1^\infty$
set
\be\label{eq:fa}
f_a(x)=\sum_{k\in\N} \frac{a_k}{\gd_k}(x-x_{k-1})\chi_{[x_{k-1},x_k)}(x),\quad x\in\R_+.
\ee
Clearly,  $f''(x)=0$ for all $x\in \R_+\setminus X$ and
\be\label{eq:fa2}
\int_0^\infty|f(x)|^2dx=\frac{1}{2}\sum_{k}a_k^2\gd_k,\qquad \int_0^\infty|f'(x)|^2dx=\sum_{k}\frac{a_k^2}{d_k}.
\ee
Hence $f\in W^{2,2}(\R_+\setminus X)$ if the first series  converges, i.e., $a\in \ell^2(\N; d)$. The second series converges precisely if $a\in \ell^2(\N; d^{-1})$. Therefore, if the embedding \eqref{eq:2.5A} holds, then $\ell^2(\N; d) \hookrightarrow \ell^2(\N; d^{-1})$. Or equivalently, the multiplication operator $M: \ell^2(\N)\to \ell^2(\N)$ given by $(Ma)_k=a_k/d_k$, $k\in\N$, is bounded. The latter implies that $\sup_{k}\gd_k^{-1}=\inf_{k}\gd_k=\gd_*>0$.



Let us show that the condition $\gd_*>0$ is sufficient.
  Firstly, \cite[inequality (IV.1.12)]{Kato66}, implies that for any $n>1$ and $f\in W^{2,2}(\Delta_k)$
\be\label{eq:kato}
\|f'\|_{L^2(\Delta_k)}\le \frac{\gd_k}{n-1}\|f''\|_{L^2(\Delta_k)}+\frac{2n(n+1)}{n-1}\frac{1}{\gd_k}\|f\|_{L^2(\Delta_k)}.
\ee
Further, for all $k\in\N$ there is $n_k\in\N$ such that
\[
1<\gd_*+1\le \gd_k+1\le n_k\le \gd_k+2.
\]
 Therefore,
\[
\frac{\gd_k}{n_k-1}\le 1,\qquad \frac{2n_k(n_k+1)}{n_k-1}\frac{1}{\gd_k}\le 2\frac{n_k+1}{n_k-1}\frac{n_k}{\gd_k}\le 2\Big(1+\frac{2}{\gd_*}\Big)^2=:C_*.
\]
Hence \eqref{eq:kato} with $n=n_k$ implies that
\[
\|f'\|_{L^2(\Delta_k)}\le \|f''\|_{L^2(\Delta_k)}+C_*\|f\|_{L^2(\Delta_k)},\quad k\in \N,
\]
which proves the embedding \eqref{eq:2.5A} if $\gd_*>0$.

(iii) follows by combining \eqref{eq:dom2} with (ii).
   \end{proof}

\begin{remark}
In the case $\gd_*=0$ the embedding \eqref{eq:2.5B} depends on $\gB$ and it might be false. Namely, set $\gB_1=-1/2$, $\gd_k=1/\sqrt{k}$ and $\gB_{k+1}=-\gd_{k}=-1/\sqrt{k}$, $k\in\N$. Further, consider the function $f_a$ defined by \eqref{eq:fa} with $a_k:=\gd_k=1/\sqrt{k}$, $k\in\N$ and set
\[
f(x)=\frac{x^2}{2}\chi_{[0,1]}(x)+f_a(x)\chi_{(1,+\infty)}(x).
\]

It is straightforward to check that $f'(0)=0$, $f'(x_1-)=f'(x_1+)=1$, $f'(x_1+)-f'(x_1-)=-1/2=\gB_1f'(x_1)$, and 
\[
f'(x_k\pm)=1,\quad f(x_k+)-f(x_k-)=0-a_{k-1}=-1/\sqrt{k-1}=\gB_{k}f'(x_k),\ \ k
\ge 2.
\]
Moreover, the first integral in \eqref{eq:fa2} is convergent and hence $f\in L^2(\R_+)$.
However, the second integral in \eqref{eq:fa2} diverges and hence $f'\notin L^2(\R_+)$! Therefore, the embedding \eqref{eq:2.5B} does not hold. 
\end{remark}
 Next we present the proof of Theorem \ref{th_III.1}, which extends the classical Glazman--Povzner--Wienholtz theorem to the case of Hamiltonians with $\delta'$-interactions.
\begin{proof}[Proof of Theorem \ref{th_III.1}]
Without loss of generality we can assume that $\rH_{X,\gB,q}\ge I$.
It suffices to show that $\ker(\rH^*_{X,\gB,q})=\{0\}$, that is the
equation
    \begin{equation}\label{3.14}
\tau_q[u]= -u''(x) + q(x)u(x) = 0,\quad  x\in \R_+\setminus X, \qquad u\in\dom(\rH_{X,\gB,q}^*),
   \end{equation}
has only a trivial solution (derivative is understood in a distributional sense).

Assume the converse, i.e., $u(\cdot)\not\equiv 0$ is such a solution.
Note that we can always construct the function $\varphi_k\in C^\infty_{\comp}(\R_+)$ such that
\begin{equation}\label{eq:phi1}
\supp \varphi_k=[0,x_k+1],\quad 0\le \varphi_k\le 1,\quad \varphi_k(x)=
\begin{cases}
1,& x\in[0,x_k]\\
0,& x\ge x_k+1
\end{cases}
\end{equation}
and
\begin{equation}\label{eq:phi2}
-2\le \varphi_k'\le 0,\quad \varphi_k'(x_j)=0\quad \text{for all}\quad x_j\in[x_k,x_k+1].
\end{equation}
Then we set
\begin{equation}\label{eq:u_k}
u_k(x)=u(x)\varphi_k(x),\quad k\in\N.
\end{equation}

Clearly, $\supp u_k\subset[0,x_k+1]$. It is straightforward to check that $u_k\in\dom(\rH^*_{X,\gB,q})$ and hence
$u_k\in\dom(\rH^0_{X,\gB,q})$.

Further,
    \begin{eqnarray}\label{3.16}
(\rH^0_{X,\gB,q}u_k,u_k)&=&\int^{\infty}_{0} [-u''_k(x) + q(x)u_k(x)]u_k(x)dx \nonumber \\
&=& - \int^{\infty}_{0}[2 u'(x)\varphi'_k(x) + u(x)
\varphi''_k(x)]u(x)\varphi_k(x) dx \nonumber\\
&=& - \int^{x_k+1}_{x_k}[2 u'(x)\varphi'_k(x) + u(x)
\varphi''_k(x)]u(x)\varphi_k(x) dx \\
&\ge& (u_k,u_k)\ge
\int^{x_k}_{0}u^2(x)dx. \nonumber
     \end{eqnarray}
On the other hand, integrating by parts and noting that $\varphi_k$ has a compact support and $\varphi_k'(x_i)=0$ if $x_i\in[x_k,x_k+1]$, we get
    \begin{align}
 &\int^{\infty}_{0} 2u(x)u'(x)\varphi_k(x)\varphi_k'(x) dx =  \int^{x_k+1}_{x_k} 2u(x)u'(x)\varphi_k(x)\varphi_k'(x) dx\nonumber \\
&= \frac{1}{2}\int^{x_k+1}_{x_k}\bigl(u^2(x)\bigr)'\bigl(\varphi^2_k(x)\bigr)'dx
=-\int^{x_k+1}_{x_k}u^2(x)[\varphi''_k(x)\varphi_k(x) + \bigl(\varphi'_k(x)\bigr)^2]dx. \label{3.17}
    \end{align}

Combining \eqref{3.16} with \eqref{3.17}, we obtain
\be\label{3.16A}
(\rH^0_{X,\gB,q}u_n,u_n)=\int^{x_k+1}_{x_k}u^2(x) \bigl(\varphi'_k(x)\bigr)^2dx\le 4\int^{x_k+1}_{x_k}u^2(x) dx.
     \ee
Therefore, we get
     \be\label{3.18}
\int^{x_k}_0 u^2(x)dx \le 4\int^{x_k+1}_{x_k}u^2(x) dx.
     \ee
Noting that $u\in L^2(\R_+)$ and $x_k\to +\infty$, inequality \eqref{3.18} yields $u\equiv 0$. This contradiction completes the proof.
\end{proof}
\begin{remark}
Theorem \ref{th_III.1} generalizes  the celebrated Glazman--Povzner--Wienholtz Theorem \cite{Ber68}, \cite{Gla65}, \cite{Wie58}. In \cite{ClaGes03}, Clark and Gesztesy extended the Glazman--Povzner--Wienholtz Theorem to the case of matrix-valued Schr\"odinger operators. The case of Hamiltonians \eqref{eq:hxa} with point interactions (and even with potentials that are $W^{-1,2}_{\loc}$-distributions) was treated in \cite{AKM_10}. Let us also mention the recent paper \cite{HryMyk12} by Hryniv and Mykytyuk, where an alternative proof of \cite[Theorem 1.1]{AKM_10} has  been proposed.
\end{remark}

\subsection{The form $\gt_{X,\gB,q}$} \label{ss:2.2}
We begin with some notation. Let  $q$ and $X$ be as in the previous subsection. Consider the following form in $L^2(\R_+)$
\begin{equation}\label{q_form}
\gq[f]:=\int_{\R_+}q(x)|f(x)|^2dx, \quad \dom (\gq) =\{f\in L^2(\R_+):\
|\gq[f]|<\infty\}.
\end{equation}
%
%
This form is  semibounded from below (and hence closed) if and only if so is $q$.

Next, define the  Hilbert space
    \begin{equation}\label{3.2pr}
 W^{1,2}(\R_+\setminus X) :=
\bigoplus^{\infty}_{k=1}W^{1,2}[x_{k-1},x_{k}],    \end{equation}
and introduce in $L^2(\R_+)$ the quadratic form
   \begin{align}\label{6.5B}
 &\gt_{X,q}[f] :=  \sum^{\infty}_{k=0}\int_{x_k}^{x_{k+1}}\big(|f'|^2 +q(x)|f|^2\big)dx  =  \int^{\infty}_0\big(|f'|^2 +q(x)|f|^2\big)dx,\\
 &\dom(\gt_{X,q}) = W^{1,2}(\R_+\setminus X;q):=\{f\in  W^{1,2}_{\loc}(\R_+\setminus X):\ |\gt_{X,q}[f] |<\infty\}.\label{6.5C}
    \end{align}
For $q=\bold{0}$, we set $\gt_{X}:=\gt_{X,\bold{0}}$ and $\dom(\gt_X)=W^{1,2}(\R_+\setminus X)$.

Clearly, with respect to the decomposition $L^2(\R_+)=\bigoplus_{k=1}^\infty L^2(x_{k-1},x_{k})$
the  form $\gt_{X,q}$ admits the representation
     \begin{align}\label{3.8A}
&\gt_{X,q} = \bigoplus^{\infty}_{k=1}\gt_{q,k}, \\
\gt_{q,k}[f] =
\int^{x_k}_{x_{k-1}}&\big(|f'|^2 +q(x)|f|^2\big)dt,\quad
\dom(\gt_{q,k})=W^{1,2}(x_{k-1},x_{k}).
     \end{align}
%
%
Since each form $\gt_{q,k}$ is closed and lower semibounded in $L^2(x_{k-1},x_k)$, the form $\gt_{X,q}$ is lower semibounded (and hence closed) if and only
if the forms $\gt_{q,k}$ have a finite uniform lower bound, i.e.,
there is $C>0$ such that
\[
\inf_{k\in\N}\inf_{f_k\in W^{1,2}(\Delta_k)}\gt_{q,k}[f_k]\ge -C\|f_k\|^2_{L^2[x_{k-1},x_{k}]}.
\]
In particular, the latter holds true if $q$ is lower semibounded on $\R_+$ (see also Corollary \ref{cor:3.5}).
Note also that
the operator associated with $\gt_{X,q}$ is $\rH_{X,q}^N$ given by \eqref{eq:H^N}.

The main object of this section is the following  form
    \begin{equation}\label{3.4pr}
\gt_{X,\beta,q}[f] := \int^{\infty}_0\bigl(|f'|^2 +
q(x)|f|^2\bigr)dx  + \sum^{\infty}_{k=1}\frac{|f
(x_k+)-f(x_k-)|^2}{\gB_k}.
    \end{equation}
 To define this form properly, firstly assume
\begin{equation}\label{hyp:b=0}
\beta_k\neq 0\quad \text{for all}\quad k\in\N.
\end{equation}
Now set
\[
\beta^\pm:=\{\beta^\pm_k\}_{k=1}^\infty,\quad \beta^\pm_k:=(|\beta_k|\pm \beta_k)/2.
\]
Also we split the sets $X$ and $\gB$ as follows
    \begin{equation}\label{eq:x_pm}
X_{\pm}:=\{x_k\in X:\   \pm\beta_k>0\}=\{x_k\in X: \gB_k^\pm\neq 0\},\quad \gB^\pm:=\{\gB_k^\pm:x_k\in X^\pm\}.
    \end{equation}
Moreover, we set
    \begin{equation}\label{eq:x_-}
X_-=\{x_j^-\}_{j=1}^\infty=\{x_{k_j}\}_{j=1}^\infty,\quad \gB_{k_j}=-\gB_{k_j}^-\, (<0),
    \end{equation}
where $
k_1<k_2<\dots<k_j<\dots$.

Next
define the (non-closed) perturbation forms 
${\mathfrak{b}}^{+}_{X}$ and $\gb_X^-$ by
    \begin{align}\label{6.5}
{\mathfrak{b}}^{\pm}_{X}[f]& :=
\sum_{x_k\in X_\pm}\frac{|f(x_k+)-f(x_k-)|^2}{\beta^{\pm}_k}=\sum_{x_k\in X}\Big(\frac{1}{\gB_k}\Big)^\pm|f(x_k+)-f(x_k-)|^2, \\
&\dom({\mathfrak{b}}^{\pm}_{X}) := \{ f\in
{W^{1,2}_{\loc}(\R_+\setminus X)}:\
|{\mathfrak{b}}^{\pm}_{X}[f]| <\infty\}.\nonumber
    \end{align}
Then we define ${\gt}_{X,\beta,q}$ as a sum of the forms defined
above,
    \begin{equation}\label{6.1B}
{\gt}_{X,\beta,q}:= {\gt}_{X,q} +
{\mathfrak{b}}^{+}_{X}  -  {\mathfrak{b}}^{-}_{X}, \quad
\dom(\mathfrak{t}_{X,\beta,q}) := {W}^{1,2}(\R_+\setminus X;q,
\beta),
       \end{equation}
where
     \begin{equation}\label{6.1BB}
{W}^{1,2}(\R_+\setminus X;q, \beta) :=  W^{1,2}(\R_+\setminus
X;q)\cap\dom({\mathfrak{b}}^{+}_{X})
\cap\dom({\mathfrak{b}}^{-}_{X}).
    \end{equation}
The form $\gt_{X,\beta,q}$ is not necessarily lower semi-bounded
and closed even if so is the form $\gt_{X,q}$.

The following simple result establishes a connection between the form $\gt_{X,\gB,q}$ and the operator $\rH_{X,\gB,q}$.
   \begin{proposition}\label{cor:2.4}
   Let $\rH^0_{X,\beta,q}$ be the minimal symmetric operator defined by
\eqref{eq:h_0}--\eqref{eq:dom_0}. Let also $\gt_{X,\beta,q}$ be the form defined by
\eqref{6.1B}--\eqref{6.1BB}. Then:
\begin{itemize}
\item[(i)] $\dom(\rH^0_{X,\gB,q})\subset \dom(\gt_{X,\gB,q})$ and
\begin{equation}\label{eq:3.18}
 (\rH^0_{X,\beta,q}f, f)_{L^2} = {\gt}_{X,\beta,q}[f],\quad f\in \dom(\rH^0_{X,\gB,q}).
\end{equation}
Moreover, $\dom(\rH^0_{X,\gB,q})$ is a core for  $\dom(\gt_{X,\gB,q})$.
\item[(ii)]
If the form $\gt_{X,\beta,q}$ is lower semibounded, then it is closable and the operator associated with its closure $\overline{\gt}_{X,\beta,q}$ coincides with the self-adjoint Hamiltonian
$\rH_{X,\beta,q}$.
\end{itemize}
\end{proposition}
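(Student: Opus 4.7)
For (i) the plan is a direct integration-by-parts calculation followed by a density argument. Fix $f\in\dom(\rH^0_{X,\gB,q})$: then $\supp f$ is compact, $f\in W^{2,1}(\Delta_k)$ on each piece with $f'(0)=0$, the one-sided derivatives $f'(x_k\pm)$ share a common value $f'(x_k)$, and $f(x_k+)-f(x_k-)=\gB_k f'(x_k)$. Splitting $(\rH^0_{X,\gB,q}f,f)_{L^2}$ into the finite sum over the $\Delta_k$ meeting $\supp f$ and integrating each $-\int_{x_{k-1}}^{x_k}f''\bar f\,dx$ by parts, the interior terms accumulate to $\int_0^\infty(|f'|^2+q|f|^2)\,dx$, and the boundary contributions at consecutive endpoints $x_k-,\,x_k+$ telescope (using $f'(0)=0$ and the continuity of $f'$ across $X$) into $\sum_k f'(x_k)\overline{(f(x_k+)-f(x_k-))}$. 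Substituting the jump condition converts this sum into $\gb_X^+[f]-\gb_X^-[f]$, which proves both $\dom(\rH^0_{X,\gB,q})\subset\dom(\gt_{X,\gB,q})$ and the identity \eqref{eq:3.18}.

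For the core claim I would equip $\dom(\gt_{X,\gB,q})$ with the natural form-graph norm
\[
\|f\|_*^2:=\|f\|_{L^2}^2+\int_0^\infty(|f'|^2+|q||f|^2)\,dx+\sum_{k\ge 1}\frac{|f(x_k+)-f(x_k-)|^2}{|\gB_k|},
\]
and approximate a given $f\in\dom(\gt_{X,\gB,q})$ in $\|\cdot\|_*$ by three successive modifications: truncate by the cutoff $\varphi_N$ constructed in the proof of Theorem \ref{th_III.1} (whose derivative vanishes on $X\cap[x_N,x_N+1]$, so the jump terms behind the cutoff are undisturbed); mollify on each $\Delta_k$ while keeping the trace values $f(x_k\pm)$ unchanged; and finally add smooth bumps on each $\Delta_k$ enforcing $f'(x_{k-1}+)=(f(x_{k-1}+)-f(x_{k-1}-))/\gB_{k-1}$, $f'(x_k-)=(f(x_k+)-f(x_k-))/\gB_k$ and $f'(0)=0$. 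For (ii), if $\gt_{X,\gB,q}[f]\ge -C\|f\|_{L^2}^2$, the identity from (i) gives $(\rH^0_{X,\gB,q}f,f)_{L^2}\ge -C\|f\|_{L^2}^2$, so $\rH^0_{X,\gB,q}$ and its closure $\rH_{X,\gB,q}$ are symmetric and semibounded, and Theorem \ref{th_III.1} makes $\rH_{X,\gB,q}$ self-adjoint. Let $\rH_F$ be the Friedrichs extension of $\rH^0_{X,\gB,q}$ and $\gt_F$ its associated closed form. Since $\rH_F$ is a self-adjoint extension of the already self-adjoint $\rH_{X,\gB,q}$, we have $\rH_F=\rH_{X,\gB,q}$, and $\gt_F$ is by construction the closure of $\gt_{X,\gB,q}$ restricted to $\dom(\rH^0_{X,\gB,q})$ in the form-graph norm. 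The core property now shows any $f\in\dom(\gt_{X,\gB,q})$ is a $\|\cdot\|_*$-limit (hence form-norm limit) of $f_n\in\dom(\rH^0_{X,\gB,q})$, so $f\in\dom(\gt_F)$ and $\gt_F[f]=\gt_{X,\gB,q}[f]$. Thus $\gt_{X,\gB,q}\subseteq\gt_F$, the form is closable with $\overline{\gt_{X,\gB,q}}=\gt_F$, and the associated operator is $\rH_{X,\gB,q}$.

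I expect the main obstacle to be the third step of the core approximation, namely the construction of local bumps at infinitely many interaction points that enforce the exact compatibility $f'(x_k\pm)=(f(x_k+)-f(x_k-))/\gB_k$ while contributing only a small amount to $\|\cdot\|_*$. The key observation is that $\sum_{k>N}|\gB_k|^{-1}|f(x_k+)-f(x_k-)|^2$ is already small because $f\in\dom(\gt_{X,\gB,q})$, so one can choose bumps whose $W^{1,2}$-contribution matches this tail, ensuring the overall perturbation vanishes as $N\to\infty$.
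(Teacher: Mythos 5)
Your argument follows the paper's route exactly: part (i) is the same telescoping integration by parts using $f'(x_k+)=f'(x_k-)$ and the jump condition, and part (ii) is the same chain (semiboundedness of $\rH^0_{X,\gB,q}$ from the form identity, self-adjointness of its closure via Theorem \ref{th_III.1}, identification of the closed form with the Friedrichs form of the unique self-adjoint extension, then the core property to conclude $\overline{\gt}_{X,\gB,q}=\gt_F$). Your reduction of form-norm convergence to convergence in $\|\cdot\|_*$ is also fine, since $|\gt_{X,\gB,q}[g]|\le\|g\|_*^2$. In fact you supply more detail than the paper, whose proof of (i) consists of the integration by parts alone and which asserts the core claim without argument.

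The one step that does not survive scrutiny as written is the mollification stage of your core argument. Membership in $\dom(\rH^0_{X,\gB,q})$ requires not only $f\in W^{2,1}_{\comp}(\R_+\setminus X)$ and the interface conditions but also $\tau_q f=-f''+qf\in L^2(\R_+)$. If $q\in L^1_{\loc}(\R_+)\setminus L^2_{\loc}(\R_+)$, a mollified (smooth, non-vanishing) approximant $g$ on $\Delta_k$ generally has $qg\notin L^2(\Delta_k)$, so the functions produced in steps 2 and 3 need not belong to $\dom(\rH^0_{X,\gB,q})$ at all; your scheme establishes the core property only for $q\in L^2_{\loc}(\R_+)$. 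The third step, which you flag as the main obstacle, is actually harmless after truncation (finitely many bumps, each with $W^{1,2}$-norm of order $a_k^2\varepsilon$); the real issue is that the approximants must be compatible with the condition $\tau_qf\in L^2$, which forces one either to approximate inside the operator domain (e.g.\ by solving $-u''+qu=h$ locally) or to first dispose of the local singularities of $q$ by the form-boundedness arguments of Section \ref{ss:2.3}. Everything else in the proposal stands and coincides with the paper's proof.
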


\begin{proof}
(i) It is easy to see that $\dom(\rH^0_{X,\gB,q})\subset \dom(\gt_{X,\gB,q})$. Moreover, integrating by parts and using the fact that the function $f\in \dom(\rH^0_{X,\gB,q})$ satisfies
   \[
   f'(x_k^\pm)=\frac{f(x_k+)-f(x_k-)}{\gB_k},
   \]
   we obtain \eqref{eq:3.18}.

(ii) Firstly, (i) implies that the operator $\rH^0_{X,\gB,q}$ is lower semibounded. Therefore, by Theorem \ref{th_III.1}, it is essentially self-adjoint  and hence the form $\gt_{X,\gB,q}$ is closable. Moreover, by (i), $\dom(\rH^0_{X,\gB,q})$ is a core for $\gt_{X,\gB,q}$. This proves the second claim.
\end{proof}
   \begin{remark}
   Let us emphasize that Proposition \ref{cor:2.4} explains the difference between
   the Hamiltonians with $\delta$ and $\delta'$ interactions on $X$. Namely,
   the Hamiltonian with $\delta$-interactions is considered as a form perturbation of the
   free Hamiltonian in $L^2(\R_+)$ (cf. e.g., \cite{AKM_10}). However, the Hamiltonian $\rH_{X,\gB,q}$
   is a form perturbation of $\rH_{X,q}^N$ (see \eqref{eq:H^N}), which is the direct sum of
   Neumann realizations $\rH_{q,k}^N$. This fact explains a substantial difference between  the spectral properties
   of Hamiltonians $\rH_{X,\gA,q}$ and $\rH_{X,\gB,q}$ with $\delta$ and $\delta'$
 interactions, respectively.
   \end{remark}
%
%


   \subsection{Necessary and sufficient conditions for $\gt_{X,\gB,q}$ to be closed}\label{ss:2.3}
  Let the form ${\gt}_{X,\beta,q}$ be lower semibounded, ${\gt}_{X,\beta,q}\ge -c$. Hence it is closable. Denote its closure by ${\overline{\gt}}_{X,\beta,q}$. Let also $\gH_{X,\gB,q}$ be the Hilbert space naturally associated  with $\overline{\gt}_{X,\beta,q}$, i.e.,  $\gH_{X,\gB,q}=\dom({\overline{\gt}}_{X,\beta,q})$  equipped with the energy norm
   \begin{equation}\label{3.17A}
\|f\|^2_{{\gH}_{X,\beta,q}} = {\overline{\gt}}_{X,\beta,q}[f] +
(1+c)\|f\|^2_{L^2}, \qquad f\in \dom({\overline{\gt}}_{X,\beta,q}).
   \end{equation}

In this section, we are going to indicate some cases when the semibounded form $\gt_{X,\gB,q}$ is closed
and then to describe the domain of  $\gt_{X,\gB,q}$ in two important cases.
%
%
%
%
\begin{lemma}\label{lem_III.3}
Assume that $q$ is lower semibounded, $q\geq -c$ a.e. on $\R_+$. Let the forms $\gt_{X,q}$
and ${\gb}^+_{X}$ be defined by \eqref{6.5B}--\eqref{6.5C} and \eqref{6.5},
respectively. Then the form $\gt_{X, {\gB}^+,q}={\gt}_{X,q} +
{\gb}^+_{X}$ defined by \eqref{6.1B} is semibounded and closed.  Moreover,
${\gH}^+_{X,\beta,q} = W^{1,2}(\R_+\setminus X; \beta^+,q)$
algebraically and topologically and the operator associated with
$\gt_{X, {\gB}^+,q}$ is $\rH_{X, {\gB}^+,q} =
\rH_{X,{\gB}^+,q}^*.$
    \end{lemma}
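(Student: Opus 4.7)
The plan is to verify, in order: (a) lower semiboundedness, (b) closedness, (c) the algebraic and topological identification of $\gH^+_{X,\beta,q}$ with $W^{1,2}(\R_+\setminus X;\beta^+,q)$, and (d) the identification of the associated operator with $\rH_{X,\beta^+,q}$.

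For (a), note that $\gb^+_X[f]\ge 0$ on its domain and, since $q\ge -c$, one has $\gt_{X,q}[f]\ge -c\|f\|^2_{L^2}$; summing gives $\gt_{X,\beta^+,q}[f]\ge -c\|f\|^2_{L^2}$. For (b), I first record that $\gt_{X,q}$ is closed as the orthogonal direct sum \eqref{3.8A} of the closed semibounded Neumann forms $\gt_{q,k}$ on the bounded intervals $\Delta_k$ (closedness of each $\gt_{q,k}$ is standard for $q\in L^1(\Delta_k)$ bounded from below). Then, given a sequence $\{f_n\}\subset \dom(\gt_{X,\beta^+,q})$ which is Cauchy in the energy norm
\[
\|f\|^2_{\gH^+_{X,\beta,q}}=\gt_{X,q}[f]+\gb^+_X[f]+(c+1)\|f\|^2_{L^2},
\]
nonnegativity of $\gb^+_X$ implies $\{f_n\}$ is also Cauchy in the graph norm of $\gt_{X,q}$, so it converges to some $f\in \dom(\gt_{X,q})$ in that norm. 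In particular $f_n\to f$ in $W^{1,2}(\Delta_k)$ for each $k$, and the Sobolev embedding $W^{1,2}(\Delta_k)\hookrightarrow C(\overline{\Delta_k})$ on each bounded interval yields pointwise convergence of the boundary traces $f_n(x_k\pm)\to f(x_k\pm)$. Fatou's lemma applied to the series defining $\gb^+_X[f]$ gives $\gb^+_X[f]\le\liminf_n \gb^+_X[f_n]<\infty$, so $f\in \dom(\gb^+_X)$, while a second application of Fatou to $\gb^+_X[f_n-f_m]$ as $m\to\infty$ yields
\[
\gb^+_X[f_n-f]\le \liminf_{m\to\infty}\gb^+_X[f_n-f_m],
\]
which tends to $0$ as $n\to\infty$ by the Cauchy property. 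Hence $\gt_{X,\beta^+,q}[f_n-f]\to 0$, proving closedness.

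For (c), the set equality $\gH^+_{X,\beta,q}=W^{1,2}(\R_+\setminus X;\beta^+,q)$ is immediate from \eqref{6.5C} and \eqref{6.1BB} restricted to $\beta^+$, and the two natural Hilbert norms on this space coincide up to the additive term $(c+1)\|f\|^2_{L^2}$, giving topological equivalence. For (d), Proposition \ref{cor:2.4}(i) shows that $\dom(\rH^0_{X,\beta^+,q})\subset\dom(\gt_{X,\beta^+,q})$ is a core and that the form agrees with the quadratic form of $\rH^0_{X,\beta^+,q}$. Since we have just established that $\gt_{X,\beta^+,q}$ is lower semibounded and closed, Theorem \ref{th_III.1} combined with Proposition \ref{cor:2.4}(ii) identifies the self-adjoint operator associated with it as $\rH_{X,\beta^+,q}=\rH_{X,\beta^+,q}^*$.

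The main obstacle is step (b): what makes it work is that although $\gb^+_X$ is only a sum of squared boundary jumps with no intrinsic $L^2$-closedness, it becomes automatically lower semicontinuous with respect to $\gt_{X,q}$-convergence via the Sobolev trace estimate on each fixed bounded interval $\Delta_k$. This trace control holds for every individual $k$ without any hypothesis on $d_*$, so the Fatou argument goes through cleanly; the only real input beyond standard form theory is nonnegativity of $\gb^+_X$, which is precisely where the restriction to $\beta^+$ (as opposed to the full $\beta$) is essential.
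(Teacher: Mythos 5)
Your proof is correct and follows essentially the same route as the paper: reduce to showing completeness of the energy space, use that Cauchy in the energy norm implies Cauchy in $W^{1,2}(\R_+\setminus X;q)$ (closed since $q\ge -c$), and then use the Sobolev embedding $W^{1,2}(\Delta_k)\hookrightarrow C(\overline{\Delta_k})$ on each fixed bounded interval to identify the boundary traces of the limit. The only cosmetic difference is that you recover convergence of the jump part via Fatou's lemma where the paper invokes completeness of $\ell^2(\N;(\beta^+)^{-1})$ and matches the two limits; your additional steps (c)--(d) correctly supply the operator identification via Proposition~\ref{cor:2.4} and Theorem~\ref{th_III.1}, which the paper leaves implicit.
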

     \begin{proof}
Without loss of generality we can assume that $q$ is nonnegative, $c=0$. Consider the Hilbert space ${\gH}^+_{X,\beta,q}$ equipped  with the norm \eqref{3.17A}.
Let $\{f_n\}_{n=1}^\infty$ be a Cauchy sequence in $\gH^+_{X,\gB,q}$.
Since ${ W}^{1,2}(\R_+\setminus X;q)$  and
$\ell^2(\N;\{(\beta^+)^{-1}\})$ are Hilbert spaces, there exist $f\in
{W}^{1,2}(\R_+\setminus X;q)$ and
 $y=\{y_k\}^{\infty}_1\in  \ell^2(\N;\{(\beta^+)^{-1}\})$ such that
$\lim_{n\to\infty}\|f_n-f\|_{{W}^{1,2}(\R_+\setminus X;q)}=0$  and
$\lim_{n\to\infty}\sum_k(\beta_k^+)^{-1}|f_n(x_k + )- f_n(x_k - )
-y_k|^2 = 0$. However, the Sobolev space $W^{1,2}[x_k,x_{k+1}]$ is
continuously embedded into $C[x_k, x_{k+1}]$. Therefore, for all
$x\in X$ we get
 $\lim_{n\to\infty}f_n(x_k\pm)=f(x_k\pm)$ and hence $y_k = f(x_k+) - f(x_k-)$. The latter yields
 $f\in {\gH}^+_{X,\gB,q}$ and $\lim_{n\to\infty}\|f_n - f\|_{{\gH}^+_{X,\gB,q}} = 0$.
Thus, ${\gH}^+_{X,\gB,q}$ is a Hilbert space and hence the form  $\gt_{X,\gB^+,q}$ is closed.
\end{proof}


Before proceeding  further, we  need the following simple but
useful fact. 
   \begin{lemma}\label{lem_III.1}
Assume that $\gd^*=\sup_k\gd_k<\infty$. Assume also that 
\begin{equation}\label{III.6}
C_0:=\sup_{k\in\N} \frac{1}{\gd_k}\int_{\Delta_k}|q(x)|dx<\infty,\qquad
C_1:=\sup_{k\in\N} \frac{|\gB_k|^{-1}}{\min\{\gd_k,\gd_{k+1}\}}<\infty.
  \end{equation}
%
Then:
\begin{itemize}
\item[(i)] the forms $\gq$, $\gb_X^+$, $\gb_X^-$ and $\gb_{X}:=\gb^+_{X} + \gb^-_{X}$ are
infinitesimally $\gt_X$-bounded,
\item[(ii)]  the form
$\gt_{X,\beta,q}$ is lower semibounded and closed and
$\gH_{X,\gB,q} = W^{1,2}(\R_+\setminus X)$ algebraically and
topologically.
\end{itemize}
  \end{lemma}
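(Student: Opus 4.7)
The plan is to reduce both parts to the KLMN theorem applied to $\gt_{X,\gB,q} = \gt_X + \gq + \gb_X^+ - \gb_X^-$, with the closed nonnegative form $\gt_X$ on $W^{1,2}(\R_+\setminus X)$ as the unperturbed piece. It thus suffices to prove (i); (ii) then follows at once from KLMN, and the infinitesimal bounds supply the topological identification $\gH_{X,\gB,q} = W^{1,2}(\R_+\setminus X)$.

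The workhorse is the uniform pointwise estimate
\[
|f(x)|^2 \le \eta \int_{\Delta_k}|f'|^2\,dt + \Big(\frac{1}{\gd_k}+\frac{1}{\eta}\Big)\int_{\Delta_k}|f|^2\,dt,\qquad x\in\Delta_k,\ \eta>0,
\]
valid for any $f\in W^{1,2}(\Delta_k)$; it is obtained by averaging the identity $|f(x)|^2 = |f(y)|^2 + 2\re\int_y^x \ol{f}f'\,dt$ in $y$ over $\Delta_k$ and applying $2|f||f'|\le \eta|f'|^2+\eta^{-1}|f|^2$.

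For $\gq$, the first bound in \eqref{III.6} gives
\[
\int_{\Delta_k}|q||f|^2\,dx\le \|f\|_{L^\infty(\Delta_k)}^2\int_{\Delta_k}|q|\,dx\le C_0\gd_k\|f\|_{L^\infty(\Delta_k)}^2;
\]
inserting the pointwise estimate, using $\gd_k\le\gd^*$, and summing in $k$ yields $|\gq[f]|\le C_0\gd^*\eta\,\gt_X[f] + C_0(\gd^*\eta^{-1}+1)\|f\|^2_{L^2(\R_+)}$. For the jump terms, apply the pointwise bound to $|f(x_k-)|^2$ on $\Delta_k$ and to $|f(x_k+)|^2$ on $\Delta_{k+1}$, together with $|f(x_k+)-f(x_k-)|^2\le 2|f(x_k+)|^2+2|f(x_k-)|^2$; multiplying by $|\gB_k|^{-1}\le C_1\min\{\gd_k,\gd_{k+1}\}$ and summing over $k$ (each $\Delta_j$ appears at most twice) gives
\[
\gb_X^\pm[f]\le 4C_1\gd^*\eta\,\gt_X[f] + 4C_1(\gd^*\eta^{-1}+1)\|f\|^2_{L^2(\R_+)}.
\]
Letting $\eta\to 0$ in each bound establishes the infinitesimal $\gt_X$-boundedness of $\gq$ and of $\gb_X^\pm$, hence of $\gb_X$.

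The main obstacle is precisely this calibration: the averaging argument unavoidably produces a $\gd_j^{-1}$ singularity in the low-order coefficient of the pointwise bound, and hypothesis \eqref{III.6} is tailored so that the jump weights $|\gB_k|^{-1}$ absorb it uniformly via $\min\{\gd_k,\gd_{k+1}\}/\gd_j\le 1$ for $j\in\{k,k+1\}$. Once (i) is in place, (ii) follows routinely from KLMN, which yields closedness, lower semiboundedness, and the topological equivalence $\gH_{X,\gB,q} = W^{1,2}(\R_+\setminus X)$.
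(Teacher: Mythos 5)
Your proof is correct and follows essentially the same route as the paper: a uniform Sobolev-type pointwise bound $|f(x)|^2\le \eta\int_{\Delta_k}|f'|^2+ C(\eta,\gd_k)\int_{\Delta_k}|f|^2$ on each $\Delta_k$ (the paper cites Kato's inequality where you derive it by averaging, but they are equivalent up to reparametrizing $\eta=\varepsilon\gd_k$), then multiplication by the weights from \eqref{III.6}, summation over $k$ with the observation that each interval is hit at most twice, and KLMN for part (ii).
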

\begin{proof}
(i) By the Sobolev embedding theorem,  the following inequality (cf. \cite[inequality (IV.1.19)]{Kato66})
    \begin{equation}\label{3.3}
|f(x)|^2\le \varepsilon\gd_k \int_{\Delta_k}|f'(t)|^2dt +
\frac{C}{\varepsilon\gd_k} \int_{\Delta_k}|f(t)|^2dt \le \varepsilon\gd_k
\|f\|^2_{W^{1,2}(\Delta_k)} +
\frac{C}{\varepsilon\gd_k}\|f\|^2_{L^2({\Delta_k})},
    \end{equation}
holds for all $x\in[x_{k-1},x_k]$ and  $\varepsilon>0$ with some constant $C>0$ independent of $f$ and $k\in \N$. Firstly, let us estimate $\gq[f]$. Using
\eqref{III.6} with \eqref{3.3}, we obtain for $f\in
\dom(\gt_X) =  W^{1,2}(\R_+\setminus X)$
\begin{align*}
|\gq[f]|&\le\sum_{k=1}^\infty\int_{\Delta_k}|q(x)||f(x)|^2dx \le \sum_{k=1}^\infty\|f\|^2_{ C(\Delta_k)}\int_{\Delta_k}|q(x)|dx   \\
&\le \sum^{\infty}_{k=1} \gd_kC_0\|f\|^2_{ C(\Delta_k)} \le  \varepsilon {\gd^*}^2C_0\|f\|^2_{W^{1,2}(\R_+\setminus X)} + \frac{CC_0}{\varepsilon}
\|f\|^2_{L^2(\R_+)}.
    \end{align*}
    Similarly, we estimate the form $\gb_X$:
    \begin{align*}
|\gb_X[f]|&\le \gb_X^+[f]+\gb_X^-[f]\le\sum^{\infty}_{k=1}\frac{|f(x_k+)-f(x_k-)|^2}{|\beta_k|} \le 2\sum^{\infty}_{k=1  }\frac{\|f\|^2_{ C(\Delta_k)}+\|f\|^2_{ C(\Delta_{k+1})}}{|\beta_k|}  \\
& \le
2\sum^{\infty}_{k=1  }\Big(C_1\gd_k \|f\|^2_{ C(\Delta_k)}+C_1\gd_{k+1}\|f\|^2_{ C(\Delta_{k+1})}\Big)\\
&\le 4C_1\sum^{\infty}_{k=1} \gd_k\|f\|^2_{ C(\Delta_k)} \le 4C_1\varepsilon  {\gd^*}^2\|f\|^2_{W^{1,2}(\R_+\setminus X)} + \frac{4CC_1}{\varepsilon}
\|f\|^2_{L^2(\R_+)}.
    \end{align*}
Since $\varepsilon>0$ is arbitrary, the forms $\gq$, $\gb_X^+$, $\gb_X^-$ and
$\gb_{X}$ are infinitesimally form bounded with respect to
$\gt_X$.

(ii) immediately follows from the KLMN theorem (Theorem \ref{th_KLMN}).
\end{proof}
      \begin{proposition}\label{cor:3.3}
Assume that $d^*<\infty$ and  $q_-$ and $\gB^-$ satisfy \eqref{III.6}
, i.e.,
      \begin{equation}\label{eq:b_klmn}
\frac{1}{\gd_k}\int_{\Delta_k}q_-(x)dx\le C_0, \qquad
\Big(\frac{1}{\gB_k}\Big)^-\le C_1 \min\{\gd_{k},\gd_{k+1}\},\quad k\in\N,
   \end{equation}
with some constants $C_0,C_1>0$. Then:
\begin{itemize}
\item[(i)] the forms $\gq_-$ and
$\gb_X^-$ are infinitesimally $\gt_{X}$ bounded and hence the form
$\gt_{X,\gB,q}$ is lower semibounded and closed,
\item[(ii)] the following equalities
  \begin{equation}\label{eq:dom=w}
\gH_{X,\gB,q} = W^{1,2}(\R_+\setminus X; \gB,q) =
W^{1,2}(\R_+\setminus X; \gB^+,q_+) = \gH_{X,\gB^+,q_+}
  \end{equation}
hold algebraically and topologically, and the operator associated with
$\gt_{X,\gB,q}$ is $\rH_{X,\gB,q} = \rH_{X,\gB,q}^*$,     
\item[(iii)]
 if, additionally, $q_+$ and $\gB^+$ satisfy \eqref{III.6}, then  \eqref{eq:b_klmn}
is also necessary for the form $\gt_{X,\gB,q}$ to be lower semibounded.  In particular, conditions \eqref{eq:b_klmn} are necessary for lower semiboundedness  whenever $q$ and $\gB$ are negative.
\end{itemize}
     \end{proposition}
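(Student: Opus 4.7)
The plan is to regard $\gt_{X,\gB,q}$ as a KLMN-type form perturbation of the manifestly well-behaved form $\gt_{X,\gB^+,q_+}$ by the negative pieces $-\gq_--\gb^-_X$. The necessity in (iii) will then follow by testing the form against plain characteristic functions of single intervals. The main thing to verify along the way is that \eqref{eq:b_klmn} supplies precisely the relative boundedness that the KLMN theorem (Theorem \ref{th_KLMN}) requires.

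For (i), the first step is to apply Lemma \ref{lem_III.3} to $q_+\ge 0$ and $\gB^+$, obtaining that $\gt_{X,\gB^+,q_+}$ is closed, semibounded and defined on $W^{1,2}(\R_+\setminus X;\gB^+,q_+)$. Next one reruns the Sobolev estimate \eqref{3.3} exactly as in the proof of Lemma \ref{lem_III.1}(i), but with $|q|$ replaced by $q_-$ and $|\gB_k|^{-1}$ replaced by $(\gB_k^{-1})^-$; hypothesis \eqref{eq:b_klmn} is precisely the counterpart of \eqref{III.6} for these quantities, and so the argument shows that both $\gq_-$ and $\gb^-_X$ are infinitesimally $\gt_X$-bounded. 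Since $q_+\ge 0$ and $\gb_X^+\ge 0$ imply $\gt_{X,\gB^+,q_+}\ge\gt_X$ on the common domain, this upgrades to infinitesimal $\gt_{X,\gB^+,q_+}$-boundedness, whereupon Theorem \ref{th_KLMN} yields the closedness and semiboundedness of $\gt_{X,\gB,q}=\gt_{X,\gB^+,q_+}-\gq_--\gb^-_X$.

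For (ii), the KLMN theorem also identifies $\gH_{X,\gB,q}=\gH_{X,\gB^+,q_+}=W^{1,2}(\R_+\setminus X;\gB^+,q_+)$ algebraically and topologically. That this space coincides with $W^{1,2}(\R_+\setminus X;\gB,q)$ is then a routine two-way inclusion; the reverse direction uses once more the relative bounds just established, which force $\gq_-[f]$ and $\gb_X^-[f]$ to be finite on $W^{1,2}(\R_+\setminus X;\gB^+,q_+)$. The self-adjoint operator associated with the closure of the form is identified with $\rH_{X,\gB,q}$ by invoking Proposition \ref{cor:2.4}(ii).

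The only substantive new content is the necessity in (iii), and this is the main obstacle: one has to construct, for each large $k$, a trial function that isolates the negative contributions of $q$ and $\gB$. The plan is to use $\varphi_k:=\chi_{\Delta_k}$ for each $k\ge 2$. It lies in $\dom(\gt_{X,\gB,q})$, satisfies $\varphi_k'\equiv 0$ a.e., and has exactly two unit jumps, at $x_{k-1}$ and $x_k$, so that a direct computation gives
\[
\gt_{X,\gB,q}[\varphi_k]=\int_{\Delta_k}q(x)\,dx+\frac{1}{\gB_{k-1}}+\frac{1}{\gB_k},\qquad\|\varphi_k\|^2_{L^2}=\gd_k.
\]
Semiboundedness $\gt_{X,\gB,q}\ge -M$ combined with the splittings $q=q_+-q_-$ and $\gB_j^{-1}=(\gB_j^{-1})^+-(\gB_j^{-1})^-$ then forces
\[
\frac{1}{\gd_k}\int_{\Delta_k}q_-\,dx+\frac{(\gB_{k-1}^{-1})^-+(\gB_k^{-1})^-}{\gd_k}\le M+\frac{1}{\gd_k}\int_{\Delta_k}q_+\,dx+\frac{(\gB_{k-1}^{-1})^++(\gB_k^{-1})^+}{\gd_k},
\]
whose right-hand side is uniformly bounded in $k$ because $q_+$, $\gB^+$ satisfy \eqref{III.6} and $\gd^*<\infty$. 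This immediately yields the $q_-$-bound of \eqref{eq:b_klmn} together with $(\gB_k^{-1})^-\le\mathrm{const}\cdot\gd_k$; applying the same test function at index $k+1$ gives $(\gB_k^{-1})^-\le\mathrm{const}\cdot\gd_{k+1}$, and taking the minimum produces the $\gB$-bound of \eqref{eq:b_klmn}. The concluding sentence of (iii) is then immediate because, whenever $q$ and $\gB$ are both negative, $q_+\equiv 0$ and $\gB^+\equiv 0$ trivially satisfy \eqref{III.6}.
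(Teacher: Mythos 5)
Your proposal is correct and follows essentially the same route as the paper: parts (i)--(ii) are obtained by combining Lemma \ref{lem_III.3} for $(q_+,\gB^+)$ with the Sobolev-estimate argument of Lemma \ref{lem_III.1} applied to $q_-$ and $(\gB^{-1})^-$ plus the KLMN theorem, and part (iii) is proved by testing the form on characteristic functions of the intervals $\Delta_k$ (the paper uses the normalized $h_k=\gd_k^{-1/2}\chi_{\Delta_k}$, which is the same computation) and isolating the negative summands exactly as you do.
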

      \begin{proof}
(i) and (ii) immediately follow from Lemmas \ref{lem_III.3} and \ref{lem_III.1}.

To prove (iii), set $h_k(x)=\frac{1}{\sqrt{\gd_k}}\chi_{\Delta_k}(x)$. Noting that $\|h_k\|_{L^2}=1$ and $h_k\in W^{1,2}(\R_+\setminus X)$, we obtain
\begin{align*}
\gt_{X,\gB,q}[h_k]= \frac{1}{\gd_k}\int_{\Delta_k}q(x)dx+\frac{1}{\gB_{k-1}\gd_k}+\frac{1}{\gB_{k}\gd_k}\ge -C\|h_k\|^2_{L^2}=-C, \quad k\in\N.
\end{align*}
Therefore,
\[
-\frac{1}{\gd_k}\int_{\Delta_k}q_-(x)dx-\frac{1}{\gB^-_{k-1}\gd_k}-\frac{1}{\gB^-_{k}\gd_k}\ge -C-C_0-C_1=: -\widetilde{C}, \qquad k\in\N.
\]
where $C_0$ and $C_1$ are given by \eqref{III.6} with $q_+$ and $\gB^+$  in place of $q$ and $\gB$, respectively.
Since all summands in the left-hand side of this inequality are negative, we get
\[
\frac{1}{\gd_k}\int_{\Delta_k}q_-(x)dx\le \widetilde{C},\qquad \frac{1}{\gB^-_{k-1}}\le \widetilde{C}\gd_k,\qquad \frac{1}{\gB^-_{k}}\le \widetilde{C}\gd_k,\qquad k\in\N.
\]
This implies \eqref{eq:b_klmn}.
\end{proof}
 \begin{corollary}\label{cor:3.5}
Let $q_-$ satisfy the first condition in \eqref{eq:b_klmn}. Then the
the form $\gt_{X,q}$ (and hence the operator $\rH_{X,q}^N = \bigoplus^{\infty}_{k=1}\rH^N_{q,k}$ given by \eqref{eq:H^N}) is
lower semibounded and $\gH_{X,q} = \gH_{X,q_+} =
W^{1,2}(\R_+\setminus X; q_+)$ algebraically and topologically. If additionally $q=q_-$, then the
first condition in \eqref{eq:b_klmn} is also necessary for lower
semiboundedness of $\gt_{X,q}$ (and hence $\rH_{X,q}^N$).
   \end{corollary}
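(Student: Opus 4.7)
The plan is to read Corollary \ref{cor:3.5} as the specialization of Proposition \ref{cor:3.3} to the Neumann limit $\gB_k\equiv\infty$ (equivalently $1/\gB_k\equiv 0$). Under that choice the forms $\gb_X^\pm$ of \eqref{6.5} vanish identically and the second half of \eqref{eq:b_klmn} is automatically fulfilled, so \eqref{6.1B} collapses to
\[
\gt_{X,\gB,q}=\gt_{X,q}=\gt_{X,q_+}-\gq_-.
\]
The first half of \eqref{eq:b_klmn} for $q_-$ is, by the estimate used in the proof of Lemma \ref{lem_III.1}(i), precisely what is needed to make $\gq_-$ infinitesimally form bounded with respect to $\gt_X$ (and hence with respect to $\gt_{X,q_+}\ge\gt_X$). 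The KLMN theorem then produces a closed, lower semibounded form with domain $\dom(\gt_{X,q_+})=W^{1,2}(\R_+\setminus X;q_+)=\gH_{X,q_+}$, which immediately yields the algebraic--topological identification $\gH_{X,q}=\gH_{X,q_+}=W^{1,2}(\R_+\setminus X;q_+)$. The first representation theorem, combined with the block decomposition \eqref{3.8A}, identifies the associated operator as $\bigoplus_{k}\rH^N_{q,k}=\rH_{X,q}^N$.

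For the converse I would, following the proof of Proposition \ref{cor:3.3}(iii), test the form against the piecewise constant functions
\[
h_k(x)=\gd_k^{-1/2}\chi_{\Delta_k}(x),\qquad k\in\N,
\]
which lie in $W^{1,2}(\R_+\setminus X)$ with $h_k'\equiv 0$ a.e.\ and $\|h_k\|_{L^2}=1$. A direct computation gives
\[
\gt_{X,q}[h_k]=\frac{1}{\gd_k}\int_{\Delta_k}q(x)\,dx=-\frac{1}{\gd_k}\int_{\Delta_k}q_-(x)\,dx,
\]
where the last equality uses $q=-q_-\le 0$. Consequently any uniform lower bound $\gt_{X,q}\ge -C$ forces $\sup_k\gd_k^{-1}\int_{\Delta_k}q_-\le C$, which is the first condition of \eqref{eq:b_klmn}.

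No serious obstacle is to be expected, since the statement is essentially a streamlined reading of Proposition \ref{cor:3.3}. The one point deserving a mild check is that it suffices to control $q_-$ alone: $q_+$ is already absorbed into the unperturbed form $\gt_{X,q_+}$, which is closed and nonnegative, so the KLMN argument has only to handle $\gq_-$, and the hypothesis supplies precisely the right estimate. The Sobolev bound \eqref{3.3} used along the way produces constants depending only on $\gd^*$, inherited from the standing hypothesis $\gd^*<\infty$ of Proposition \ref{cor:3.3}, so uniformity in $k$ comes for free.
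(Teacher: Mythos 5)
Your proposal is correct and follows essentially the same route as the paper, which simply declares the proof ``analogous to that of Proposition \ref{cor:3.3}'': sufficiency via Lemma \ref{lem_III.1}/KLMN applied to $\gq_-$ as an infinitesimally small perturbation of the closed nonnegative form $\gt_{X,q_+}$, and necessity via the normalized indicator test functions $h_k=\gd_k^{-1/2}\chi_{\Delta_k}$ exactly as in Proposition \ref{cor:3.3}(iii). Your remarks on the standing hypothesis $\gd^*<\infty$ and on why only $q_-$ needs to be controlled are the right checks and cause no difficulty.
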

\begin{proof}
The proof is analogous to that of Proposition \ref{cor:3.3} and we left it to the reader.
\end{proof}
\begin{remark}\label{rem:2.7}
Note that (see, e.g., \cite{AKM_10}) the perturbation of the free Hamiltonian
\[
\rH_0=-\frac{d^2}{dx^2},\quad \dom(\rH_0)=\{f\in  W^{2,2}(\R_+):\, f'(0)=0\},
\]
by a negative potential $q=-q_-$ is lower semibounded if and only if $q_-$ satisfies
   \begin{equation}\label{eq:brinck}
\sup_{x\in\R_+}\int_{[x,x+1]}q_-(t)dt <+\infty. 
\end{equation}
Clearly, condition \eqref{eq:b_klmn} implies \eqref{eq:h_0} if $\gd^*<\infty$. Indeed,
\[
\int_{x}^{x+1}q_-(x)dx\le \sum_{k:\, \Delta_k\cap [x,x+1]\neq \emptyset}\int_{\Delta_k}q_-(x)dx\le\sum_{k:\, \Delta_k\cap [x,x+1]\neq \emptyset}C_0\gd_k\le C_0(1+2\gd^*).
\]

However, the converse is not true. Indeed, set $X=\{x_k\}_1^\infty$, where
$x_{2k-1}=k$, $x_{2k}=k+1/2k$, $k\in\N$. Define $q$ as follows:
\be\label{eq:ex2.1}
q(x)=-\sum_{k=1}^\infty k\chi_{\Delta_{2k}}(x).
\ee
Evidently, $q$ satisfies \eqref{eq:brinck} with $C=1/2$, however,
\[
\frac{1}{\gd_{2k}}\int_{\Delta_{2k}}q_-(x)dx=k\int_{[k,k+1/k]}k\, dx=k,
\]
 and hence $q$ does not satisfy \eqref{eq:b_klmn}. The latter, in particular,
 gives an explicit example of the Hamiltonian $\rH_q=-d^2/dx^2+q(x)$  
 which is lower semibounded in $L^2(\R_+)$, although  the Neumann realization $\rH_{X,q}^N$ defined by \eqref{eq:H^N}
 is not bounded from below. 
    \end{remark}

\begin{remark}\label{rem3.9}
 $(i)$ If $\gB^-$ satisfies  \eqref{eq:b_klmn}, then, by Proposition \ref{cor:3.3}(ii), we get the following implication 
\[
f\in \dom(\rH_{X,\gB}) (\subset \dom(\gt_{X,\gB}))\quad \Longrightarrow \quad f\in \dom(\gt_{X,\gB^+}) = \dom(\gt_{X,\gB})\subseteq
\mathop{W^1_2}(\R_+\setminus{X}).
\]

However, this implication may be false if $\gt_{X,\gB}$ is semibounded but $\gB^-$ does not satisfy \eqref{eq:b_klmn}.  Let us mention that the latter also does not hold for Schr\"odinger operators with locally integrable potentials (see \cite{Eve83}, \cite{EveGieWei73}, \cite{Kal74} and references therein).

$(ii)$ Assume that the form $\gt_{X,\gB,q}$ is lower semibounded.
By Corollary \ref{cor:2.4},  it is closable and let
$\overline{\gt}_{X,\gB,q}$ be its closure. If the condition
\eqref{eq:b_klmn} is not satisfied, then  it might happen that   
$f\in \dom(\gt_{X,q}) \cap \dom(\overline{\gt}_{X,\gB,q})$ but
$f\notin \dom(\gb^\pm_{X})$ (cf., \cite{Bri59} and \cite[Example 2]{Bra85}).

$(iii)$ I. Brinck \cite{Bri59} proved that the Hamiltonian $\rH_{q}$ is lower semibounded if there exists a constant $C\ge 0$ such that
\be\label{eq:bri59}
\int_{x}^{x+\varepsilon}q(t)dt\ge -C\quad\text{for all}\ \ x>0 \ \ \text{and}\ \ \varepsilon\in(0,1).
\ee
Clearly, if the negative part $q_-$ of $q$ satisfies \eqref{eq:brinck}, then $q$ satisfies \eqref{eq:bri59}. However, the converse is not true (take $q(x)=x^n\sin(x^{n+1})$, \cite{Bri59}). Moreover, it is shown in \cite{Bri59} that $\dom(\rH_q)\subset W^{1,2}(\R_+)$ if \eqref{eq:bri59} holds. However,  Brinck's condition \eqref{eq:bri59} does not imply the algebraic (and topologic) equality $\gH_q = W^{1,2}(\R_+;q_+)$, i.e., the energy space $\frak H_q$ can be wider than $W^{1,2}(\R_+;q_+)$. For instance, for $q(x)=x^n\sin(x^{n+1})$ and $f\in \dom(\rH_q)$ the integral $\int_{\R_+}q(x)|f|^2dx$ might be infinite,  although  the following limit
\[
\lim_{N\to+\infty}\int_0^Nq(x)|f(x)|^2dx
\]
exists and is finite for every $f\in\dom(\rH_q)$. Using this example one can construct a lower semibounded Hamiltonian  $H_{X,\gB,q}$  such that  the  corresponding energy space
$\gH_{X,\gB,q}$ is  wider than $W^{1,2}(\R_+\setminus X;\gB,q)$.
     \end{remark}

\subsection{The case $q\in L^\infty(\R_+)$}\label{ss:2.4}
Now we restrict our considerations to the case of a bounded potential $q$. More precisely, we assume that $q\equiv 0$. Our main aim is to give several  simple necessary and sufficient conditions for $\gt_{X,\gB}$ to be lower semibounded as well as to provide some estimates for the lower bound in terms of $\gB$ and $X$.

\begin{lemma}\label{lem:nec}
Let $X_-=\{x_j^-\}_{1}^\infty=\{x_{k_j}\}_1^\infty$ be the set supporting negative intensities and defined by \eqref{eq:x_pm}--\eqref{eq:x_-}.
If the form  $\gt_{X,\gB}$ is lower semibounded, that is $\gt_{X,\gB}\ge -C$ for some $C\ge 0$, then:
\begin{enumerate}
\item[(i)] for all negative $\gB_k$
  \begin{equation}\label{eq:inf_b}
\Big(\frac{1}{\gB_k}\Big)^-\le 1+\frac{C}{3},\quad k\in\N,
  \end{equation}
\item[(ii)]
  \begin{equation}\label{eq:nec_1}
\frac{1}{\gB_j^-}=\frac{1}{|\gB_{k_j}|}\le C\min \{\gd_j^-,\gd_{j+1}^-\},\qquad 
 \quad j\in\N,
  \end{equation}
where $\gd_j^-:=x_j^--x_{j-1}^-=x_{k_j}-x_{k_{j-1}}$,
  \item[(iii)]
for all $i\in\{1,\dots,k_{j+1}-k_j\}$
  \begin{equation}\label{eq:nec_2}
\frac{1}{\gB_{k_j}}+\frac{1}{\gB_{k_j+i}}\ge -C(x_{k_j+i}-x_{k_j}),
  \end{equation}
  \item[(iv)] for all $i\in\{1,\dots,k_{j}-k_{j-1}\}$
  \begin{equation}\label{eq:nec_3}
\frac{1}{\gB_{k_j-i}}+\frac{1}{\gB_{k_j}}\ge -C(x_{k_j}-x_{k_j-i}),
   \end{equation}
%
%
%
\end{enumerate}
\end{lemma}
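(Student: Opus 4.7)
The plan is uniform across the four claims: substitute an explicit, carefully chosen test function into the semiboundedness inequality $\gt_{X,\gB}[f]\geq -C\|f\|^2_{L^2}$ and read off the estimate. Each test function will be either a piecewise-linear ramp or the indicator of an interval, chosen so that the jump contributions $|f(x_n+)-f(x_n-)|^2/\gB_n$ reduce to exactly one or two nonzero terms located at the points we want to control.

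For \eqref{eq:inf_b}, I would take the unit ramp $f_k(x):=(1-(x-x_k))\chi_{[x_k,x_k+1]}(x)$. Since $f_k$ is linear (hence continuous) on $[x_k,x_k+1]$, it has no jump at any $x_n\in(x_k,x_k+1)$; the only nonzero jump is $f_k(x_k+)-f_k(x_k-)=1$. A direct computation gives $\int_0^\infty |f_k'|^2dx=1$ and $\|f_k\|^2_{L^2}=1/3$, while $\gB_k<0$ yields $1/\gB_k=-(\gB_k^{-1})^-$. The semiboundedness inequality then reads $1-(\gB_k^{-1})^-\geq -C/3$, which is exactly \eqref{eq:inf_b}.

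For \eqref{eq:nec_1}--\eqref{eq:nec_3} the building block is the characteristic function $h_{a,b}:=\chi_{(x_a,x_b)}$ with $0\leq a<b$ (using the convention $x_0=0$). This function satisfies $\int|h_{a,b}'|^2dx=0$, $\|h_{a,b}\|_{L^2}^2=x_b-x_a$, and has at most two nonzero jumps: $+1$ at $x_a$ (absent when $a=0$, since the sum in $\gb_X^{\pm}$ runs only over $k\geq 1$) and $-1$ at $x_b$. At every intermediate $x_n$ the value is constantly $1$, so those terms vanish. Substituting $h_{a,b}$ into the semiboundedness inequality yields
\[
\frac{1}{\gB_a}+\frac{1}{\gB_b}\geq -C(x_b-x_a)
\]
(with the $\gB_a$-term dropped if $a=0$). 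Choosing $(a,b)=(k_j,k_j+i)$ for $1\leq i\leq k_{j+1}-k_j$ gives \eqref{eq:nec_2}, and $(a,b)=(k_j-i,k_j)$ for $1\leq i\leq k_j-k_{j-1}$ gives \eqref{eq:nec_3}. For \eqref{eq:nec_1} I would apply the inequality with $(a,b)=(k_{j-1},k_j)$ and then with $(a,b)=(k_j,k_{j+1})$; both endpoints are in $X_-$, so both $1/\gB$-terms are negative, and the inequalities rearrange to $1/|\gB_{k_{j-1}}|+1/|\gB_{k_j}|\leq Cd_j^-$ and $1/|\gB_{k_j}|+1/|\gB_{k_{j+1}}|\leq Cd_{j+1}^-$. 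Dropping the nonnegative extraneous summand and taking the minimum over the two sides yields \eqref{eq:nec_1}; the case $j=1$ is handled with $(a,b)=(0,k_1)$ and the convention $x_{k_0}:=0$, so that $d_1^-=x_{k_1}$.

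The only verification required is that each test function lies in the form domain $\dom(\gt_{X,\gB})=W^{1,2}(\R_+\setminus X;\gB)$. This is immediate: every $f$ above has compact support, is piecewise in $W^{1,2}$, and has only finitely many (at most two) nonzero jumps, so $\gb_X^{\pm}[f]<\infty$. Consequently there is no serious obstacle; the lemma is a sequence of four elementary substitutions tailored to the algebra of the stated inequalities, the only mild subtlety being the choice of indicator intervals in (ii) that place both jumps at negative-intensity points simultaneously.
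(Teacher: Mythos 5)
Your proposal is correct and follows essentially the same route as the paper: the unit ramp supported on $[x_k,x_k+1]$ for (i), and indicator functions of intervals with endpoints in $X$ for (ii)--(iv), plugged into $\gt_{X,\gB}[f]\ge -C\|f\|^2_{L^2}$. Your treatment is in fact slightly more careful than the paper's (you make explicit the use of both neighbouring intervals in (ii) and the edge cases $a=0$, $j=1$), but the test functions and the resulting algebra are identical.
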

\begin{proof}
(i) Set
\[
f_k(x)=
\begin{cases}
x_k+1-x,&  x\in[x_k,x_k+1],\\
0,& x\notin [x_k,x_k+1]
\end{cases}
\]
Clearly, $\|f_k\|^2_{L_2}=1/3$, $\|f'_k\|^2=1$ and $\gb_X[f_k]=\gB_k^{-1}$. Therefore,  the inequality $\gt_{X,\gB}[f_k]\ge -C\|f_k\|^2_{L^2}$ implies \eqref{eq:inf_b}.

(ii)  Now set $f_{k_j}=\chi_{[x_{k_j},x_{k_{j+1}}]}$. Then $\gt_{X,\gB}[f_k]\ge -C\|f_k\|^2_{L^2}$ reads
as follows
\[
\frac{1}{\gB_{k_j}}+\frac{1}{\gB_{k_{j+1}}}\ge -C(x_{k_{j+1}}-x_{k_j})=-C\gd_{j+1}^-.
\]
Noting that $\gB_{k_j}$ is negative for all $j\in\N$, one infers
\[
\frac{1}{|\gB_{k_j}|}\le C\gd_j^-,\quad \frac{1}{|\gB_{k_j}|}\le C\gd_{j+1}^-,
\]
which completes the proof of \eqref{eq:nec_1}. 

(iii)-(iv)  Setting $f_{k_j+i}=\chi_{[x_{k_j},x_{k_j+i}]}$, it
easy to check that $\gt_{X,\gB}[f_{k_j+i}]\ge -C\|f_{k_j+i}\|^2$
is equivalent to the estimate in  \eqref{eq:nec_2}. Similarly, the
substitution $f_{k_j-i}:=\chi_{[x_{k_j-i},x_{k_j}]}$ proves
\eqref{eq:nec_3}.
%
%
%
%
%
%
%
%
%
%
%
%
\end{proof}
The next simple example shows that conditions \eqref{eq:inf_b} and \eqref{eq:nec_1} are only necessary.
\begin{example}
Let $X=\{x_k\}_1^\infty$ and $x_{2k-1}=k$, $x_{2k}=k+\frac{1}{2k}$, $k\in\N$. Set
\[
\gB_k=\begin{cases}
-1, & k=2j-1\\
1/j, & k=2j
\end{cases}.
\]
Clearly, the sequence $\gB$ is bounded and hence satisfies \eqref{eq:inf_b}. Further, note that $\gB$ satisfies \eqref{eq:nec_1}. Indeed, $x_j^-=2j-1$ and $\gd_j^-=1$, $j\in\N$. Since $\gB_{2j-1}^-=1$, we see that $\gB$ satisfies \eqref{eq:nec_1} with $C=1$.

However,
\[
\frac{1}{\gB_{2j-1}}+\frac{1}{\gB_{2j}}=-1+\frac{1}{j},\quad \gd_{2j}=\frac{1}{j},\quad j\in\N,
\]
and hence $\gB$ does not satisfy \eqref{eq:nec_2}. Thus the corresponding form $\gt_{X,\gB,q}$ is unbounded from below.
\end{example}

The next results demonstrates that under additional assumptions on $X$ the condition \eqref{eq:nec_1} is equivalent to \eqref{eq:b_klmn} and hence also necessary for semiboundedness.
   \begin{corollary}\label{cor3.4}
Assume that $d^*<\infty$. If there is $C_2>0$ such that
         \begin{equation}\label{3.32pr}
\gd_j^-:=d_{k_{j-1}+1} + \ldots + d_{k_j}  \le C_2\min\{\gd_{k_{j-1}+1}, \gd_{k_j}\},\quad j\in\N,
         \end{equation}
then condition \eqref{eq:nec_1} is necessary and sufficient for
$\rH_{X,\beta}$ to be lower semibounded.
    \end{corollary}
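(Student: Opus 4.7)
The plan is to reduce the corollary to Proposition \ref{cor:3.3} by showing that, under hypothesis \eqref{3.32pr}, condition \eqref{eq:nec_1} is equivalent (up to multiplicative constants) to the second inequality in \eqref{eq:b_klmn} specialized to $q\equiv 0$.

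For the necessity direction, suppose $\rH_{X,\gB}$ is lower semibounded. Then the form $\gt_{X,\gB}$ is also lower semibounded: by Proposition \ref{cor:2.4}(i), the identity $(\rH^0_{X,\gB}f,f)_{L^2}=\gt_{X,\gB}[f]$ holds on $\dom(\rH^0_{X,\gB})$, which is a core for the form, so the lower bound for the operator transfers to the form. Lemma \ref{lem:nec}(ii) applied to $\gt_{X,\gB}$ then yields \eqref{eq:nec_1}.

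For sufficiency, assume \eqref{eq:nec_1}. Since $q\equiv 0$, the first inequality in \eqref{eq:b_klmn} holds trivially, so one only needs to verify the second. Note that $(\gB_k^{-1})^-=0$ whenever $\gB_k>0$, hence only indices of the form $k=k_j$ need to be controlled. The key observation is that $d_{k_j}$ appears as the last summand in $\gd_j^- = d_{k_{j-1}+1}+\cdots+d_{k_j}$, while $d_{k_j+1}$ appears as the first summand in $\gd_{j+1}^-$. Consequently, the hypothesis \eqref{3.32pr} yields
\[
\gd_j^- \le C_2\,d_{k_j}, \qquad \gd_{j+1}^- \le C_2\,d_{k_j+1}.
\]
Taking the minimum of both right-hand sides gives $\min\{\gd_j^-,\gd_{j+1}^-\} \le C_2\min\{d_{k_j},d_{k_j+1}\}$, and combining this with \eqref{eq:nec_1} one obtains
\[
\frac{1}{|\gB_{k_j}|} \le C\,\min\{\gd_j^-,\gd_{j+1}^-\} \le CC_2\,\min\{d_{k_j},d_{k_j+1}\},
\]
which is precisely the second inequality in \eqref{eq:b_klmn} (with constant $CC_2$). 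An application of Proposition \ref{cor:3.3}(i)--(ii) then shows that $\gt_{X,\gB}$ is closed and lower semibounded and that the self-adjoint operator associated with its closure is $\rH_{X,\gB}$; in particular, $\rH_{X,\gB}$ is lower semibounded.

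The only delicate point is the short index calculation that extracts the bounds $\gd_j^-\le C_2 d_{k_j}$ and $\gd_{j+1}^-\le C_2 d_{k_j+1}$ from \eqref{3.32pr} in the correct form; once this is done the proof reduces to a direct invocation of Proposition \ref{cor:3.3}. A minor subsidiary issue is the passage from operator semiboundedness to form semiboundedness in the necessity direction, but this is a routine consequence of the core property stated in Proposition \ref{cor:2.4}(i).
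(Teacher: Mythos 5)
Your proof is correct and follows essentially the same route as the paper's: necessity is obtained from Lemma \ref{lem:nec}(ii), and sufficiency by combining \eqref{eq:nec_1} with \eqref{3.32pr} to deduce $|\gB_{k_j}|^{-1}\le CC_2\min\{\gd_{k_j},\gd_{k_j+1}\}$, i.e.\ the second condition in \eqref{eq:b_klmn}, and then invoking Proposition \ref{cor:3.3}. The only cosmetic difference is that the paper passes through the four-term minimum $\min\{\gd_{k_{j-1}+1},\gd_{k_j},\gd_{k_j+1},\gd_{k_{j+1}}\}$ while you go directly to $\min\{\gd_{k_j},\gd_{k_j+1}\}$, and you spell out the (routine) passage from operator to form semiboundedness that the paper leaves implicit.
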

       \begin{proof}
By Lemma \ref{lem:nec}, it remains to prove that \eqref{eq:nec_1} is sufficient. However,
\[
\frac{1}{\gB^-_{k_j}}\le C\min\{\gd_j^-,\gd_{j+1}^-\}\le CC_2 \min\{\gd_{k_{j-1}+1},\gd_{k_j},\gd_{k_{j}+1},\gd_{k_{j+1}}\}\le CC_2\min\{\gd_{k_j},\gd_{k_{j}+1}\}.
\]
Proposition \ref{cor:3.3} completes the proof.
    \end{proof}

Next  we indicate simple additional conditions that allow to
obtain  criteria of lower semi-boundedness. The first criterion
depends on a geometry of $X$ and reads as follows.
       \begin{corollary}[\cite{Mih_96a}]
 Let  $0<d_*<d^*<\infty$, then condition
       \begin{equation}\label{3.37prA}
\sup_{k\in\N} \Big(\frac{1}{\beta_k}\Big)^- <\infty
           \end{equation}
is necessary and sufficient for $\rH_{X,\beta}$ to be semibounded
below.
    \end{corollary}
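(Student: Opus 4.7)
The plan is to deduce both directions from results already established in the section, so no genuinely new machinery is needed—it is essentially an exercise in matching hypotheses.

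For necessity, I would invoke Lemma \ref{lem:nec}(i) directly: if $\gt_{X,\gB}\ge -C$, then $(1/\gB_k)^-\le 1+C/3$ for every negative $\gB_k$, and the bound is trivial when $\gB_k>0$ (since then $(1/\gB_k)^-=0$). Taking the supremum in $k$ gives \eqref{3.37prA}. Note that the hypothesis $0<d_*\le d^*<\infty$ is not needed for this direction.

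For sufficiency, the idea is to apply Proposition \ref{cor:3.3}(i) with $q\equiv \bold{0}$. The first condition in \eqref{eq:b_klmn} is then vacuous. For the second, since $\min\{\gd_k,\gd_{k+1}\}\ge \gd_*>0$, if we set $M:=\sup_{k\in\N}(1/\gB_k)^-<\infty$ and $C_1:=M/\gd_*$, then
\[
\Big(\frac{1}{\gB_k}\Big)^-\le M=C_1\gd_*\le C_1\min\{\gd_k,\gd_{k+1}\},\qquad k\in\N.
\]
Together with $\gd^*<\infty$, this verifies the hypotheses of Proposition \ref{cor:3.3}(i), so the form $\gt_{X,\gB}=\gt_{X,\gB,\bold{0}}$ is lower semibounded and closed, and the associated self-adjoint operator is $\rH_{X,\gB}$.

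I don't foresee any serious obstacle: the proof is a two-line verification in each direction, since Lemma \ref{lem:nec} supplies the necessary bound and Proposition \ref{cor:3.3} supplies the sufficient one, with the uniform bounds $0<\gd_*\le \gd^*<\infty$ converting the pointwise condition \eqref{3.37prA} into the scaled version \eqref{eq:b_klmn} required by the KLMN-type argument. The only care needed is to observe that the geometric hypothesis $\gd_*>0$ is used solely in the sufficiency direction, whereas $\gd^*<\infty$ enters through Lemma \ref{lem_III.1} inside Proposition \ref{cor:3.3}; both are given.
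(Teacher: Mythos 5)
Your proposal is correct and takes essentially the same route as the paper: necessity is quoted from Lemma \ref{lem:nec}(i) and sufficiency from Proposition \ref{cor:3.3} (which the paper's proof cites, slightly loosely, as a corollary), with your explicit check that $\gd_*>0$ turns the uniform bound \eqref{3.37prA} into the second condition of \eqref{eq:b_klmn} being exactly the step the paper leaves implicit.
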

    \begin{proof}
 \emph{Necessity}  of condition \eqref{3.37prA}  was established in Lemma \ref{lem:nec}(i).
\emph{Sufficiency} is implied by Corollary \ref{cor:3.3}.
    \end{proof}


Next we show that the semi-boundedness of the Hamiltonian
$\rH_{X,\beta}$ yields  some natural  restrictions on the negative
part of intensities, which are stronger than the boundedness condition \eqref{eq:inf_b}.
         \begin{corollary}\label{cor3.7pr}
Let $d^*<\infty$. Assume also that $\sup_j(k_{j+1}-k_j) = K < \infty$    and  $\lim_{k\to\infty}d_k=0$.
If the form $\gt_{X,\beta}$ is  lower semi-bounded,
$\gt_{X,\beta}\ge - C $,  then
    \begin{equation}\label{3.36pr}
\lim_{j\to\infty}|\beta^- _j|^{-1} =  \lim_{j\to\infty}|\beta
_{k_j}|^{-1} = 0.
    \end{equation}
      \end{corollary}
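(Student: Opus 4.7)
The plan is to derive the conclusion directly from Lemma \ref{lem:nec}(ii), using the two geometric hypotheses to show that $\gd_j^-\to 0$.

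First I would invoke Lemma \ref{lem:nec}(ii): since $\gt_{X,\gB}\ge -C$, we have the inequality
\[
\frac{1}{|\gB_{k_j}|}=\frac{1}{\gB_j^-}\le C\min\{\gd_j^-,\gd_{j+1}^-\}\le C\gd_j^-,\qquad j\in\N,
\]
where $\gd_j^-=x_{k_j}-x_{k_{j-1}}$. So everything reduces to proving $\gd_j^-\to 0$ as $j\to\infty$.

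Next I would rewrite $\gd_j^-$ as a telescoping sum over consecutive intervals,
\[
\gd_j^-=x_{k_j}-x_{k_{j-1}}=\sum_{i=k_{j-1}+1}^{k_j}d_i.
\]
The hypothesis $\sup_j(k_{j+1}-k_j)=K<\infty$ guarantees that this sum contains at most $K$ terms, and therefore
\[
\gd_j^-\le K\cdot\max_{k_{j-1}+1\le i\le k_j}d_i.
\]
Since $k_{j-1}\to\infty$ as $j\to\infty$ (because the sequence $\{k_j\}$ is strictly increasing) and $\lim_{k\to\infty}d_k=0$ by hypothesis, the maximum on the right tends to $0$. Hence $\gd_j^-\to 0$, and combining this with the preliminary bound yields $|\gB_{k_j}|^{-1}=|\gB_j^-|^{-1}\to 0$, which is \eqref{3.36pr}.

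There is essentially no obstacle here; the statement is a straightforward packaging of Lemma \ref{lem:nec}(ii) with the two extra geometric assumptions. The only point to watch is that one must use the uniform bound $K$ on the gaps $k_{j+1}-k_j$ to control the number of summands in $\gd_j^-$, because without it the sum of arbitrarily many vanishing $d_i$'s need not tend to zero.
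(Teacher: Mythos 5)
Your proof is correct and follows the same route as the paper: both invoke Lemma \ref{lem:nec}(ii) to get $|\gB_{k_j}|^{-1}\le C\gd_j^-$, then use the uniform bound $K$ on the number of consecutive intervals making up $\gd_j^-=d_{k_{j-1}+1}+\dots+d_{k_j}$ together with $d_k\to 0$ to conclude $\gd_j^-\to 0$. No issues.
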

     \begin{proof}
Since $\lim_{k\to\infty}d_k=0,$  for any $\varepsilon
> 0$ there exists $N = N(\varepsilon)$ such that $d_k <
\varepsilon$ for $k>N.$  Therefore inequality \eqref{eq:nec_1}
yields $(\beta_j^-)^{-1}\le C_2(d_{k_{j-1}+1} + \ldots + d_{k_j})
\le C_2 K\varepsilon$ and hence \eqref{3.36pr} follows.
         \end{proof}
%
%
%
%
%
%
     \begin{remark}\label{rem3.10}
Note that self-adjointness of not necessarily semibounded  Hamiltonians
$\rH_{X,\gA,q}$ has been investigated in several papers
\cite{Alb_Ges_88}, \cite{BusStoWei95}, \cite{KosMal09}, \cite{KosMal10}, \cite{KosMal12} (see also references therein).
 Let us mention that in the case $q\in L^\infty(\R_+)$ conditions for semiboundedness of $\rH_{X,\gB}$ similar to \eqref{eq:b_klmn} and \eqref{eq:inf_b} have been obtained in \cite{KosMal09}, \cite{KosMal10} by using a different approach.

    \end{remark}

\section{Operators with discrete spectrum}\label{sec:III}

Recall that according to the classical result of A.M. Molchanov
\cite{Mol53}, \cite{Gla65} (see also \cite{AKM_10}, where Hamiltonians with $\delta$-interactions have been considered), {\em the Sturm--Liouville operator
$\rH_q=-\frac{d^2}{dx^2} +q(x)$  with a lower semibounded potential $q \ge -c$ has discrete spectrum if and only if  }
    \begin{equation}\label{4.1}
\lim_{x\to\infty}\int^{x + \varepsilon}_x q(t)dt=+ \infty\quad \text{for every}\ \varepsilon>0. 
    \end{equation}
%
%

 Here we prove necessary and sufficient conditions, which are in a certain sense analogous to the
Molchanov theorem. In particular, we shall show that Molchanov's condition \eqref{4.1} remains to be necessary for the discreteness. 
However, it is no longer sufficient. Namely, we emphasize that for Hamiltonians $\rH_{X,\gB,q}$ a new additional condition
\eqref{Intro2.2}  appears.
%
%
%
%

\subsection{Necessary conditions}\label{ss:3.1}
We begin with the following result.

        \begin{proposition}\label{prop4.5}
Assume that  $q$ satisfies  condition \eqref{eq:brinck}, that is
\be\label{eq:4.2}
\sup_{x\in\R_+}\int_{[x,x+1]}q_-(t)dt <+\infty.
\ee 
 Assume also that $q$ does not satisfy Molchanov's condition \eqref{4.1}. If the operator $\rH_{X,\beta,q}$  is lower semibounded, then it is self-adjoint and its spectrum is  not
discrete.
    \end{proposition}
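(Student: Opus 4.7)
The plan is to first extract self-adjointness directly from Theorem \ref{th_III.1}: since $\rH_{X,\gB,q}$ is assumed to be lower semibounded, it is automatically self-adjoint. It then remains to exhibit a Weyl-type singular sequence $\{\varphi_n\}$ that is orthonormal in $L^2(\R_+)$ and on which the quadratic form $\gt_{X,\gB,q}$ is uniformly bounded; the min-max principle (applied via Corollary \ref{cor:2.4}(ii), which identifies $\rH_{X,\gB,q}$ with the operator associated with $\overline{\gt}_{X,\gB,q}$) will then show that infinitely many eigenvalues lie below a common threshold, contradicting discreteness.

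Failure of Molchanov's condition \eqref{4.1} provides $\varepsilon>0$, $M_1>0$, and points $a_n\to\infty$ with $\int_{a_n}^{a_n+\varepsilon}q(t)\,dt\le M_1$. Passing to a subsequence, I arrange $a_{n+1}>a_n+\varepsilon$ and $a_n>0$ for all $n$, so the intervals $I_n:=(a_n,a_n+\varepsilon)$ are pairwise disjoint and contained in $\R_+$. Covering $I_n$ by at most $\lfloor\varepsilon\rfloor+1$ unit intervals and applying the hypothesis \eqref{eq:4.2} yields a constant $M_2$, independent of $n$, with $\int_{I_n}q_-(t)\,dt\le M_2$. Writing $q_+=q+q_-$ therefore gives the two-sided bound
\[
\int_{I_n}q_+(t)\,dt\le M_1+M_2,\qquad \int_{I_n}q_-(t)\,dt\le M_2.
\]

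Fix once and for all a function $\psi\in C^\infty_c(0,\varepsilon)$ with $\|\psi\|_{L^2}=1$, and set $\varphi_n(x):=\psi(x-a_n)$. Each $\varphi_n$ belongs to $C^\infty_c(\R_+)$ and is therefore continuous across every $x_k\in X$, so all jump contributions vanish, $\gb^\pm_X[\varphi_n]=0$, and $\varphi_n\in\dom(\gt_{X,\gB,q})$. Consequently
\[
\gt_{X,\gB,q}[\varphi_n]=\int_{I_n}|\varphi_n'|^2\,dx+\int_{I_n}q(x)|\varphi_n|^2\,dx\le\|\psi'\|_{L^2}^2+(M_1+M_2)\|\psi\|_\infty^2=:C_0,
\]
uniformly in $n$, while the disjointness of the supports makes $\{\varphi_n\}_{n\in\N}$ orthonormal in $L^2(\R_+)$.

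By the min-max principle applied to the lower semibounded self-adjoint operator $\rH_{X,\gB,q}$, for every $N\in\N$ the $N$-th eigenvalue counted from below the bottom of the essential spectrum satisfies $\lambda_N(\rH_{X,\gB,q})\le C_0$, so the eigenvalues cannot tend to $+\infty$; hence $\sigma(\rH_{X,\gB,q})$ is not discrete. The only delicate point is the verification that the test functions actually lie in the form domain --- this is resolved by the choice of \emph{smooth} cutoffs $\varphi_n$, which sidesteps entirely the possibility of interaction centres $x_k$ lying inside the supports, since continuity at every $x_k$ automatically kills the $\delta'$-contribution.
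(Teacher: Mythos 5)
Your argument is correct, but it takes a genuinely different route from the paper's. The paper's proof is indirect: after obtaining self-adjointness from Theorem \ref{th_III.1} (as you do), it observes that under \eqref{eq:4.2} the unperturbed Neumann realization $\rH_q$ on $\R_+$ is lower semibounded with energy space $W^{1,2}(\R_+;q)$, that this space embeds \emph{continuously} into $\gH_{X,\gB,q}$ (via the second representation theorem and a closed-graph argument), and that discreteness of $\sigma(\rH_{X,\gB,q})$ would, by the Rellich criterion, force the embedding $W^{1,2}(\R_+;q)\hookrightarrow L^2(\R_+)$ to be compact --- contradicting Molchanov's theorem, which is invoked as a black box. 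You instead build an explicit singular sequence: translates of a fixed $C^\infty_{\comp}$ bump placed on intervals $I_n$ on which $\int_{I_n}q$ stays bounded (such intervals exist precisely because \eqref{4.1} fails, and \eqref{eq:4.2} upgrades the bound on $\int_{I_n}q$ to a bound on $\int_{I_n}q_+$), note that continuity kills every jump term so the $\delta'$-part of the form is invisible on these test functions, and conclude by min--max. This is more elementary and self-contained --- in effect you inline the necessity half of Molchanov's criterion rather than citing it --- while the paper's route buys reusability of Molchanov's theorem and records the structurally useful embedding $\gH_q\hookrightarrow\gH_{X,\gB,q}$. One detail worth making explicit in your min--max step: orthonormality of $\{\varphi_n\}$ in $L^2$ is not by itself enough; you need $\overline{\gt}_{X,\gB,q}[f]\le C_0\|f\|^2$ for every $f$ in the span of $\varphi_1,\dots,\varphi_N$, which does hold here because the disjoint supports make both the form and the inner product diagonal on that span. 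Both proofs share the same (paper-level) convention of identifying $\rH_{X,\gB,q}$ with the operator associated with the closed form $\overline{\gt}_{X,\gB,q}$ via Proposition \ref{cor:2.4}.
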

     \begin{proof}
By Theorem \ref{th_III.1},  $\rH_{X,\beta,q}$ is self-adjoint, $\rH_{X,\beta,q} =
\rH_{X,\beta,q}^*\ge -c$.   Let $\gt_{X,\beta, q}$ be the corresponding
quadratic form \eqref{3.4pr},
    \begin{equation}\label{4.18}
\frak t_{X,\beta, q}[f]=  \int^{\infty}_0 \left(|f'(x)|^2  + q(x)|f(x)|^2\right) dx +
\sum^{\infty}_{k=1}\frac{|f (x_k+)-f(x_k-)|^2}{\beta_k},
    \end{equation}
and let  $\gH_{X,\beta, q}$ be the corresponding Hilbert space, i.e., the closure of $\dom(\frak t_{X,\beta, q})$ equipped with the energy norm  \eqref{3.17A}.

Further, let $\rH_q$ be the Neumann realization of  $\tau_q=-d^2/dx^2 + q$
in $L^2(\R_+)$. Since  $q$ satisfies  \eqref{eq:4.2}, the operator $\rH_q$ is self-adjoint and
lower  semibounded, $\rH_q=\rH_q^*\ge -c_1I$ (see, e.g., \cite{AKM_10}), and the corresponding form 
\[
\gt_{q}[f] = \int_0^\infty \left(|f'(x)|^2  + q(x)|f(x)|^2\right)\,dx, \quad \dom(\gt_{q}) = W^{1,2}(\R_+; q),
 \]
is well defined and closed.  Moreover, the corresponding energy space $\gH_{q}$ coincides with $W^{1,2}(\R_+; q)$  algebraically  and topologically.
Note  also that
\begin{equation}\label{4.33}
\gH_{q} = W^{1,2}(\R_+;q)\subset \dom(\overline{\gt}_{X,\beta, q})=  \gH_{X,\beta, q} \quad \text{and}\quad
\gt_{X,\beta, q}[f] = \gt_{q}[f]
\end{equation}
for  all $f\in W^{1,2}(\R_+; q)$.

Let us show that the embedding \eqref{4.33} is continuous.
By the second representation theorem,  $\gH_{q}$ and $\gH_{X,\beta,q}$ coincide algebraically
and topologically with $\dom(\rH_{q} +c_1I)^{1/2}$ and $\dom(\rH_{X,\beta,q}+cI)^{1/2}$, respectively, which  are
equipped with the corresponding graph norms.
Therefore, by \cite[Theorem 2.6.2]{Ios65} (see also \cite[Remark IV.1.5]{Kato66}), the embedding $i_1: W^{1,2}(\R_+;q)= \gH_{q} \hookrightarrow \gH_{X,\beta,q}$ (see  \eqref{4.33}) is continuous.

Finally, if the spectrum  $\sigma(\rH_{X,\beta,q})$ is discrete, then, by
the Rellieh theorem, the embedding $i_2: \gH_{X,\beta,q}
 \hookrightarrow L^2(\R_+)$ is compact. Taking the composition
 $i = i_2 i_1:  W^{1,2}(\R_+;q)\to L^2(\R_+)$ one gets that
the embedding $i:W^{1,2}(\R_+;q)\hookrightarrow L^2(\R_+)$ is compact. Now Molchanov's theorem implies that condition \eqref{4.1} is satisfied. This contradiction completes the proof.
     \end{proof}
As an immediate corollary of Proposition \ref{prop4.5} we obtain the following result.
        \begin{corollary}\label{cor3.8}
Let $q\in L^\infty(\R_+)$. If the (self-adjoint) operator $\rH_{X,\beta,q}$ is lower semibounded, then its spectrum is not discrete.
In particular, $\rH_{X,\beta} = \rH_{X,\beta,0}$ is not discrete whenever it is lower semibounded.
    \end{corollary}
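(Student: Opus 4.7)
The plan is to deduce Corollary \ref{cor3.8} as an immediate application of Proposition \ref{prop4.5}. The key observation is that boundedness of $q$ simultaneously verifies the Brinck-type hypothesis \eqref{eq:4.2} required by Proposition \ref{prop4.5} and forces the violation of Molchanov's condition \eqref{4.1}.

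First I would check the Brinck hypothesis. If $q\in L^\infty(\R_+)$, then $q_-(x)\le \|q\|_\infty$ a.e., hence
\[
\sup_{x\in\R_+}\int_{[x,x+1]}q_-(t)\,dt \le \|q\|_\infty <\infty,
\]
so \eqref{eq:4.2} holds. Next I would verify that Molchanov's condition fails: for every $\varepsilon>0$ and every $x>0$,
\[
\int_x^{x+\varepsilon}q(t)\,dt \le \varepsilon\|q\|_\infty,
\]
so the left-hand side is uniformly bounded in $x$ and cannot tend to $+\infty$. Thus \eqref{4.1} is violated.

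Now I would invoke Proposition \ref{prop4.5} directly: under the standing assumption that $\rH_{X,\beta,q}$ is lower semibounded (which in turn implies self-adjointness by Theorem \ref{th_III.1}), Proposition \ref{prop4.5} asserts that the spectrum cannot be discrete. This proves the first assertion. The second assertion follows at once by taking $q=\bold{0}\in L^\infty(\R_+)$.

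There is essentially no obstacle here beyond checking the two hypotheses of Proposition \ref{prop4.5}; the corollary is a clean specialization and contains no additional content beyond recognizing that the absence of growth of $q$ prevents the Molchanov mechanism from producing discrete spectrum.
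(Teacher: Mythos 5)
Your argument is correct and coincides with the paper's own proof: both verify that boundedness of $q$ gives the Brinck-type condition \eqref{eq:4.2} while precluding Molchanov's condition \eqref{4.1}, and then apply Proposition \ref{prop4.5}. The extra detail you supply in checking the two hypotheses is fine but does not change the route.
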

\begin{proof}
If $q\in L^\infty(\R_+)$, then it satisfies \eqref{eq:4.2} and does not satisfy \eqref{4.1}. Proposition \ref{prop4.5} completes the proof.
  \end{proof}
\begin{proposition}\label{prop:4.3}
If the operator $\rH_{X,\gB,q}$ is lower semibounded and has discrete spectrum, then
\be\label{eq:4.5'}
\frac{1}{\gd_k}\Big(\int_{x_{k-1}}^{x_k}q(x)dx+\frac{1}{\gB_{k-1}}+\frac{1}{\gB_k}\Big)\to +\infty.
\ee
\end{proposition}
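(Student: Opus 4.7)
The plan is to evaluate the form $\gt_{X,\gB,q}$ on the explicit normalized test sequence $h_k(x):=\gd_k^{-1/2}\chi_{\Delta_k}(x)$ (already used in the proof of Proposition \ref{cor:3.3}(iii)) and then invoke the Rellich-type compactness argument familiar from the proof of Proposition \ref{prop4.5}. Since $\rH_{X,\gB,q}$ is lower semibounded by hypothesis, Proposition \ref{cor:2.4} guarantees that $\gt_{X,\gB,q}$ is closable and that its closure is the form of $\rH_{X,\gB,q}$; write $\gH_{X,\gB,q}$ for the corresponding energy space, equipped with the norm \eqref{3.17A}.

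A direct substitution into \eqref{3.4pr} shows that $\|h_k\|_{L^2}=1$ and, because $h_k'\equiv 0$ off $X$ and $h_k$ has precisely two nonzero jumps—at $x_{k-1}$ and at $x_k$, each of modulus $\gd_k^{-1/2}$—one has, for every $k\ge 2$,
\[
\gt_{X,\gB,q}[h_k] \;=\; \frac{1}{\gd_k}\Big(\int_{\Delta_k} q(x)\,dx+\frac{1}{\gB_{k-1}}+\frac{1}{\gB_k}\Big),
\]
which is exactly the quantity appearing in \eqref{eq:4.5'}. Membership $h_k\in\dom(\gt_{X,\gB,q})\subset\gH_{X,\gB,q}$ is immediate from \eqref{6.1BB} since each summand above is finite ($q\in L^1_{\loc}$). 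Moreover, as the supports $\Delta_k$ escape to $+\infty$, Cauchy--Schwarz gives $|(h_k,g)_{L^2}|\le\|g\|_{L^2(\Delta_k)}\to 0$ for every $g\in L^2(\R_+)$, so $h_k\rightharpoonup 0$ weakly in $L^2(\R_+)$.

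Suppose towards a contradiction that \eqref{eq:4.5'} fails. Then along some subsequence $\gt_{X,\gB,q}[h_k]$ stays bounded, hence $\{h_k\}$ is bounded in $\gH_{X,\gB,q}$. Discreteness of $\sigma(\rH_{X,\gB,q})$ is equivalent to compactness of the embedding $\gH_{X,\gB,q}\hookrightarrow L^2(\R_+)$ (this is the same argument used in the proof of Proposition \ref{prop4.5}), so a further subsequence converges strongly in $L^2(\R_+)$; combined with the weak convergence $h_k\rightharpoonup 0$, this forces $\|h_k\|_{L^2}\to 0$, contradicting $\|h_k\|_{L^2}=1$. Thus \eqref{eq:4.5'} must hold.

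The argument is essentially mechanical once these test functions are chosen; the only point that requires mild care—the main (if modest) obstacle—is verifying that the concrete formula displayed above really is the value of the closed form on $h_k$, but this is automatic because $h_k$ lies in the original (pre-closure) domain $\dom(\gt_{X,\gB,q})$ on which $\gt_{X,\gB,q}$ and $\overline{\gt}_{X,\gB,q}$ coincide.
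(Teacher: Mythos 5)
Your proof is correct and follows essentially the same route as the paper: the paper also tests the form on $h_k=\gd_k^{-1/2}\chi_{\Delta_k}$, obtains the identical expression \eqref{eq:4.8B}, and derives a contradiction with discreteness via the Rellich criterion (Theorem \ref{th_Rel}), phrased there as non-compactness of the orthonormal system $\{h_k\}$ rather than via your equivalent weak-convergence formulation.
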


\begin{proof}
%
%
%
%
Consider the form $\gt_{X,\gB,q}$ given by \eqref{3.4pr}. Note that it is lower semibounded, $\gt_{X,\gB,q}\ge -c$, and closable in $L^2(\R_+)$ since the operator $\rH_{X,\gB,q}$ is lower semibounded.

Set
   \begin{equation}\label{4.9A}
h_{k}:= d^{-1/2}_{k}\chi_{\Delta_k}(x)=
\begin{cases}
d^{-1/2}_{k}, & x\in[x_{k-1},x_k] \\
0,& x\notin [x_{k-1},x_k]
\end{cases},\quad k\in\N.
\end{equation}
%
%
Clearly, $\|h_k\|_{L^2(\R_+)}=1$ and $h_{k}\in\dom(\gt_{X,\gB,q})$. Therefore, using \eqref{4.18}, we get
   \begin{equation}\label{eq:4.8B}
\gt_{X,\beta,q}[h_{k}] = \frac{1}{\gd_{k}}
\Big(\int_{\Delta_k}q(x) dx + \big(\gB_{{k}-1}^{-1} +
\gB_{k}^{-1}\big) \Big), \quad k\in\N.
   \end{equation}

Noting that the form $\gt_{X,\gB,q}$ is lower semibounded, we see that
\[
\inf_k\gt_{X,\beta,q}[h_{k}]\ge -c>-\infty.
\]

Assume that \eqref{eq:4.5'} does not hold, that is, there is a subsequence $\{h_{k_j}\}_{j=1}^\infty$ such that
\[
\gt_{X,\beta,q}[h_{k_j}]\le C_0<\infty.
\]
The latter also means that the subsequence $h_{k_j}$ is bounded in $\gH_{X,\gB,q}$.
On the other hand, the system $\{h_{k}\}^{\infty}_{k=1}$ is
orthonormal in $L^2(\R_+)$ and hence is not compact there. Therefore, the embedding $\gH_{X,\gB,q}\hookrightarrow L^2(\R_+)$ is not compact and hence, by Theorem \ref{th_Rel}, the spectrum of the operator $\rH_{X,\gB,q}$ is not compact. This contradiction completes the proof.
\end{proof}
Now we are ready to prove that conditions \eqref{4.1} and \eqref{eq:4.5'} are necessary for the discreteness.

    \begin{proof}[Proof of Theorem \ref{th_discretcriter}. Necessity.]
    Assume that $q$ satisfies \eqref{I_brinck}, that is,
    \be\label{eq:4.8}
    \sup_k\frac{1}{\gd_k}\int_{\Delta_k}q_-(x)dx<+\infty.
    \ee
    Then (see Remark \ref{rem:2.7}) $q$ satisfies \eqref{eq:4.2} and hence, by Proposition \ref{prop4.5}, \eqref{4.1} is necessary.
    Noting that condition \eqref{eq:4.5'} is necessary by Proposition \ref{prop:4.3}, we complete the proof.
    \end{proof}

  \subsection{Discreteness of the spectrum of $\rH_{X,q}^N$}  \label{ss:3.2}
  The main result of this subsection is the following discreteness criterion for the operator $\rH_{X,q}^N$ defined by \eqref{eq:H^N}--\eqref{eq:HkN2}.
  %
  %
  \begin{theorem}\label{th:discr_N}
  Let $q\in L^1_{\loc}(\R_+)$ and $\gd^*<\infty$. Assume that the operator $\rH_{X,q}^N$ given by \eqref{eq:H^N}--\eqref{eq:HkN2} is lower semibounded. If the potential $q$ satisfies \eqref{eq:4.8}, then the spectrum of $\rH_{X,q}^N$ is purely discrete if and only if $q$ satisfies
  \begin{itemize}
  \item[(i)]  Molchanov's condition \eqref{4.1} and
  \item[(ii)] \be\label{eq:4.9}
  \frac{1}{\gd_k}\int_{\Delta_k}q(x)dx\to +\infty.
  \ee
  \end{itemize}
  \end{theorem}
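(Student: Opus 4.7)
The approach is to exploit the direct-sum decomposition $\rH_{X,q}^N=\bigoplus_{k}\rH_{q,k}^N$: each summand acts on the bounded interval $\Delta_k$ with $q|_{\Delta_k}\in L^1(\Delta_k)$, so by Rellich's theorem each has compact resolvent and a purely discrete spectrum, bounded below by the standing semiboundedness assumption. Hence $\sigma(\rH_{X,q}^N)$ is discrete if and only if $\lambda_1(\rH_{q,k}^N)\to+\infty$, equivalently the embedding $\gH_{X,q}\hookrightarrow L^2(\R_+)$ is compact. The theorem reduces to characterizing when this holds.

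For necessity I would argue as in Propositions \ref{prop:4.3} and \ref{prop4.5}. Necessity of \eqref{eq:4.9} is immediate from the test functions $h_k:=\gd_k^{-1/2}\chi_{\Delta_k}\in\dom(\gt_{X,q})$, which are orthonormal in $L^2(\R_+)$ and satisfy $\gt_{X,q}[h_k]=\gd_k^{-1}\int_{\Delta_k}q\,dx$; if this sequence stays bounded along a subsequence, the unit ball of $\gH_{X,q}$ is not precompact in $L^2$, ruling out discreteness. For necessity of Molchanov's condition \eqref{4.1}, observe that \eqref{eq:4.8} implies \eqref{eq:brinck} by Remark \ref{rem:2.7}, so the Neumann realization $\rH_q$ of $\tau_q$ on $L^2(\R_+)$ is lower semibounded with $\gH_q=W^{1,2}(\R_+;q)$ embedded isometrically into $\gH_{X,q}$; composing with the compact embedding $\gH_{X,q}\hookrightarrow L^2$ yields compactness of $\gH_q\hookrightarrow L^2$, i.e.\ discreteness of $\rH_q$, and the classical Molchanov theorem supplies \eqref{4.1}.

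For sufficiency I would prove $\lambda_1(\rH_{q,k}^N)\to+\infty$ by a two-scale Neumann bracketing. Fix $M>0$, choose $\varepsilon\in(0,\gd^*)$ small (depending on $M$, $C_0$, $\gd^*$), and partition each $\Delta_k$ into adjacent sub-intervals $I$ of length $|I|=\min(\gd_k,\varepsilon)$; Neumann bracketing then gives $\lambda_1(\rH_{q,k}^N)\ge\min_I\lambda_1(\rH_I^N)$. On such an $I$ with left endpoint $a$, the Sobolev bound $\|f\|_{L^\infty(I)}^2\lesssim|I|^{-1}\|f\|_{L^2(I)}^2+|I|\|f'\|_{L^2(I)}^2$ together with \eqref{eq:4.8} absorbs $\int_I q_-|f|^2$ into $\tfrac12\|f'\|^2+C\|f\|^2$, with $C=C(C_0,\gd^*,\varepsilon)$. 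A dichotomy on whether $f(a)^2$ or $|I|\|f'\|^2$ dominates (using the pointwise bound $|f(x)-f(a)|\le|I|^{1/2}\|f'\|$) yields
\[
\gt_{q,I}[f]\ \ge\ \min\!\Big(\tfrac{1}{10|I|}\int_I q_+,\ \tfrac{1}{20|I|^2}\Big)\|f\|^2 - C\|f\|^2.
\]
When $|I|=\varepsilon$ (the case $\gd_k>\varepsilon$), \eqref{4.1} forces $|I|^{-1}\int_I q_+\to\infty$ uniformly in $I\subset[R(\varepsilon,M),\infty)$; when $|I|=\gd_k$, \eqref{eq:4.9} does the same. Since $1/(20|I|^2)\ge 1/(20\varepsilon^2)$ can be made $\ge 2(M+C)$ for $\varepsilon$ small, this yields $\lambda_1(\rH_I^N)\ge M$ for $k$ large, completing the argument.

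The principal obstacle is the sufficiency step. Neither hypothesis alone suffices: \eqref{4.1} permits $q$ to concentrate on narrow sub-intervals of a fixed $\Delta_k$ so that the Neumann ground-state eigenvalue stays bounded (as one sees from explicit high-barrier models), while \eqref{eq:4.9} allows analogous concentration at scales smaller than $\gd_k$. The two conditions interact only through the two-scale partition $|I|=\min(\gd_k,\varepsilon)$, and calibrating this scale so that $\varepsilon$ is small enough to make the Wirtinger factor $|I|^{-2}$ beat the tail $C(C_0,\gd^*,\varepsilon)$ while still invoking \eqref{4.1} on pieces of length $\varepsilon$—with \eqref{eq:4.8} uniformly controlling $q_-$—is the delicate point.
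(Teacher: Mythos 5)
Your proposal is correct, and its overall architecture coincides with the paper's: the reduction of discreteness of the direct sum to $\lambda_1(\rH_{q,k}^N)\to+\infty$ (equivalently, compactness of $\gH_{X,q}\hookrightarrow L^2$), the necessity of \eqref{eq:4.9} via the orthonormal test functions $\gd_k^{-1/2}\chi_{\Delta_k}$, the necessity of \eqref{4.1} by factoring the compact embedding through $W^{1,2}(\R_+;q)$ and invoking the classical Molchanov theorem, and, for sufficiency, the same two-scale partition (treat $\Delta_k$ as a single cell when $\gd_k\le\varepsilon$, subdivide into pieces of length in $[\varepsilon,2\varepsilon)$ when $\gd_k>\varepsilon$), with \eqref{eq:4.9} controlling the first case and \eqref{4.1} the second. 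Where you differ is in the implementation of the key estimate: the paper bounds $\int_I|f|^2$ directly by $2\varepsilon\int_Iq|f|^2+2\varepsilon^2\|f\|^2_{W^{1,2}(I)}$ using the weighted mean-value point $y$ with $|f(y)|^2=\int_Iq|f|^2/\int_Iq$ (cf.\ \eqref{3.5}), which yields uniform smallness of the tails of the unit ball with no case analysis; you instead use Neumann bracketing and a dichotomy between the oscillatory regime ($\|f'\|^2\gtrsim|I|^{-2}\|f\|^2$) and the nearly constant regime ($|f(a)|^2\gtrsim|I|^{-1}\|f\|^2$), obtaining an explicit lower bound for each local ground state. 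Both are valid; your route buys a quantitative eigenvalue bound and a cleaner statement of why discreteness is a purely local phenomenon for the decoupled operator, at the cost of the three-way case split. One point to tighten: you write $C=C(C_0,\gd^*,\varepsilon)$ for the constant absorbing $\int_Iq_-|f|^2$ into $\tfrac12\|f'\|^2+C\|f\|^2$, and your final calibration needs $\tfrac{1}{20\varepsilon^2}\ge 2(M+C)$; this closes because the Sobolev bound \eqref{3.3} applied on $I$ with $\delta=(2|I|\int_Iq_-)^{-1}$ together with $\int_Iq_-\le C_0\gd^*$ gives $C\lesssim(C_0\gd^*)^2$ \emph{independently} of $\varepsilon$ and $|I|$, so the Wirtinger term always wins for small $\varepsilon$. (Alternatively one may, as the paper does, first reduce to $q=q_+\ge1$ via Corollary \ref{cor:3.5}, since \eqref{eq:4.8} gives $\gH_{X,q}=W^{1,2}(\R_+\setminus X;q_+)$ algebraically and topologically.)
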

  \begin{proof}
  The proof of necessity of conditions \eqref{4.1} and \eqref{eq:4.9} is analogous to that of Proposition \ref{prop4.5} and \ref{prop:4.3} and we left it to the reader.

  Let us prove sufficiency. Firstly, note that, by Corollary \ref{cor:3.5}, the form $\gt_{X,q}$ given by \eqref{6.5B}--\eqref{6.5C} is lower
 semibounded and closed in $ L^2(\R_+)$.
 Moreover, the corresponding energy space is  $\gH_{X,q} =
 W^{1,2}(\R_+\setminus X; q)=W^{1,2}(\R_+\setminus X; q_+)$. The latter holds algebraically and topologically. 
Therefore, it suffices to prove sufficiency for nonnegative potentials.  Hence
without loss of generality we can assume that $q\ge 1$. 

By the Rellieh theorem (Theorem \ref{th_Rel}), it suffices to show that the embedding $i:
W^{1,2}(\R_+\setminus X; q)\hookrightarrow L^2(\R_+)$ is
compact, i.e.,  the unit ball  
%
%
     \begin{equation}\label{4.2}
\mathbb{U}_{X, q}:= \{f\in W^{1,2}(\R_+\setminus X; q): \ \gt_{X,q}[f]\le 1
\},
    \end{equation}
is compact in $L^2({\R}_+)$. 
We divide the proof in 3 steps.

$(i_1)$ Fix  $\varepsilon>0$  and  set
\[
\N'(\varepsilon):=\{k\in\N: \, |d_k|\le \varepsilon\}\quad \text{  and}\quad
\N''(\varepsilon):=\{k\in\N: \, |d_k|> \varepsilon\}.
\]
 Clearly $\N = \N'(\varepsilon) \cup  \N''(\varepsilon)$.

Firstly, we estimate $\|f\|_{L^2(\Delta_k)}$ for   $k\in \N'(\varepsilon)$.
For any $f\in W^{1,2}(\Delta_k)$ and any $x, y \in\Delta_k$ we have
    \begin{equation}\
|f(x) - f(y)|^2 = \Big|\int^{y}_{x}f'(t)dt\Big|^2 \le |x-y|\int^{y}_{x}|f'(t)|^2 dt \le  |x-y|\cdot \|f\|^2_{W^{1,2}(\Delta_k)}.
     \end{equation}
Therefore,
   \begin{equation}\label{3.3A}
|f(x)|^2 \le  2|f(y)|^2 +  2|x-y| \cdot\|f\|^2_{W^{1,2}(\Delta_k)},\quad x, y\in\Delta_k.
   \end{equation}
Since $f$ is continuous on $\Delta_k$, there exists  $y_k\in\Delta_k$ such that
     \begin{equation}\label{3.5}
|f(y_k)|^2  = \frac{\int_{\Delta_k}q(x)|f(x)|^2dx}{\int_{\Delta_k}q(x)dx}, \qquad k\in \N'(\varepsilon).
     \end{equation}
According to the condition  \eqref{eq:4.9},  there is $p':=p'(\varepsilon)\in\N$ such that
     \begin{equation}\label{3.6}
\frac{1}{\gd_k} \int_{\Delta_k}q(x) dx  >
\frac{1}{\varepsilon}\, ,\qquad  ( k\ge p').
     \end{equation}
Setting $y=y_k$ in \eqref{3.3A} and then integrating it over $\Delta_k$,  \eqref{3.5} and \eqref{3.6} imply 
          \begin{align}\label{3.7}
\int_{\Delta_k}|f(x)|^2 dx\le&  \frac{2d_k}{\int_{\Delta_k}q(x)dx} \cdot {\int_{\Delta_k}q(x)|f(x)|^2dx}  +  2d^2_k \|f\|^2_{W^{1,2}(\Delta_k)} \nonumber  \\
&\le  2\varepsilon {\int_{\Delta_k}q(x)|f(x)|^2dx} +  2\varepsilon^2 \|f\|^2_{W^{1,2}(\Delta_k)}, \qquad k\in \N'_{p'}({\varepsilon}),
          \end{align}
where $\N'_{p'}({\varepsilon}) := \{k\in \N'({\varepsilon}): k\ge p'\}.$

$(i_2)$ Now let  $k\in\N''(\varepsilon)$. We set $n_k := \floor{d_k/\varepsilon}$, where $\floor{.}$ is the usual floor function,  and then divide the interval $\Delta_k=[x_{k-1},x_k]$ into disjoint  subintervals $\Delta_k^{(j)},$  $j\in\{1,\dots,n_k\}$,
such that
   \begin{equation}\label{3.8}
\Delta_k=\bigcup_{j=1}^{n_k}\Delta_k^{(j)},\qquad \begin{cases}
|\Delta_k^{(j)}|=\varepsilon,&   j\le n_k-1,\\
 |\Delta_k^{(n_k)}|=\varepsilon_k , & \varepsilon_k\in [\varepsilon , 2\varepsilon).
\end{cases}
   \end{equation}
Since $f$ is continuous on $\Delta_k^{(j)}$, there exists  $x_k^{(j)}\in\Delta_k^{(j)}$ such that
        \begin{equation}\label{3.9}
|f(x_k^{(j)})|^2  = \frac{\int_{\Delta_k^{(j)}}q(x)|f(x)|^2 dx}{\int_{\Delta_k^{(j)}}q(x)dx},\quad j\in\{1,\ldots,n_k\}.
        \end{equation}
Then integrating inequality \eqref{3.3A}  with $y=x_k^{(j)}$ over $\Delta_k^{(j)}$ and using  \eqref{3.8}  and  \eqref{3.9}, we obtain
     \begin{equation}\label{3.10}
\int_{\Delta_k^{(j)}}|f(x)|^2 dx  \le  2\varepsilon \frac{\int_{\Delta_k^{(j)}}q(x)|f(x)|^2 dx}{\int_{\Delta_k^{(j)}}q(x)dx}
+\  2\varepsilon^2  \|f\|^2_{W^{1,2}(\Delta_k^{(j)})}, \quad j\le n_k-1.
  \end{equation}
and
     \begin{equation}\label{3.10A}
\int_{\Delta_k^{(j)}}|f(x)|^2 dx  \le  4\varepsilon \frac{\int_{\Delta_k^{(j)}}q(x)|f(x)|^2 dx}{\int_{\Delta_k^{(j)}}q(x)dx}
+\  8\varepsilon^2  \|f\|^2_{W^{1,2}(\Delta_k^{(j)})}, \qquad j=n_k.
  \end{equation}

Now noting that $q$ satisfies \eqref{4.1}, we can find $p''=p''(\varepsilon)>0$ such that
\be\label{3.11}
\int_x^{x+\varepsilon}q(x)dx>1,\qquad x\ge p''.
\ee
Combining  \eqref{3.10}, \eqref{3.10A}  with \eqref{3.11}, we get for $k\in \N''_{p''}(\varepsilon):=\{n\in \N''(\varepsilon):\, n\ge p''\}$
      \begin{align}
\int_{\Delta_{k}}|f(x)|^2dx  &\le 4\varepsilon\sum^{n_k}_{j=1}\int_{\Delta_k^{(j)}}q(x)|f(x)|^2dx + 8\varepsilon^2 \sum^{n_k}_{j=1}\|f\|^2_{W^{1,2}(\Delta_k^{(j)})}  \nonumber \\
&= 4\varepsilon\int_{\Delta_{k}}q(x)|f(x)|^2dx + 8\varepsilon^2 \|f\|^2_{W^{1,2}(\Delta_k)} \le 8\varepsilon\|f\|^2_{W^{1,2}(\Delta_k;q)}.\label{3.12}
      \end{align}

$(i_3)$ Setting $p = p(\varepsilon) := \max \{p'(\varepsilon),p''(\varepsilon)\}$ and summing up the inequalities  \eqref{3.7}  and   \eqref{3.12}, we
arrive at the following  estimate 
     \begin{align}
\int_{x_{p-1}}^\infty |f(x)|^2dx  &\le \sum_{k=p}^\infty\int_{\Delta_{k}}|f(x)|^2dx \nonumber\\
&\le 8\varepsilon\sum^\infty_{k=p} \Big(\int_{\Delta_{k}}q(x)|f(x)|^2dx + \varepsilon \|f\|^2_{W^{1,2}(\Delta_k)} \Big)   \label{3.13}\\
&\le 8C_1 \varepsilon\|f\|^2_{W^{1,2}(\R_+\setminus X;q)} \le 8C_1 \varepsilon. \nonumber
     \end{align}
Thus the tails  $\int^{\infty}_{x_p}|f(t)|^2 dt$ are uniformly  small for $f$ belonging to the unit ball  $\mathbb{U}_{X, q}$  and hence the set
$i(U_{X,q})$ is compact in $L^2(\R_+)$. Therefore, the
embedding  $i: W^{1,2}(\R_+\setminus X; q)\hookrightarrow
L^2(\R_+)$ is compact  and, by the Rellieh theorem  (Theorem
\ref{th_Rel}), the spectrum $\sigma(\rH_{X,q}^N)$ is discrete.
  \end{proof}
        \begin{corollary}\label{cor4.3}
Let  $0<\gd_*\le \gd^*<\infty$ and  $q\in L^1_{\loc}(\R_+)$.
 Let also $q$ satisfy \eqref{eq:4.8}.  Then the operator $\rH_{X,q}^N$ has purely discrete spectrum if and only if $q$ satisfies Molchanov's condition \eqref{4.1}.
  \end{corollary}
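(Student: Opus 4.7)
The plan is to derive the corollary as a direct consequence of Theorem \ref{th:discr_N}. Since $q$ satisfies \eqref{eq:4.8}, Corollary \ref{cor:3.5} guarantees that $\rH_{X,q}^N$ is lower semibounded, so the hypotheses of Theorem \ref{th:discr_N} are met. Necessity of Molchanov's condition \eqref{4.1} is then immediate from that theorem.

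For sufficiency, I need to show that, under the extra assumption $0<\gd_*\le\gd^*<\infty$, Molchanov's condition \eqref{4.1} already implies the interval-averaged condition \eqref{eq:4.9}; the desired discreteness will then follow from Theorem \ref{th:discr_N}. To this end, I would apply \eqref{4.1} with $\varepsilon:=\gd_*$. Since $\gd_k\ge\gd_*$, the interval $[x_{k-1},x_{k-1}+\gd_*]$ is contained in $\Delta_k$, so I would split
\[
\int_{\Delta_k}q(x)\,dx = \int_{x_{k-1}}^{x_{k-1}+\gd_*}q(x)\,dx + \int_{x_{k-1}+\gd_*}^{x_k}q(x)\,dx.
\]
The first summand tends to $+\infty$ as $k\to\infty$ by \eqref{4.1}, while the tail integral is controlled from below using \eqref{eq:4.8}, namely
\[
\int_{x_{k-1}+\gd_*}^{x_k}q(x)\,dx \;\ge\; -\int_{\Delta_k}q_-(x)\,dx \;\ge\; -C_0\gd_k \;\ge\; -C_0\gd^*.
\]
Dividing by $\gd_k\le\gd^*$ then yields $\frac{1}{\gd_k}\int_{\Delta_k}q(x)\,dx\to+\infty$, which is exactly \eqref{eq:4.9}. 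An application of Theorem \ref{th:discr_N} completes the argument.

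The deduction is essentially routine; there is no real obstacle. The only point that requires attention is the scale-matching between the fixed-length windows that appear in Molchanov's condition and the variable intervals $\Delta_k$ appearing in \eqref{eq:4.9}, and the hypothesis $0<\gd_*\le\gd^*<\infty$ is precisely what bridges these two scales.
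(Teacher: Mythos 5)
Your proposal is correct and follows essentially the same route as the paper: both deduce the corollary from Theorem \ref{th:discr_N} by showing that, when $0<\gd_*$, Molchanov's condition \eqref{4.1} applied on a window of fixed length at most $\gd_*$ forces \eqref{eq:4.9}. The only cosmetic difference is that the paper first reduces to nonnegative $q$ and uses monotonicity of the integral, whereas you keep $q$ general and absorb the tail of $\int_{\Delta_k}q$ via the bound $\int_{\Delta_k}q_-\le C_0\gd_k$ from \eqref{eq:4.8}; both are valid.
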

\begin{proof}
Since $q$ satisfies \eqref{eq:4.8}, it suffices to consider the case
of a nonnegative $q$, $q=q_+$.  Let $\varepsilon\in(0,d_*)$. Then
   \[
\frac{1}{\gd_k} \int_{\Delta_k}q(x) dx=\frac{1}{\gd_k} \int_{\Delta_k}q_+(x) dx
 \ge\frac{1}{d_k}\int^{x_{k-1}+\varepsilon}_{x_{k-1}}q(x)dx \ge
\frac{1}{\gd^*} \int^{x_{k-1}+\varepsilon}_{x_{k-1}}q(x)dx
\]
for all $k\in\N$. This inequality shows that \eqref{eq:4.9} is implied by \eqref{4.1}.
It remains to apply Theorem \ref{th:discr_N}.
     \end{proof}
        \begin{corollary}\label{cor4.4}
Let $q\in L^1_{\loc}(\R_+)$ and  
satisfy condition  \eqref{eq:4.8}.  If $\gd_k \to 0$, then condition \eqref{eq:4.9}
is necessary and sufficient   for the operator $\rH_{X,q}^N$
 to have purely  discrete spectrum.
  \end{corollary}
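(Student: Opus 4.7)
My plan is to deduce the corollary directly from Theorem \ref{th:discr_N}. First I note that $\gd_k\to 0$ forces $\gd^*<\infty$, and by Corollary \ref{cor:3.5} condition \eqref{eq:4.8} ensures $\rH_{X,q}^N$ is lower semibounded, so Theorem \ref{th:discr_N} is applicable and characterizes discreteness of $\rH_{X,q}^N$ as the conjunction of Molchanov's condition \eqref{4.1} and condition \eqref{eq:4.9}. Consequently, \eqref{eq:4.9} is automatically necessary for discreteness, and my task reduces to a single implication: under the extra hypothesis $\gd_k\to 0$, condition \eqref{eq:4.9} already forces \eqref{4.1}. Sufficiency in the corollary will then follow from Theorem \ref{th:discr_N}.

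To verify the implication \eqref{eq:4.9} $\Rightarrow$ \eqref{4.1}, I would fix $\varepsilon>0$ and an arbitrary $M>0$. Using \eqref{eq:4.9} I pick $N$ so that $\int_{\Delta_k}q(t)\,dt\ge(2M/\varepsilon)\gd_k$ for all $k\ge N$, and using $\gd_k\to 0$ I pick $x_0$ large enough that every $\Delta_k$ meeting $[x_0,\infty)$ has length smaller than $\varepsilon/4$. For $x\ge x_0$ the family $J(x):=\{k:\Delta_k\subset[x,x+\varepsilon]\}$ consists entirely of indices exceeding $N$, and $\bigcup_{k\in J(x)}\Delta_k$ exhausts $[x,x+\varepsilon]$ up to at most two boundary intervals whose total measure is at most $\varepsilon/2$. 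Summing the pointwise lower bound over $J(x)$ yields $\sum_{k\in J(x)}\int_{\Delta_k}q(t)\,dt\ge M$.

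The remaining contribution to $\int_x^{x+\varepsilon}q(t)dt$ comes from the two boundary intervals, and \eqref{eq:4.8} bounds $\int_{\Delta_k}q_-\le C_0\gd_k\le C_0\varepsilon/4$ for each of them, so their joint contribution is at least $-C_0\varepsilon/2$. Combining, $\int_x^{x+\varepsilon}q(t)dt\ge M-C_0\varepsilon/2$ for every $x\ge x_0$, and since $M$ is arbitrary, Molchanov's condition \eqref{4.1} holds. Theorem \ref{th:discr_N} then closes the corollary.

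The only delicate point is the treatment of the two boundary $\Delta_k$'s that overlap $[x,x+\varepsilon]$ without being contained in it: without the hypothesis $\gd_k\to 0$ their negative mass could dominate and ruin the estimate, but the hypothesis combined with the Sobolev-type bound \eqref{eq:4.8} makes their contribution uniformly $O(\varepsilon)$. Every other step is an immediate appeal to Theorem \ref{th:discr_N} or to elementary measure-theoretic bookkeeping.
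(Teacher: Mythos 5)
Your proof is correct and follows essentially the same route as the paper: both reduce the corollary, via Theorem \ref{th:discr_N}, to the single implication \eqref{eq:4.9} $\Rightarrow$ \eqref{4.1} and prove it by summing the lower bound $\int_{\Delta_k}q \ge N\gd_k$ over the intervals $\Delta_k\subseteq[x,x+\varepsilon]$, which cover all but an $O(\varepsilon)$ portion of $[x,x+\varepsilon]$ once $\gd_k\to 0$. The only (cosmetic) difference is that the paper first reduces to nonnegative $q$ so that the two boundary intervals can simply be discarded, whereas you keep general $q$ and control their negative contribution directly through \eqref{eq:4.8}; both are valid.
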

  \begin{proof}
By Theorem \ref{th_discretcriter},  it suffices to show  that \eqref{eq:4.9} implies \eqref{4.1} if $\gd_k\to 0$.
Since $q$  satisfies condition  \eqref{eq:4.8}, we can restrict ourselves to the case
of a nonnegative $q$, $q=q_+$.

It follows from \eqref{eq:4.9} that  for any $N\in\N$ there exists $p_1 = p_1(N)$ such that
    \begin{equation}\label{3.30}
\frac{1}{\gd_k}\int_{\Delta_k} q(x)dx > N, \quad  k\ge p_1.
   \end{equation}
Fix $\varepsilon>0$. Since $d_k \to 0$, there exists $p_2= p_2(\varepsilon)$ such that
    \begin{equation}\label{3.31}
d_k\le \frac{\varepsilon}{3},\quad   k\ge p_2.
  \end{equation}
Let $p :=\max(p_1, p_2)$ and  let $x>x_p$.
%
%
%
%
Using \eqref{3.30},  \eqref{3.31} and the  non-negativity of $q$, we get 
   \begin{align*}
\int^{x+ \varepsilon}_x q(t)dt \ge &\sum_{k:\, \Delta_k\subseteq [x,x+\varepsilon]} \int_{\Delta_k} q(t)dt   \nonumber \\
&\ge
\sum_{k:\, \Delta_k\subseteq [x,x+\varepsilon]} Nd_k\ge  N\Big(\varepsilon - 2\frac{\varepsilon}{3}\Big) = N\frac{\varepsilon}{3},  \qquad  x\ge x_p.
\end{align*}
The latter implies that $q$ satisfies Molchanov's condition \eqref{4.1} since $N$ is arbitrary.
\end{proof}

The next examples show that in the case $\gd_*=0$ conditions \eqref{4.1} and \eqref{eq:4.9} complement each other, that is, neither \eqref{4.1} does imply \eqref{eq:4.9}, nor \eqref{eq:4.9} does imply \eqref{4.1} if $\gd_*=0$.
\begin{example}\label{example4.4} $(i)$  Let  $X=\{x_k\}_1^\infty$, where
\[
x_k=\begin{cases}
j, & k=2j-1,\\
j+\frac{1}{j}, & k=2j.
\end{cases}
\]
Let $q$ be given by
\[
q(x)=\begin{cases}
x, & x\in \cup_{j=1}^\infty[x_{2j-2},x_{2j-1}],\\
0, & x\in \cup_{j=1}^\infty (x_{2j-1},x_{2j}).
\end{cases}
\]
Clearly, $\int_{x_{2j-1}}^{x_{2j}}q(x)dx=0$ for all $j\in \N$ and hence $q$ does not satisfy \eqref{eq:4.9}. However, for any $\varepsilon>0$
\[
\lim_{x\to\infty}\frac{1}{x}\int_x^{x+\varepsilon} q(x)dx =\varepsilon,
\]
which yields  Molchanov's condition  \eqref{4.1}.

\item $(ii)$  On the other hand, let $X=\N$ and let $q$ be given by
\[
q(x)=\begin{cases}
2x, & x\in \cup_{n\in\N}[n-1,n-\frac12),\\
0, & x\in \cup_{n\in\N}[n-\frac12,n)\,.
\end{cases}
\]
Clearly, $\gd_k=k-(k-1)=1$, $k\in\N$,  and
\[
\frac{1}{\gd_k}\int_{\Delta_k}q(x)dx=\int_{k-1}^{k-\frac12}2xdx=k-\frac34\to +\infty, \quad k\to \infty\,.
\]
Therefore, $q$ satisfies \eqref{eq:4.9}. However, $q$ does not satisfy \eqref{4.1} whenever $\varepsilon<\frac{1}{2}$.
           \end{example}

\subsection{Sufficient conditions}\label{ss:3.3}
Now we are in position to prove the sufficient part of Theorem \ref{th_discretcriter}.

\begin{proof}[Proof of Theorem \ref{th_discretcriter}. Sufficiency.]
 If $q$ and $\gB$ satisfy \eqref{eq:b_klmn}, then, by Proposition \ref{cor:3.3}(ii), $\dom(\gt_{X,\gB,q})=\dom(\gt_{X,\gB_+,q_+})$ algebraically and topologically.  Moreover, by Lemma \ref{lem_III.3},
 $\dom(\gt_{X,\gB_+,q_+})$ (and hence $\dom(\gt_{X,\gB,q})$) is continuously embedded into $W^{1,2}(\R_+\setminus X;q_+)$. By Theorem \ref{th:discr_N}, the spectrum of $\rH_{X,q_+}^N$ is discrete and hence, by Theorem \ref{th_Rel}, $W^{1,2}(\R_+\setminus X,q)$ is compactly embedded into $L^2(\R_+)$. Therefore, we conclude that $\gH_{X,\gB,q}=\dom(\gt_{X,\gB,q})$ is compactly embedded into $L^2(\R_+)$. Theorem \ref{th_Rel} completes the proof.
 \end{proof}


     \begin{corollary}\label{cor4.1}
Assume that $q\in L^1_{\loc}(\R_+)$ and
   \begin{equation}\label{4.15}
q(x)\to\infty \quad  \text{as} \quad  x\to+\infty.
  \end{equation}
Assume also that $d^*<\infty$ and $\gB$ satisfies \eqref{I_brinck}.
Then the Hamiltonian  $H_{X,\beta,q}$ is self-adjoint and its spectrum is purely discrete.
   \end{corollary}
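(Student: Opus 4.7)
The plan is to reduce the corollary to a direct application of Theorem \ref{th_discretcriter}(ii) together with Proposition \ref{cor:3.3}(ii), once we verify that the hypothesis $q(x)\to\infty$ is strong enough to imply all conditions needed.

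First, I would verify that $q$ satisfies the first condition in \eqref{I_brinck}, namely $\sup_k \gd_k^{-1}\int_{\Delta_k}q_-(x)\,dx<\infty$. Since $q(x)\to\infty$, there exists $x_0>0$ such that $q(x)\ge 0$, and hence $q_-(x)=0$, for all $x\ge x_0$. For indices $k$ with $x_{k-1}\ge x_0$ the ratio is therefore zero; only finitely many $k$ satisfy $x_{k-1}<x_0$, and for each of these $\gd_k>0$ and $\int_{\Delta_k}q_-(x)\,dx\le \int_0^{x_0+d^*}q_-(x)\,dx<\infty$ by local integrability of $q$, so the supremum is finite. Together with the standing assumption on $\gB$, both conditions of \eqref{I_brinck} hold.

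Next, self-adjointness of $\rH_{X,\gB,q}$ follows directly from Proposition \ref{cor:3.3}(ii): under \eqref{I_brinck} with $d^*<\infty$, the form $\gt_{X,\gB,q}$ is lower semibounded and closed, and the associated self-adjoint operator is precisely $\rH_{X,\gB,q}$.

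For the discreteness I would verify the two hypotheses \eqref{Intro2.1} and \eqref{Intro2.2B} of Theorem \ref{th_discretcriter}(ii). Given any $M>0$, pick $x_M$ so that $q(x)\ge M$ for $x\ge x_M$. Then for any $\varepsilon>0$ and $x\ge x_M$ one has $\int_x^{x+\varepsilon}q(t)\,dt\ge M\varepsilon$, which yields Molchanov's condition \eqref{Intro2.1} since $M$ is arbitrary. For $k$ with $x_{k-1}\ge x_M$ we similarly get $\gd_k^{-1}\int_{\Delta_k}q(x)\,dx\ge M$, giving \eqref{Intro2.2B}. Applying Theorem \ref{th_discretcriter}(ii) then yields that $\sigma(\rH_{X,\gB,q})$ is discrete.

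There is no substantial obstacle here; the only mildly delicate step is checking the first condition in \eqref{I_brinck} for the finitely many initial intervals where $\gd_k$ could in principle be small, but this is handled trivially since only a finite number of ratios are involved and each is finite.
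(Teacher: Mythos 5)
Your proposal is correct and follows essentially the same route as the paper: deduce lower semiboundedness (hence self-adjointness) from \eqref{I_brinck} via Proposition \ref{cor:3.3}, observe that $q(x)\to\infty$ yields both \eqref{Intro2.1} and \eqref{Intro2.2B}, and invoke Theorem \ref{th_discretcriter}(ii). You merely spell out in more detail the (easy) verification that $q(x)\to\infty$ together with local integrability forces the first condition in \eqref{I_brinck}, which the paper leaves implicit.
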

   \begin{proof}
Clearly condition \eqref{4.15} yields both condition \eqref{4.1} and \eqref{eq:4.9}.
Moreover, since $\gB$ satisfies \eqref{eq:4.8}, we conclude that the operator $\rH_{X,\gB,q}$ is lower semibounded. Theorem \ref{th_discretcriter} completes the proof.
  \end{proof}

  \begin{corollary}\label{cor3.9}
Assume that $0<\gd_*\le \gd^*<+\infty$ and $\inf_k\big(1/\gB_k\big)^-<\infty$. Assume also that $q\in L^1_{\loc}(\R_+)$ and $q$ satisfies \eqref{eq:4.8}. Then the operator $\rH_{X,\gB,q}$  has purely discrete spectrum if and only if $q$ satisfies Molchanov's condition \eqref{4.1}.
  \end{corollary}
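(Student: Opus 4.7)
The plan is to sandwich the energy space $\gH_{X,\gB,q}$ between two spaces whose embedding behavior in $L^2(\R_+)$ is already under control, and thereby reduce the claim to the Neumann discreteness criterion of Corollary \ref{cor4.3}. Reading the hypothesis $\inf_k(\gB_k^{-1})^-<\infty$ as the evidently intended $\sup_k(\gB_k^{-1})^-<\infty$ (the infimum of a nonnegative sequence is automatically finite), the bound $\gd_*>0$ would give
\[
(\gB_k^{-1})^- \le C \le (C/\gd_*)\cdot\min\{\gd_k,\gd_{k+1}\},\qquad k\in\N,
\]
so that the second inequality in \eqref{eq:b_klmn} holds; the first is just \eqref{eq:4.8}. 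Proposition \ref{cor:3.3} would then certify that $\gt_{X,\gB,q}$ is lower semibounded and closed, that $\rH_{X,\gB,q}=\rH_{X,\gB,q}^*$, and that
\[
\gH_{X,\gB,q} \;=\; W^{1,2}(\R_+\setminus X;\gB^+,q_+)
\]
algebraically and topologically.

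For the necessity direction, since $q$ satisfies \eqref{eq:4.8} (hence Brinck's bound \eqref{eq:4.2}) and $\rH_{X,\gB,q}$ has just been shown lower semibounded, I would invoke Proposition \ref{prop4.5} verbatim: failure of Molchanov's condition \eqref{4.1} implies failure of discreteness.

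For sufficiency, assume \eqref{4.1}. The restriction $\gd_*>0$ puts us squarely in the scope of Corollary \ref{cor4.3}, which applied to $q_+$ (which satisfies both \eqref{eq:4.8} and \eqref{4.1} trivially) yields purely discrete spectrum of $\rH_{X,q_+}^N$; equivalently, by Theorem \ref{th_Rel}, $W^{1,2}(\R_+\setminus X;q_+)$ embeds compactly into $L^2(\R_+)$. Because the $\gb_X^+$ term in the energy norm of $\gH_{X,\gB,q}$ is nonnegative, dropping it produces an obvious continuous embedding $\gH_{X,\gB,q}\hookrightarrow W^{1,2}(\R_+\setminus X;q_+)$. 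Composing the two embeddings gives a compact embedding $\gH_{X,\gB,q}\hookrightarrow L^2(\R_+)$, and one more application of Theorem \ref{th_Rel} would deliver the discreteness of $\sigma(\rH_{X,\gB,q})$.

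I do not anticipate a genuine obstacle: the entire proof is a short concatenation of previously established tools, and the only step requiring a moment's care is the domination $\|\cdot\|_{W^{1,2}(\R_+\setminus X;q_+)}\le \|\cdot\|_{\gH_{X,\gB,q}}$ in the sufficiency argument, which is immediate from the nonnegativity of $\gb_X^+$.
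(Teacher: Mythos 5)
Your proof is correct and follows essentially the same route as the paper: the paper's own (two-line) proof of this corollary merely observes that under $\gd_*>0$ and \eqref{eq:4.8} Molchanov's condition \eqref{4.1} and the averaged condition \eqref{eq:4.9} coincide and then invokes Theorem \ref{th_discretcriter}, whose proof consists of exactly the ingredients you have inlined --- Proposition \ref{prop4.5} for necessity, and for sufficiency the compact-embedding sandwich through $W^{1,2}(\R_+\setminus X;q_+)$ via Proposition \ref{cor:3.3}, Lemma \ref{lem_III.3} and the Rellich criterion (Theorem \ref{th_Rel}). Your reading of $\inf_k(\gB_k^{-1})^-<\infty$ as $\sup_k(\gB_k^{-1})^-<\infty$ is indeed the intended one, since that is what makes the second condition in \eqref{I_brinck} hold once $\gd_*>0$.
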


  \begin{proof}
 Since $0<\gd_*\le \gd^*<+\infty$ and $q_-$ satisfies \eqref{eq:4.8}, by Corollary \ref{cor4.3}, conditions \eqref{4.1}and \eqref{eq:4.9} coincide. Therefore, Molchanov's condition becomes necessary and sufficient for the discreteness.
  \end{proof}

   \begin{remark}\label{rem4.7}
 In the case  $q\in L^\infty(\R_+)$, Hamiltonians
 $\rH_{X,\gB,q}$ with discrete spectrum have been investigated
 in the recent publications \cite{KosMal09}, \cite{KosMal10}.
  It is shown in \cite[Theorem 6.8]{KosMal09}
that $\sigma(\rH_{X,\gB,q})$ is discrete if and only if $d_n\to 0$ and the
spectrum of a certain Jacobi matrix is discrete. Moreover, the corresponding matrix can be considered as a Krein--Stieltjes string operator. Hence, using the Kac--Krein discreteness criterion \cite{KK71}, several necessary and sufficient discreteness conditions have been obtained in \cite{KosMal09}.
Although Theorem  \ref{th_discretcriter} gives an affirmative answer
for semibounded Hamiltonians $\rH_{X,\gB,q}$, it does not cover the results
from \cite[Section 6.4]{KosMal09}, since the Hamiltonians in \cite{KosMal09} are not
assumed to be lower semibounded. For instance, it was shown in \cite[Propositions 6.9]{KosMal09} that 
the spectrum of $\rH_{X,\gB,q}$ is not discrete if there is $C>0$ such that
\[
\gB_k^-\ge C\left(\frac{1}{\gd_k}+\frac{1}{\gd_{k+1}}\right),\qquad x_k\in X_-.
\]
On the other hand,  if
\[
\gB_k\ge -\gd_k, \quad k\in\N,
\]
then, by \cite[Propositions 6.11]{KosMal09}, the spectrum of $\rH_{X,\gB,q}$ is discrete precisely if
\[
\lim_{k\to \infty}x_k\sum_{i=k}^\infty \gd_j^3=0 \quad \text{and}\quad \lim_{k\to \infty}x_k\sum_{i=k}^{\infty}(\gB_i+\gd_i)=0.
\]
\end{remark}

%
%
\section{Essential spectrum }\label{sec:cs}
\subsection{Two lemmas}\label{ss:5.1}
In this section we shall present and prove two preliminary lemmas, which may also be of independent interest.

    \begin{lemma}\label{lem5.1}
Let $d^*<\infty$ and the potential $q$ satisfy \eqref{eq:4.8}, that is
\begin{equation}\label{5.8}
\sup_{k\in \N}\frac{1}{d_k}\int_{\Delta_k}q_-(x)dx  < \infty.
    \end{equation}

 Then the mapping $i_{X,\beta}:\   W^{1,2}(\R_+\setminus X;q) \to \ell^2(\N;
|\beta|^{-1})$ given by
    \begin{equation}\label{5.1}
 (i_{X,\beta}f)_k :=
f(x_k+)-f(x_k-),\quad k\in\N,
  \end{equation}
is compact provided  that
    \begin{equation}\label{5.2}
\lim_{k\to\infty} \frac{|\beta_k|^{-1}}{\min\{d_k, d_{k+1}\}} =0.
        \end{equation}

If  additionally
  \begin{equation}\label{5.2A}
\sup_{k\in \N}\frac{1}{d_k}\int_{\Delta_k}|q(x)|dx =: C_0 < \infty,
   \end{equation}
then condition \eqref{5.2} is also necessary for the mapping
$i_{X,\beta}$ to be compact.
          \end{lemma}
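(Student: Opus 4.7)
The plan is to prove sufficiency by approximating $i_{X,\beta}$ in operator norm by finite-rank truncations, and to prove necessity by constructing almost-normalized characteristic functions that tend weakly to zero but whose images do not.

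First I would reduce to the case $q\ge 0$. Since $q_-$ satisfies \eqref{5.8}, Corollary \ref{cor:3.5} identifies $W^{1,2}(\R_+\setminus X;q)$ with $W^{1,2}(\R_+\setminus X;q_+)$ algebraically and topologically, so one may replace $q$ by $q_+$ throughout. Next, for each $N\in\N$ introduce the finite-rank operator $i_N f:=\bigl(f(x_k+)-f(x_k-)\bigr)_{k\le N}$; these are automatically compact, so it suffices to show $\|i_{X,\beta}-i_N\|_{\mathrm{op}}\to 0$.

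For the tail, I would combine the elementary one-dimensional trace estimate
\[
|f(x_k\pm)|^2 \le \tfrac{2}{d_{k\pm}}\,\|f\|^2_{L^2(\Delta_{k\pm})}+2d_{k\pm}\|f'\|^2_{L^2(\Delta_{k\pm})}
\]
(obtained by integrating $|f(x_k\pm)-f(y)|^2\le d_{k\pm}\|f'\|^2_{L^2(\Delta_{k\pm})}$ over $y$ and using $\|f\|_{L^2(\Delta_{k\pm})}\|f'\|_{L^2(\Delta_{k\pm})}\le\ldots$) with $d^*<\infty$. Dividing by $|\beta_k|$ and bounding $|\beta_k|^{-1}/d_{k\pm}\le|\beta_k|^{-1}/\min\{d_k,d_{k+1}\}$ and $d_{k\pm}|\beta_k|^{-1}\le (d^*)^2\,|\beta_k|^{-1}/\min\{d_k,d_{k+1}\}$, then summing over $k>N$ (each $\Delta_j$ appears at most twice), one arrives at
\[
\sum_{k>N}\frac{|f(x_k+)-f(x_k-)|^2}{|\beta_k|} \le C\Bigl(\sup_{k>N}\frac{|\beta_k|^{-1}}{\min\{d_k,d_{k+1}\}}\Bigr)\|f\|^2_{W^{1,2}(\R_+\setminus X;q)},
\]
with $C$ depending only on $d^*$ and the constant in \eqref{5.8}. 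Hypothesis \eqref{5.2} makes the supremum tend to $0$ as $N\to\infty$, yielding $\|i_{X,\beta}-i_N\|_{\mathrm{op}}\to 0$.

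For the necessity under the additional assumption \eqref{5.2A}, suppose \eqref{5.2} fails: there exist $\varepsilon>0$ and $k_j\to\infty$ with $|\beta_{k_j}|^{-1}/\min\{d_{k_j},d_{k_j+1}\}\ge\varepsilon$. Let $I_j\in\{\Delta_{k_j},\Delta_{k_j+1}\}$ be whichever is shorter, and set $f_j:=|I_j|^{-1/2}\chi_{I_j}$. Then $f_j'\equiv 0$, $\|f_j\|_{L^2}=1$, and hypothesis \eqref{5.2A} gives $\int q|f_j|^2\,dx=|I_j|^{-1}\int_{I_j}|q|\,dx\le C_0$, so $\{f_j\}$ is uniformly bounded in $W^{1,2}(\R_+\setminus X;q)$. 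Since the supports $I_j$ are eventually pairwise disjoint and escape to $+\infty$, $f_j$ converges weakly to $0$ in $W^{1,2}(\R_+\setminus X;q)$ (test against $g$ in a dense set of compactly supported functions and use uniform boundedness). However, $f_j$ has a jump of size $|I_j|^{-1/2}$ at $x_{k_j}$, so
\[
\|i_{X,\beta}f_j\|^2_{\ell^2(|\beta|^{-1})}\ge \frac{1}{|\beta_{k_j}|\,|I_j|}=\frac{|\beta_{k_j}|^{-1}}{\min\{d_{k_j},d_{k_j+1}\}}\ge\varepsilon.
\]
If $i_{X,\beta}$ were compact it would send weakly convergent sequences to norm-convergent ones, forcing $\|i_{X,\beta}f_j\|\to 0$ -- a contradiction.

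The main technical step is the trace bound in the tail estimate: the asymmetry between the two intervals adjacent to $x_k$ has to be absorbed into the single quantity $\min\{d_k,d_{k+1}\}$, which is exactly where $d^*<\infty$ is used. Once this bound is in hand, both halves of the lemma follow by routine compact-operator arguments.
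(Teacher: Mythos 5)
Your proposal is correct and follows essentially the same route as the paper's proof: reduction to nonnegative $q$ via Corollary \ref{cor:3.5}, finite-rank truncation plus a Sobolev trace estimate on the two intervals adjacent to $x_k$ (absorbing the asymmetry into $\min\{\gd_k,\gd_{k+1}\}$ using $\gd^*<\infty$) for sufficiency, and weakly null normalized indicator functions violating \eqref{5.2} for necessity. The only cosmetic difference is that the paper reformulates \eqref{5.2} as $\gd_k^{-1}/\min\{|\gB_{k-1}|,|\gB_k|\}\to 0$ and tests with $h_k=\gd_k^{-1/2}\chi_{\Delta_k}$, whereas you index by the interaction point and pick the shorter adjacent interval; these are equivalent.
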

   \begin{proof}
   Since $q$ satisfies \eqref{5.8},  Corollary \ref{cor:3.5} implies that
%
%
    \begin{equation}\label{5.13}
\gH_{X,q} = W^{1,2}(\R_+\setminus X; q) = W^{1,2}(\R_+\setminus X;
q_+) = \gH_{X, q_+},
        \end{equation}
which holds algebraically  and topologically.
Therefore, the mappings $i_{X,\beta}$ and
   \begin{equation}
 i_{X,\beta}^+: W^{1,2}(\R_+\setminus X;q_+)  \to \ell^2(\N; |\beta^{-1}|),
   \end{equation}
defined by the same formula \eqref{5.1}, are compact simultaneously. Moreover, due to \eqref{5.8}, $q_+$ and $q$
satisfy \eqref{5.2A} only simultaneously. Therefore, it suffices to prove Lemma \ref{lem5.1} for nonnegative potentials.

$(i)$ \emph{Sufficiency.} Without loss of generality we can assume
that $q\ge 1$. Let $\gt_{X,q}$ be the form given by \eqref{6.5B}--\eqref{6.5C}. 
If \eqref{5.2} is satisfied, then by Lemma \ref{lem_III.1}, the
mapping   $i_{X,\beta}$ is bounded. Consider the unit ball
in $W^{1,2}(\R_+\setminus X; q)$,
    \begin{equation}
\mathbb{U}_{X, q} :=  \left\{f\in  W^{1,2}(\R_+\setminus X;q): \ \gt_{X,q}[f]\le 1
\right\}.
    \end{equation}

Let us show that $i_{X, \beta}\bigl(\mathbb{U}_{X,q}\bigr)$ is
compact in $\ell^2(\N; |\beta|^{-1})$.
Notice that for any  $N\in \N$  the restriction of $i_{X,\gB}$ onto $\oplus^N_1 W^{1,2}[x_{k-1},x_k]$
%
%
%
is a bounded finite rank operator.
Hence $i_{X,\gB}$ is compact precisely if its restriction  to
$\oplus^\infty_{N+1} W^{1,2}[x_{k-1},x_k]$,  i.e., the mapping
    \begin{equation*}
i_{X, \beta}:\  \bigoplus^\infty_{N+1} W^{1,2}[x_{k-1},x_k] \to   \ell^2(\N; |\beta|^{-1})  
    \end{equation*}
is also compact. Therefore, it suffices to show that the tails
    \begin{equation}\label{5.4pr}
\sum_{k= N}^\infty \frac{ |f(x_k+)-f(x_k-)|^2}{|\beta_k|}, \qquad f\in
\mathbb{U}_{X,q},
    \end{equation}
of   $i_{X,\beta}(f)$  tend to zero 
uniformly in $f\in \mathbb{U}_{X,q}$.

  Applying inequality
\eqref{3.3} with $\varepsilon=\frac{1}{2}$, we get
   \begin{align}
&\frac{|f(x_k +) - f(x_k -)|^2}{|\beta_k|}  \le
2\frac{|f(x_k +)|^2 +| f(x_k -)|^2}{|\beta_k|} \nonumber \\
&\quad \le \frac{\gd_{k}\|f'\|^2_{L^2(\Delta_{k})} + C\gd^{-1}_{k}\|f\|^2_{L^2(\Delta_{k})}}{|\beta_k|}
+ \frac{\gd_{k+1}\|f'\|^2_{L^2(\Delta_{k+1})} + C\gd^{-1}_{k+1}\|f\|^2_{L^2(\Delta_{k+1})}}{|\beta_k|}\nonumber\\
&\qquad \quad \le \frac{{d^*}^2(\|f'\|^2_{L^2(\Delta_{k})}+\|f'\|^2_{L^2(\Delta_{k+1})}) + C(\|f\|^2_{L^2(\Delta_k)}+\|f\|^2_{L^2(\Delta_{k+1})})}{|\beta_k| \min\{\gd_k,\gd_{k+1}\}} \nonumber\\
&\qquad \quad \le C_1\frac{\|f\|^2_{W^{1,2}(\Delta_{k})} +
\|f\|^2_{W^{1,2}(\Delta_{k+1})}}{|\beta_k| \min\{\gd_k,\gd_{k+1}\}}.\label{5.3}
\end{align}
Here $C_1 := \max\{(d^*)^{2},  C\}$.
By \eqref{5.2}, for each $\varepsilon>0$ there exists $N=
N(\varepsilon) \in\N$ such that
   \begin{equation}\label{5.4}
\frac{1}{|\beta_k| \min\{\gd_k,\gd_{k+1}\}}<\varepsilon,\quad
k\ge N.
   \end{equation}
Combining \eqref{5.3} with \eqref{5.4}, we arrive at the desired estimate
    \begin{align*}
\sum^{\infty}_{k=N}\frac{|f(x_k +)-f(x_k -)|^2}{|\beta_k|} \le&
2\varepsilon C_1\sum^{\infty}_{k=N}\|f\|^2_{W^
{1,2}(\Delta_{k})} 
\\
\le& 2\varepsilon
C_1\|f\|^2_{W^{1,2}(\R_+\setminus X)}\le 2\varepsilon C_1.
      \end{align*}
%
%

\emph{Necessity}. Firstly, observe that condition \eqref{5.2} is equivalent to the following one
\begin{equation}\label{eq:5.2B}
\lim_{k\to\infty}\frac{\gd_k^{-1}}{\min\{|\gB_{k-1}|,|\gB_{k}|\}}=0.
\end{equation}
Assume the converse, i.e., conditions \eqref{5.2} and hence \eqref{eq:5.2B}
are violated. Then there exists $\varepsilon_0>0$ and a subsequence
$\{k_j\}^{\infty}_{j=1}$ such that
    \begin{equation}\label{5.5}
\frac{\gd_{k_j}^{-1}}{\min\{|\gB_{k_j-1}|,|\gB_{k_j}|\}}\ge \varepsilon_0,
\qquad j\in\N.
    \end{equation}

Again, without loss of generality we can assume that $q\ge 1$ on $\R_+$. Consider the family of indicator functions $\{h_{k}\}^{\infty}_{k=1}$ defined by
\eqref{4.9A}. Noting that $q$ satisfies \eqref{5.2A}, we get 
    \begin{equation}\label{5.6}
\|h_{k_j}\|^2_{W^{1,2}(\R_+\setminus X;q)} = \|h_{k_j}\|^2_{W^{1,2}(\Delta_k;q)}=
\frac{1}{d_{k_j}}\int_{\Delta_{k_j}}q(x)dx \le C_0,\quad
j\in\N.
   \end{equation}
On the other hand, taking \eqref{5.5} into account we get
    \begin{align}
\|i_{X,\beta}(h_{k_j})\|^2_{\ell^2(|\beta^{-1}|)} =
\frac{1}{\gd_{k_j}}&\sum^{\infty}_{m=1}\frac{|{h}_{k_j}(x_m+) - {h}_{k_j}(x_m -)|^2}{|\beta_m|}\nonumber \\
&=\frac{1}{|\beta_{k_j-1}| \gd_{k_j}} + \frac{1}{|\beta_{k_j }|
d_{k_j}} > \varepsilon, \qquad j\in\N.\label{5.7}
    \end{align}

Clearly, the system $\{h_{k_j}\}^{\infty}_{j=1}$ is orthogonal
in $W^{1,2}(\R_+\setminus X;q)$ and, by \eqref{5.6}, it is also
bounded.   Therefore it converges weakly to zero in $W^{1,2}(\R_+\setminus X; q)$. Hence, if the mapping $i_{X,\beta}$ is
compact, one has
$\|i_{X,\beta}(h_{k_j})\|_{\ell^2(|\beta^{-1}|)}\to 0$  as
$j\to\infty$. The latter  contradicts \eqref{5.7}.
      \end{proof}
%
%
%
%
\begin{lemma}\label{lem5.2}
Let $q\in L^1_{\loc}(\R_+)$  and  $d^*<\infty$.  Then the
embedding
    \begin{equation}
i_{X, q}:\   W^{1,2}(\R_+\setminus X) \to  L^2(\R_+;|q|),
\qquad  i_{X,q}(f)=f,
   \end{equation}
is compact if and only if
   \begin{equation}\label{5.20}
\frac{1}{d_k}\int_{\Delta_k}|q(x)|dx \to 0  \qquad \text{as}\qquad
k\to\infty.
   \end{equation}
\end{lemma}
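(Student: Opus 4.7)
The plan is to prove each direction directly, reusing the pointwise Sobolev bound \eqref{3.3} and the normalised indicator functions $h_k=d_k^{-1/2}\chi_{\Delta_k}$ from \eqref{4.9A}.

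For \textbf{sufficiency}, given $\eta>0$ I would pick $N=N(\eta)$ with $d_k^{-1}\int_{\Delta_k}|q|\,dx<\eta$ for all $k>N$. Applying \eqref{3.3} with $\varepsilon=1$ to bound $\|f\|_{C(\Delta_k)}^2$, multiplying by $\int_{\Delta_k}|q|\,dx$, and using $d^*<\infty$ to absorb the factor $d_k$, one obtains a uniform tail estimate of the form
\[
\sum_{k>N}\int_{\Delta_k}|f|^2|q|\,dx\le C_*(d^*)\,\eta\,\|f\|_{W^{1,2}(\R_+\setminus X)}^2 .
\]
On the remaining finite part $[0,x_N]$, the classical Rellich--Kondrachov theorem shows that each embedding $W^{1,2}(\Delta_k)\hookrightarrow L^2(\Delta_k;|q|)$ is compact (via the compact embedding $W^{1,2}(\Delta_k)\hookrightarrow C(\Delta_k)$ together with $|q|\in L^1(\Delta_k)$). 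A diagonal extraction over the finitely many indices $k\le N$, combined with the tail smallness, produces from any bounded sequence in $W^{1,2}(\R_+\setminus X)$ a Cauchy subsequence in $L^2(\R_+;|q|)$; by Theorem~\ref{th_Rel} this is exactly compactness of $i_{X,q}$.

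For \textbf{necessity}, suppose $a_k:=d_k^{-1}\int_{\Delta_k}|q|\,dx$ does not tend to zero, so $a_{k_j}\ge\varepsilon_0>0$ along some subsequence. The functions $h_{k_j}$ have pairwise disjoint supports and are locally constant on $\R_+\setminus X$, hence they form an orthonormal system in $W^{1,2}(\R_+\setminus X)$; by Bessel's inequality $h_{k_j}\rightharpoonup 0$ weakly in that Hilbert space. However $\|i_{X,q}(h_{k_j})\|_{L^2(\R_+;|q|)}^2=a_{k_j}\ge\varepsilon_0$ does not tend to zero, contradicting compactness of $i_{X,q}$. The whole argument mirrors the one in Lemma~\ref{lem5.1}, so no serious obstacle is anticipated; the only care point is to keep the constants in the tail estimate explicit enough that the prefactor of $\eta$ is independent of $f$ and $N$, which is precisely where the hypothesis $d^*<\infty$ enters.
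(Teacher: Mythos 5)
Your proof is correct and follows essentially the same route as the paper: the sufficiency part uses the pointwise Sobolev bound \eqref{3.3} with $\varepsilon=1$ to get the uniform tail estimate $\int_{\Delta_k}|q||f|^2\,dx\le C_1\bigl(d_k^{-1}\int_{\Delta_k}|q|\,dx\bigr)\|f\|^2_{W^{1,2}(\Delta_k)}$ (you are slightly more explicit than the paper about compactness on the finite part $[0,x_N]$, which the paper leaves implicit), and the necessity part is the identical test with the orthonormal family $h_{k_j}=d_{k_j}^{-1/2}\chi_{\Delta_{k_j}}$ weakly converging to zero. No gaps.
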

   \begin{proof}
\emph{Sufficiency.}  
To prove the compactness of the mapping $i_{X,q}$
it suffices to show that the unit ball $\mathbb{U}_{X}$ in $W^{1,2}(\R_+\setminus X)$
is compact in $L^2(\R_+;|q|)$.   In turn, it suffices to show
that the tails $\int^{\infty}_N|f|^2 dx$ are uniformly small in
$f\in \mathbb{U}_{X}$.

Setting $\varepsilon=1$ in \eqref{3.3}, we obtain
   \begin{equation}
|f(x)|^2 \le d_k\|f'\|^2_{L^2(\Delta_k)} +  Cd^{-1}_k\|f\|^2_{L^2(\Delta_k)},\quad f\in W^{1,2}(\Delta_k), \quad x\in\Delta_k.
\end{equation}
Multiplying  this inequality by $|q(x)|$ and then integrating over $\Delta_k$, we get
  \begin{align}
\int_{\Delta_{k}}|q(x)||f(x)|^2 dx 
\le &\Big(\frac{1}{d_k}\int_{\Delta_k}|q(x)|dx\Big)\Big( d^2_k \|f'\|^2_{L^2(\Delta_k)}
     +\  C  \|f\|^2_{L^2(\Delta_k)} \Big)\nonumber   \\
     &\le C_1 \frac{1}{d_k}\int_{\Delta_k}|q(x)|dx\cdot \|f\|^2_{W^{1,2}(\Delta_k)}, \quad k\in \N,\label{eq:5.18}
 \end{align}
 where $C_1 := \max\{C, {d^*}^2\}$.

Further, by \eqref{5.20}, for each $\varepsilon>0$
there exists $N=N(\varepsilon)\in\N$ such that
    \begin{equation}\label{5.23}
\frac{1}{d_k}\int_{\Delta_k}|q(x)|dx  <\varepsilon \qquad
\text{for}\qquad  k\ge N.
  \end{equation}
Combining \eqref{eq:5.18} with \eqref{5.23}, we obtain for $f\in \mathbb{U}_{X}$
      \begin{align}
\int^{\infty}_{x_{N}}|f(x)|^2|q(x)|dx &= \sum^{\infty}_{k=N+1}\Big(\frac{1}{d_k}\int_{\Delta_k}|q(x)|dx \cdot \|f\|^2_{W^{1,2}(\Delta_k)} \Big)\nonumber   \\
&\le \varepsilon \sum^{\infty}_{k=N+1} \|f\|^2_{W^{1,2}({\Delta_k})} \le \varepsilon\|f\|^2_{W^{1,2}(\R_+\setminus X)}  \le \varepsilon.
      \end{align}
%
%
%
%
This proves compactness of the set $i(\mathbb{U}_{X})$ in $L^2(\R_+;|q|)$.

          \emph{Necessity}.
Assume that the embedding $i_{X, q}:  W^{1,2}(\R_+\setminus X) \to
L^2(\R_+;|q|)$ is compact.   Assume also that
condition \eqref{5.20} is violated. Then there exists
$\varepsilon_0>0$ and a subsequence $\{k_j\}^{\infty}_{j=1}$ such
that
    \begin{equation}
\frac{1}{\gd_{k_j}}\int_{\Delta_{k_j}}|q(x)|dx \ge \varepsilon_0, \qquad
j\in\N.
    \end{equation}
Consider the family $\{h_{k_j}\}^{\infty}_{j=1}$ given by
\eqref{4.9A}.   Clearly,
     \be\label{5.25}
\|i_{X,q}({h}_{k_j})\|^2_{L^2(\R_+;|q|)} =
\|{h}_{k_j}\|^2_{L^2(\R_+;|q|)} =
\frac{1}{d_{k_j}}\int_{\Delta_{k_j}}|q(x)|dx \ge \varepsilon_0,
\quad j\in\N.
     \ee
Since the system  $\{h_{k_j}\}^{\infty}_{j=1}$  is orthonormal
in  $W^{1,2}(\R_+\setminus X)$,  it weakly converges to
zero in $W^{1,2}(\R_+\setminus X)$. Since the embedding
$i_{X,q}$ is compact, $\|i(h_{k_j})\|_{L^2(\R_+;|q|)}
=\|{h}_{k_j}\|_{L^2(\R_+;|q|)}\to 0$ as $j\to \infty$.
This contradiction completes the proof.
       \end{proof}
    \begin{remark}
According to Lemma \ref {lem5.2},  the embedding  $W^{1,2}(\R_+\setminus X) \hookrightarrow L^2(\R_+;|q|)$
is compact whenever \eqref{5.20} holds. Since the embedding $W^{1,2}(\R_+) \hookrightarrow W^{1,2}(\R_+\setminus X)$ is continuous, condition \eqref{5.20}  yields compactness of the embedding $W^{1,2}(\R_+) \hookrightarrow L^2(\R_+;|q|)$.
On the other hand,
according to the Birman  result \cite{Bir61} (see also \cite[Theorem II.2]{Gla65}), the embedding
$W^{1,2}(\R_+) \hookrightarrow L^2(\R_+;|q|)$  is compact  if and only if
     \begin{equation}\label{5.24}
\lim_{x\to\infty}\int^{x+1}_x|q(t)|dt=0.
     \end{equation}
Thus,  condition \eqref{5.20} implies  condition \eqref{5.24}.

Let us show  this fact by a direct proof.
For any $\varepsilon >0$ choose $N=N(\varepsilon)\in\N$ as in \eqref{5.23}. Then
%
%
    \begin{align*}
\int^{x+1}_x|q(t)|dt
 &\le\sum_{k:\, \Delta_k\cap[x,x+1]\neq\emptyset} \int_{\Delta_k}|q(t)|dt \\
 \le&\sum_{k:\, \Delta_k\cap[x,x+1]\neq\emptyset}\varepsilon d_k\le\varepsilon\bigl(1+2
d^*\bigr), \quad x\ge x_N,
    \end{align*}
which implies \eqref{5.24}.
   \end{remark}

\subsection{Essential spectrum of $\rH_{X,q}^N$}\label{ss:4.2}
As we already saw in the previous sections, spectral properties of the Hamiltonian $\rH_{X,\gB,q}$ are
closely related with those of the Neumann realization $\rH_{X,q}^N$.
In particular, their essential spectra are closely related too. In this section we collect
some results describing the spectrum of $\rH_{X,q}^N$.

We begin with the following simple fact.
  \begin{lemma}\label{lem:5.4}
Let the operator $\rH_{X,q}^N$ be given by \eqref{eq:H^N}--\eqref{eq:HkN2}.
Then $\sigma(\rH_{X,q}^N)$ is pure point and 
\be
\sigma(\rH_{X,q}^N)=\overline{\Big(\bigcup_{k=1}^\infty\sigma(\rH_{k,q}^N)\Big)},\quad
\sigma_{\ess}(\rH_{X,q}^N)=\sigma_{\ess}^\infty(\rH_{X,q}^N)\bigcup\Big(\bigcup_{k=1}^\infty\sigma(\rH_{k,q}^N)\Big)',
\ee
where $\sigma_{\ess}^\infty(\rH_{X,q}^N)$ is the set of eigenvalues having infinite multiplicity, i.e.,
\be
\sigma_{\ess}^\infty(\rH_{X,q}^N):=\{\lambda\in \sigma(\rH_{X,q}^N):\ \dim(\ker(\rH_{X,q}^N-\lambda I))=\infty\}.
\ee
  \end{lemma}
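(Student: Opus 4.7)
The plan is to exploit the orthogonal sum structure $\rH_{X,q}^N=\bigoplus_{k\in\N}\rH_{q,k}^N$ together with the regularity of each summand. Since each $\Delta_k=[x_{k-1},x_k]$ is a bounded interval and $q\in L^1(\Delta_k)$, the operator $\rH_{q,k}^N$ is a regular self-adjoint Sturm--Liouville operator with Neumann boundary conditions, hence has compact resolvent. In particular, $\sigma(\rH_{q,k}^N)$ consists of a real sequence of eigenvalues of finite multiplicity accumulating only at $+\infty$, and there exists an orthonormal basis $\{e_{k,j}\}_{j\in\N}$ of $L^2(\Delta_k)$ consisting of eigenfunctions of $\rH_{q,k}^N$.

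Concatenating these bases produces an orthonormal basis $\{e_{k,j}\}_{k,j\in\N}$ of $L^2(\R_+)=\bigoplus_k L^2(\Delta_k)$ consisting of eigenvectors of $\rH_{X,q}^N$, which immediately yields the first claim that $\sigma(\rH_{X,q}^N)$ is pure point (the spectral measure is purely atomic). A standard fact on orthogonal sums of self-adjoint operators then gives $\sigma(\rH_{X,q}^N)=\overline{\bigcup_{k}\sigma(\rH_{q,k}^N)}$: indeed, $z$ lies in the resolvent set of $\rH_{X,q}^N$ iff $\sup_k\|(\rH_{q,k}^N-z)^{-1}\|<\infty$, and this open set is precisely the complement of the closure of the union of the individual spectra.

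For the essential spectrum, I would use the definition \eqref{eq:ess} together with the identity $E_{\rH_{X,q}^N}(\Delta)=\bigoplus_k E_{\rH_{q,k}^N}(\Delta)$, valid for any Borel set $\Delta\subset\R$. This gives
\[
\Rank E_{\rH_{X,q}^N}(\lambda-\varepsilon,\lambda+\varepsilon)=\sum_{k=1}^{\infty}\Rank E_{\rH_{q,k}^N}(\lambda-\varepsilon,\lambda+\varepsilon).
\]
Each individual rank on the right is finite because $\sigma(\rH_{q,k}^N)$ is discrete. Hence the sum is infinite for every $\varepsilon>0$ if and only if infinitely many summands are strictly positive arbitrarily close to $\lambda$, which by discreteness of each summand occurs exactly when either $\lambda\in\sigma(\rH_{q,k}^N)$ for infinitely many $k$, or $\lambda$ is an accumulation point of $\bigcup_k\sigma(\rH_{q,k}^N)$.

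The step requiring the most care is the final dichotomy. One must separate, for small $\varepsilon$, the contribution of eigenvalues lying in a \emph{punctured} neighborhood $(\lambda-\varepsilon,\lambda+\varepsilon)\setminus\{\lambda\}$ from the contribution of $\lambda$ itself. The first alternative forces $\lambda\in\bigl(\bigcup_k\sigma(\rH_{q,k}^N)\bigr)'$, while the second, upon letting $\varepsilon\downarrow 0$, forces $\Dim\ker(\rH_{X,q}^N-\lambda I)=\sum_k\Dim\ker(\rH_{q,k}^N-\lambda I)=\infty$, i.e.\ $\lambda\in\sigma_{\ess}^{\infty}(\rH_{X,q}^N)$. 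Combining both possibilities yields the claimed decomposition.
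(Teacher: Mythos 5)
Your proposal is correct and follows exactly the route the paper takes: it exploits the orthogonal decomposition $\rH_{X,q}^N=\bigoplus_k\rH_{q,k}^N$ together with the fact that each regular Neumann realization $\rH_{q,k}^N$ has simple discrete spectrum, which the paper's one-line proof invokes and you merely spell out in detail (orthonormal eigenbasis, the standard identity $\sigma(\bigoplus_k A_k)=\overline{\bigcup_k\sigma(A_k)}$, and the count of $\Rank E(\lambda-\varepsilon,\lambda+\varepsilon)$ via the punctured-neighborhood dichotomy). No gaps; your careful separation of the contribution of $\lambda$ itself from that of nearby eigenvalues is precisely what makes the decomposition of $\sigma_{\ess}$ into $\sigma_{\ess}^\infty$ and the set of accumulation points rigorous.
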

\begin{proof}
Immediately follows from the definition \eqref{eq:H^N} of $\rH_{X,q}^N$ and the fact that for each $k\in \N$ the spectrum of $\rH_{X,q}^N$ is simple and discrete.
  \end{proof}
For a sequence $\{\gd_k\}_{k=1}^\infty$ we define the following set
\be\label{eq:d1}
\mathcal{D}:=\mathcal{D}^\infty\cup \big(\mathcal{D'}\setminus\{0\}\big),
\ee
where 
\be\label{eq:d2}
\mathcal{D}^\infty=\{d\in\R_+:\ d\ \text{appears infinitely  many times in}\ \{\gd_k\} \}.
  \ee
%

Introducing the multiplication operator  $\mathcal{M}_d:\ell^2(\N)\to \ell^2(\N)$,\
$(\mathcal{M}_df)_k=\gd_kf_k$, $k\in\N$,  we can rewrite the  definition of set $\mathcal{D}$  as follows
%
%
\be
\mathcal{D}=\sigma_{\ess}(\mathcal{M}_d)\setminus\{0\}.
\ee
Emphasize that $\mathcal{D}$ is a (not necessarily countable) subset of $[\gd_*, \gd^*]$.

%
%
%
    \begin{proposition}\label{th5.2}
Let $q\in L^1_{\loc}(\R_+)$ and $\gd^*<\infty$. If $q$ satisfies
condition \eqref{5.20}, then
\be
\sigma_{\ess}(\rH_{X,q}^N) = \sigma_{\ess}(\rH_{X}^N)=\{0\}\bigcup_{\lambda\in \mathcal{D}}\Big\{\frac{\pi^2n^2}{\lambda^2}\Big\}_{n=1}^\infty.
\ee
  \end{proposition}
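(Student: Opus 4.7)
My plan is to prove the result in two stages. The first stage reduces the problem to the case $q=\mathbf{0}$ via a form-compact perturbation argument; the second stage computes $\sigma_{\ess}(\rH_X^N)$ explicitly from Lemma~\ref{lem:5.4}.

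To establish $\sigma_{\ess}(\rH_{X,q}^N)=\sigma_{\ess}(\rH_X^N)$, I would treat $q$ as a form-compact perturbation of the Neumann form $\gt_X$. Condition \eqref{5.20} trivially gives $\sup_k d_k^{-1}\int_{\Delta_k}|q|\,dx<\infty$, so in particular $q_-$ satisfies the first condition in \eqref{III.6}; hence Corollary~\ref{cor:3.5} guarantees that $\gt_{X,q}$ is lower semibounded and that $\gH_{X,q}$ coincides with $W^{1,2}(\R_+\setminus X;q_+)$ algebraically and topologically. By Lemma~\ref{lem5.2}, condition \eqref{5.20} yields compactness of the embedding $W^{1,2}(\R_+\setminus X)\hookrightarrow L^2(\R_+;|q|)$, equivalently compactness of the map $f\mapsto|q|^{1/2}f$ from $\gH_X$ into $L^2(\R_+)$. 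The standard form version of Weyl's theorem then shows that $(\rH_X^N+\mu)^{-1}-(\rH_{X,q}^N+\mu)^{-1}$ is compact for $\mu$ large enough, from which $\sigma_{\ess}(\rH_{X,q}^N)=\sigma_{\ess}(\rH_X^N)$.

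For the second stage I would use Lemma~\ref{lem:5.4} together with the fact that $\sigma(\rH_k^N)=\{\pi^2 n^2/d_k^2:n\ge 0\}$. Since $0$ is an eigenvalue of every $\rH_k^N$ (constant functions), $0\in\sigma_{\ess}^\infty(\rH_X^N)$. For $\mu>0$, $\mu\in\sigma_{\ess}^\infty(\rH_X^N)$ iff $\mu d_k^2=\pi^2 n^2$ has infinitely many solutions $(n,k)$; the bound $d_k\le \gd^*<\infty$ forces $n$ to remain in a finite set, so some single $n$ produces infinitely many solutions, equivalently $\lambda:=n\pi/\sqrt\mu$ appears in $\{d_k\}$ infinitely often, i.e., $\lambda\in\mathcal D^\infty$. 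Similarly, a positive limit point $\mu$ of $\bigcup_k\sigma(\rH_k^N)$ arises from $\pi^2 n_j^2/d_{k_j}^2\to\mu$ with infinitely many distinct terms; again $d_{k_j}\le\gd^*$ rules out $n_j\to\infty$, so after passing to a subsequence I may assume $n_j=n$ is constant, whence $d_{k_j}\to n\pi/\sqrt\mu$ along a sequence with infinitely many distinct values, i.e., $n\pi/\sqrt\mu\in\mathcal D'\setminus\{0\}$. The reverse inclusions are immediate: any $\lambda\in\mathcal D^\infty$ produces $\pi^2 n^2/\lambda^2$ as an eigenvalue of infinitely many $\rH_k^N$, and any $\lambda\in\mathcal D'\setminus\{0\}$ produces $\pi^2 n^2/\lambda^2$ as a limit point of $\bigcup_k\sigma(\rH_k^N)$.

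The main obstacle is the bookkeeping in the second stage, namely the observation that the hypothesis $\gd^*<\infty$ is precisely what prevents positive limit points of $\bigcup_k\sigma(\rH_k^N)$ from being produced by $n_j\to\infty$ with $d_{k_j}$ bounded away from zero. The first stage is essentially routine once Lemma~\ref{lem5.2} provides the required compact embedding.
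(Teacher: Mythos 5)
Your proposal is correct and follows essentially the same route as the paper: condition \eqref{5.20} plus Lemma~\ref{lem5.2} makes the potential form $\gt_X$-compact (the paper handles the sign of $q$ via Lemma~\ref{lem3.2} and then cites Birman's Theorem~\ref{th2.2}, which is exactly the ``form version of Weyl's theorem'' you invoke), yielding $\sigma_{\ess}(\rH_{X,q}^N)=\sigma_{\ess}(\rH_X^N)$. Your second stage is simply a careful write-out, using $\gd^*<\infty$ to bound the mode index $n$, of the computation the paper compresses into the remark that $\sigma(\rH_k^N)=\{(\pi n/d_k)^2\}_{n\ge 0}$ combined with Lemma~\ref{lem:5.4}.
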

   \begin{proof}
   Recall that the form $\gt_X:=\gt_{X,0}$ associated with
the Hamiltonian  $\rH_X^N  = \bigoplus^{\infty}_{k=1}\rH^N_k$ is given
by \eqref{eq:H^N} with $q=\bold{0}$. Consider quadratic forms $\gq$ and
$|\gq|$ given by \eqref{q_form} with $q$ and $|q|$,  respectively.
%
%
%
%
%
%
%
%
By Lemma \ref{lem5.2},  the form $|\gq|$ is $\gt_X$-compact
since $\dom(\gt_X) =  W^{1,2}(\R_+\setminus X)$.

Further, note that  $|\gq[f]|
\le |\gq|[f]$ for each $f\in \dom(|\gq|)$ and hence $\dom(|\gq|)\subseteq \dom(\gq)$. Therefore, by Lemma
\ref{lem3.2}, the form $\gq$ is also $\frak t_X$-compact. Since
the operator $\rH_{X,q}^N$ is associated with the form $\gt_X +
\gq$,  Theorem \ref{th2.2} implies $\sigma_{\ess}(\rH_{X,q}^N) =
\sigma_{\ess}(\rH_{X}^N)$. To complete the proof it suffices to mention that the spectrum of $\rH_k^N$ is simple and is given by  $\sigma(\rH^N_k)= \{(\frac{\pi
n}{d_k})^2\}^{\infty}_{n=0}$.
   \end{proof}
\begin{corollary}\label{cor:5.6}
Let $q\in L^1_{\loc}(\R_+)$ and $\gd^*<\infty$. Assume also that $q$ satisfies
condition \eqref{5.20}. Then
\be\label{eq:5.26}
\sigma_{\ess}(\rH_{X,q}^N) = \{0\}
\ee
if and only if  $\lim_{k\to\infty} \gd_k =0$.
\end{corollary}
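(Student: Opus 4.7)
The plan is to deduce this corollary directly from Proposition \ref{th5.2}, which already identifies
\[
\sigma_{\ess}(\rH_{X,q}^N) = \{0\}\bigcup_{\lambda\in\mathcal{D}}\Big\{\frac{\pi^2 n^2}{\lambda^2}\Big\}_{n=1}^\infty
\]
under the standing hypothesis \eqref{5.20}. The whole question is therefore to determine when the right-hand side collapses to $\{0\}$, and this is purely a statement about the geometry of the sequence $\{\gd_k\}$.

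First I would note that each $\lambda\in\mathcal{D}$ is strictly positive (since $\mathcal{D}\subset [\gd_*,\gd^*]\setminus\{0\}$ by the definition \eqref{eq:d1}--\eqref{eq:d2}), so every $\lambda\in\mathcal{D}$ contributes the positive number $\pi^2/\lambda^2$ to the essential spectrum. Hence $\sigma_{\ess}(\rH_{X,q}^N)=\{0\}$ if and only if $\mathcal{D}=\emptyset$, i.e., $\sigma_{\ess}(\mathcal{M}_d)\setminus\{0\}=\emptyset$.

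Next I would prove the equivalence $\mathcal{D}=\emptyset \iff \lim_{k\to\infty}\gd_k=0$. If $\gd_k\to 0$, then for every $d>0$ only finitely many $\gd_k$ exceed $d/2$; in particular $d$ cannot appear infinitely often, so $\mathcal{D}^\infty=\emptyset$, and $0$ is the unique limit point of $\{\gd_k\}$, so $\mathcal{D}'\setminus\{0\}=\emptyset$. Hence $\mathcal{D}=\emptyset$. Conversely, if $\gd_k\not\to 0$, there exist $\varepsilon>0$ and a subsequence $\{\gd_{k_j}\}$ with $\gd_{k_j}\ge\varepsilon$. Since this subsequence is bounded above by $\gd^*<\infty$, Bolzano--Weierstrass yields a further sub-subsequence converging to some $\lambda\in[\varepsilon,\gd^*]$. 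Either $\lambda$ is attained infinitely often in $\{\gd_k\}$ (so $\lambda\in\mathcal{D}^\infty$) or it is a genuine accumulation point (so $\lambda\in\mathcal{D}'\setminus\{0\}$); in both cases $\lambda\in\mathcal{D}$, contradicting $\mathcal{D}=\emptyset$.

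I do not expect any serious obstacle: everything hard has already been absorbed into Proposition \ref{th5.2} and into Lemma \ref{lem5.2}. The only point that requires a moment of care is the remark that $\mathcal{D}$ does not contain $0$ by construction, so that a hypothetical nonempty $\mathcal{D}$ automatically produces positive essential spectrum; this is what makes the equivalence tight and rules out the degenerate possibility that $\mathcal{D}$ could be nonempty while still giving $\sigma_{\ess}=\{0\}$.
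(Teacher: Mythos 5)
Your proposal is correct and takes essentially the same route as the paper: the paper's proof also reduces \eqref{eq:5.26} to the condition $\mathcal{D}=\emptyset$ via Proposition \ref{th5.2} and then asserts that, since $\gd^*<\infty$, this holds precisely when $\gd_k\to 0$. You merely spell out (via Bolzano--Weierstrass and the dichotomy between values attained infinitely often and genuine accumulation points) the elementary equivalence that the paper states in one line.
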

\begin{proof}
By Proposition \ref{th5.2}, \eqref{eq:5.26} holds precisely if (see \eqref{eq:d1} and \eqref{eq:d2})
    \[
\mathcal{D}=\emptyset.
    \]
Since $\gd^*<\infty$, the latter holds precisely when $\gd_{k}\to 0$.
%
%
%
\end{proof}
Consider now the periodic case.
\begin{corollary}\label{cor:5.7}
Let $X=a\N$ and let $q\in L^1_{\loc}(\R_+)$ be an $a$-periodic function, i.e., $q(x+a)=q(x)$ for a.a. $x\in \R_+$. Then
\be\label{4.31}
\sigma(\rH_{X,q}^N)=\sigma_{\ess}(\rH_{X,q}^N) = \sigma_{\ess}^{\infty}(\rH_{X,q}^N) = \sigma(\rH_{q,1}^N).
\ee
In particular, if $q=\bold{ c}$, then
\[
\sigma(\rH_{X,\bold{c}}^N)=\sigma_{\ess}(\rH_{X,\bold{c}}^N)=\sigma_{\ess}^{\infty}(\rH_{X,\bold{c}}^N)=\Big\{\frac{\pi^2n^2}{a^2}+c\Big\}_{n=0}^\infty.
\]
  \end{corollary}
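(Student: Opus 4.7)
Because $X = a\N$, we have $x_k = ka$ and $d_k = a$ for all $k \in \N$, so the intervals $\Delta_k = [(k-1)a, ka]$ all have the same length $a$. The plan is to apply Lemma \ref{lem:5.4} and then exploit the $a$-periodicity of $q$ to reduce everything to $\rH_{q,1}^N$.

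First, I would note that translation by $(k-1)a$ defines a unitary map $U_k \colon L^2(\Delta_1) \to L^2(\Delta_k)$ by $(U_k \varphi)(x) := \varphi(x-(k-1)a)$. Using the assumption $q(x+a)=q(x)$ a.e., a direct check on the domain description \eqref{eq:HkN1}--\eqref{eq:HkN2} shows that $U_k \rH_{q,1}^N U_k^{-1} = \rH_{q,k}^N$ for every $k \in \N$. Therefore
\[
\sigma(\rH_{q,k}^N) = \sigma(\rH_{q,1}^N), \qquad k\in\N.
\]
Since the interval $\Delta_1 = [0,a]$ is bounded, $\rH_{q,1}^N$ is a regular Sturm--Liouville operator, so $\sigma(\rH_{q,1}^N)$ is a discrete set of simple eigenvalues accumulating only at $+\infty$.

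Next I would apply Lemma \ref{lem:5.4}. Since $\bigcup_k \sigma(\rH_{q,k}^N) = \sigma(\rH_{q,1}^N)$ is already closed, we get
\[
\sigma(\rH_{X,q}^N) \;=\; \overline{\bigcup_{k=1}^\infty \sigma(\rH_{q,k}^N)} \;=\; \sigma(\rH_{q,1}^N).
\]
Moreover, every $\lambda \in \sigma(\rH_{q,1}^N)$ is an eigenvalue of each $\rH_{q,k}^N$; taking the corresponding eigenfunctions and extending by zero outside $\Delta_k$ yields a countable orthonormal family in $\ker(\rH_{X,q}^N - \lambda I)$, so $\dim \ker(\rH_{X,q}^N - \lambda I) = \infty$. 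Hence $\sigma(\rH_{q,1}^N) \subseteq \sigma_{\ess}^\infty(\rH_{X,q}^N)$, and combined with the obvious inclusions $\sigma_{\ess}^\infty(\rH_{X,q}^N) \subseteq \sigma_{\ess}(\rH_{X,q}^N) \subseteq \sigma(\rH_{X,q}^N)$ this yields the chain of equalities \eqref{4.31}.

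Finally, for the case $q \equiv c$, I would solve the Neumann eigenvalue problem $-\varphi'' + c\varphi = \lambda \varphi$ on $[0,a]$ with $\varphi'(0)=\varphi'(a)=0$ by separation of variables: the eigenfunctions are $\varphi_n(x) = \cos(\pi n x/a)$ with eigenvalues $\lambda_n = (\pi n/a)^2 + c$ for $n = 0,1,2,\ldots$, which gives the explicit formula. There is no genuine obstacle here; the only thing worth double-checking is the unitary equivalence step, where one must verify that the Neumann boundary conditions at $x_{k-1}$ and $x_k$ are preserved under translation (they are, since they depend only on $f'$ at the endpoints) and that the potential term transports correctly thanks to periodicity.
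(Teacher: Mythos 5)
Your proposal is correct and follows essentially the same route as the paper: the paper's proof is a one-line appeal to the unitary equivalence of $\rH_{q,k}^N$ and $\rH_{q,1}^N$ via translation and periodicity, implicitly combined with Lemma \ref{lem:5.4}, and you have simply filled in the details (the closedness of $\sigma(\rH_{q,1}^N)$, the infinite-multiplicity argument via zero-extension of eigenfunctions, and the explicit Neumann eigenvalues for $q=\bold{c}$). No gaps.
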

\begin{proof}
Since $q$ is $a$-periodic,  the operators $\rH_{q,k}^N$ and $\rH_{q,1}^N$
are unitarily equivalent and  $\sigma(\rH_{q,k}^N) = \sigma(\rH_{q,1}^N)$, $k\in \N$.
This proves  \eqref{4.31}. If $q=\bold{ c}$, then  $\sigma(\rH_{q,1}^N) = \{\frac{\pi^2n^2}{a^2}+c\}_{n=0}^\infty$.
\end{proof}

\subsection{Proof of Theorem \ref{thContSpec}}\label{ss:4.3}
Now we are ready to prove Theorem \ref{thContSpec}.
%
          \begin{proof}[Proof of Theorem \ref{thContSpec}]
Condition \eqref{5.20} implies that $q_-$ also satisfies \eqref{5.20} and hence $q$ satisfies \eqref{5.8},
i.e., the first condition in \eqref{eq:b_klmn}. Therefore, by
Corollary \ref{cor:3.5}, the operator $\rH_{X,q}^N$ is lower semibounded and the corresponding  energy space $\gH_{X,q}$ coincides with  $W^{1,2}(\R_+\setminus X;q)$ algebraically and topologically (see
\eqref{5.13}).

Next notice that, by Lemma \ref{lem5.1}, the form $|\gb_{X}|$,
   \begin{equation}
|\gb_X|[f]:=\gb_X^+[f]+\gb_X^-[f]=\sum^{\infty}_{k=1}\frac{|f(x_k +)-f(x_k-)|^2}{|\beta_k|},
   \end{equation}
is $\gt_{X,q}$-compact, i.e., it is compact on $\gH_{X,q} =
W^{1,2}(\R_+\setminus X;q)$.
Since
\[
|\gb_X[f]|\le |\gb_X|[f],\qquad f\in \gH_{X,q}=
W^{1,2}(\R_+\setminus X;q),
\]
by Lemma \ref{lem3.2}, the form $\gb_X$ is $\gt_{X,q}$-compact too.
Next, applying Theorem \ref{th2.2}, we arrive at the
equality $\sigma_{\ess}(\rH_{X,\beta, q}) =
\sigma_{\ess}(\rH_{X,q}^N)$. On the other hand, by Proposition
\ref{th5.2},   condition \eqref{5.20} yields
$\sigma_{\ess}(\rH_{X,q}^N) = \sigma_{\ess}(\rH_{X}^N)$. Combining
both relations we arrive at   \eqref{1.12}.
%
           \end{proof}
Consider several simple examples. We begin with a Kronig--Penney type model.
\begin{corollary}\label{cor:5.8}
Let $X=a\N$ and let $q\in L^1_{\loc}(\R_+)$ be an $a$-periodic function, i.e., $q(x+a)=q(x)$ for a.a. $x\in \R_+$. If
\begin{equation}\label{5.26}
\lim_{k\to\infty}|\beta_k|=\infty,
    \end{equation}
then the spectrum of the operator $\rH_{X,\gB,q}$ is pure point and
\be
\sigma_{\ess}(\rH_{X,\gB,q})=\sigma_{\ess}(\rH_{X,q}^N) = \sigma_{\ess}^{\infty}(\rH_{X,q}^N) = \sigma(\rH_{q,1}^N).
\ee
In particular, if $q=\bold{ c}$, then
\[
\sigma_{\ess}(\rH_{X,\gB,\bold{c}})=\Big\{\frac{\pi^2n^2}{a^2}+c\Big\}_{n=0}^\infty.
\]
\end{corollary}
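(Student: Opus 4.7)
The plan is to combine Theorem \ref{thContSpec} with Corollary \ref{cor:5.7} to get the displayed chain of equalities, and then handle the pure-point assertion by a short measure-theoretic argument.

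First I would verify the hypotheses of Theorem \ref{thContSpec}. Since $X=a\N$, one has $d_k\equiv a$, so $d^*=a<\infty$ and $\min\{d_k,d_{k+1}\}=a$ for every $k$. The $a$-periodicity of $q$ together with $q\in L^1_{\loc}(\R_+)$ yields
\[
\frac{1}{d_k}\int_{\Delta_k}q_-(x)\,dx=\frac{1}{a}\int_0^a q_-(x)\,dx<\infty,
\]
which is the first condition in \eqref{I_brinck}. The hypothesis $|\beta_k|\to\infty$ gives $|\beta_k|^{-1}/\min\{d_k,d_{k+1}\}=(a|\beta_k|)^{-1}\to 0$, which is exactly \eqref{eq:0.15}. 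Applying the first part of Theorem \ref{thContSpec} I obtain $\sigma_{\ess}(\rH_{X,\beta,q})=\sigma_{\ess}(\rH_{X,q}^N)$. Next I invoke Corollary \ref{cor:5.7}, whose hypotheses are in force, to splice in the further identities $\sigma(\rH_{X,q}^N)=\sigma_{\ess}(\rH_{X,q}^N)=\sigma_{\ess}^\infty(\rH_{X,q}^N)=\sigma(\rH_{q,1}^N)$, producing the claimed chain.

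It remains to establish that $\rH_{X,\beta,q}$ has pure point spectrum. Here the argument is purely soft: $\rH_{q,1}^N$ is a regular Sturm--Liouville operator on the bounded interval $[0,a]$, so $\sigma(\rH_{q,1}^N)$ is a countable discrete set. By the identity just proved, $\sigma_{\ess}(\rH_{X,\beta,q})$ is countable, and since $\sigma_{\disc}(\rH_{X,\beta,q})$ is a set of isolated points in $\R$ it is also at most countable; hence $\sigma(\rH_{X,\beta,q})$ is at most countable. A countable subset of $\R$ has Lebesgue measure zero, so the absolutely continuous part of the spectral measure must vanish, and a continuous measure cannot charge any countable set, so the singular continuous part vanishes as well. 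Consequently the spectrum is pure point.

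Finally, for the specialization $q=\mathbf{c}$, the Neumann eigenvalues of $-d^2/dx^2+c$ on $L^2[0,a]$ are $\pi^2 n^2/a^2+c$ for $n\ge 0$, with eigenfunctions $\cos(\pi nx/a)$, so the last displayed formula is immediate. The only step that is not entirely mechanical is the pure-point argument above; even there the countability of the spectrum reduces the matter to an elementary measure-theoretic observation, so I do not foresee any genuine obstacle.
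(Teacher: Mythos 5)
Your proposal is correct and follows essentially the same route as the paper: verify that $d_k\equiv a$ and the periodicity of $q$ give the hypotheses of Theorem \ref{thContSpec} (with \eqref{5.26} yielding \eqref{eq:0.15}), then splice in Corollary \ref{cor:5.7}. Your additional measure-theoretic justification of the pure-point claim (countable spectrum $\Rightarrow$ no absolutely continuous or singular continuous part) is a valid filling-in of a step the paper leaves implicit.
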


\begin{proof}
Since $\gd_k=a$ for all $k\in \N$, condition  \eqref{5.26} implies \eqref{5.2}. Moreover, $q$ satisfies \eqref{5.8} since it is $a$-periodic.
One completes the proof by combining  Theorem \ref{thContSpec} with Corollary \ref{cor:5.7}.
  \end{proof}
The following  result is immediate from Theorem \ref{thContSpec}.
        \begin{corollary}\label{cor5.1}
Let $0<\gd_*\le \gd_*<\infty$ and $q$ satisfy  \eqref{5.8}.
 Then:
\begin{itemize}
\item[(i)]
$\sigma_{\ess}(\rH_{X,\beta,q}) =
\sigma_{\ess}(\rH_{X,q}^N)$ whenever \eqref{5.26} holds.
\item[(ii)]    
    If, in addition, $q$ satisfies \eqref{5.20},  then
    \[
    \sigma_{\ess}(\rH_{X,\beta,q}) =\bigcup_{\lambda\in\mathcal{D}}\Big\{\frac{\pi^2n^2}{\lambda^2}\Big\}_{n=0}^\infty,
    \]
    where $\mathcal{D}$ is given by \eqref{eq:d1} and \eqref{eq:d2}.
    \end{itemize}
       \end{corollary}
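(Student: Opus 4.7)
The plan is to derive both assertions directly from Theorem \ref{thContSpec}, with the positivity $\gd_*>0$ playing the role of a simplifying assumption, together with Proposition \ref{th5.2} for part (ii).

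For (i), I would first observe that the hypothesis \eqref{5.8} on $q$ is precisely the first condition in \eqref{I_brinck}, so the standing assumptions of Theorem \ref{thContSpec} are met (together with $\gd^*<\infty$). Since $\min\{d_k,d_{k+1}\}\ge\gd_*>0$ uniformly in $k$, the trivial estimate
\[
\frac{|\beta_k|^{-1}}{\min\{d_k,d_{k+1}\}}\le\frac{|\beta_k|^{-1}}{\gd_*}
\]
shows that assumption \eqref{5.26} forces condition \eqref{eq:0.15}. Theorem \ref{thContSpec} then yields $\sigma_{\ess}(\rH_{X,\beta,q})=\sigma_{\ess}(\rH_{X,q}^N)$.

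For (ii), the additional hypothesis \eqref{5.20} lets me invoke Proposition \ref{th5.2}, which identifies
\[
\sigma_{\ess}(\rH_{X,q}^N)=\sigma_{\ess}(\rH_{X}^N)=\{0\}\cup\bigcup_{\lambda\in\mathcal{D}}\Big\{\frac{\pi^2 n^2}{\lambda^2}\Big\}_{n=1}^\infty.
\]
Because $\{d_k\}\subset[\gd_*,\gd^*]$ is a bounded sequence, it has at least one cluster point and $\mathcal{D}$ is nonempty; including the index $n=0$ in each term of the union absorbs the isolated point $\{0\}$ and produces the compact formula $\bigcup_{\lambda\in\mathcal{D}}\{\pi^2 n^2/\lambda^2\}_{n=0}^\infty$. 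Combining this chain of equalities with (i) gives the claim in (ii).

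No serious obstacle is anticipated: the corollary is essentially a clean packaging of Theorem \ref{thContSpec} and Proposition \ref{th5.2}, with the uniform lower bound $\gd_*>0$ reducing the Wannier--Stark smallness condition \eqref{eq:0.15} to the simpler \eqref{5.26}. The only minor bookkeeping concerns the absorption of the isolated $\{0\}\in\sigma_{\ess}(\rH_X^N)$ into the $n=0$ term of the union, and verifying nonemptiness of $\mathcal{D}$ via Bolzano--Weierstrass.
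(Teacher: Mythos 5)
Your argument is correct and coincides with the paper's own proof: the paper likewise observes that $\gd_*>0$ turns the hypothesis $|\beta_k|\to\infty$ into condition \eqref{eq:0.15} and then invokes Theorem \ref{thContSpec} (with Proposition \ref{th5.2} supplying the explicit description of $\sigma_{\ess}(\rH_X^N)$ for part (ii)). Your extra remarks on the nonemptiness of $\mathcal{D}$ and the absorption of $\{0\}$ into the $n=0$ term are accurate bookkeeping that the paper leaves implicit.
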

       \begin{proof}
Since  $d_*>0$,  condition \eqref{5.26} yields \eqref{5.2}.
 Theorem \ref{thContSpec} completes the proof.
       \end{proof}
  In conclusion let us present a class of Hamiltonians $\rH_{X,\gB,q}$ with pure point spectra and such that their eigenvalues accumulate only at $0$ and $\infty$.

  \begin{corollary}\label{cor:5.10}
  Let  $X$ be such that
  \[
  \lim_{k\to\infty}\gd_k=0. 
  \]
  Let also $q $ satisfy \eqref{5.20}.
  If $\gB=\{\gB_k\}_1^\infty$ satisfies \eqref{5.2}, then the spectrum of
  $\rH_{X,\gB,q}$ is pure point and accumulates only at $0$ and $\infty$, that is
  \be
\sigma_{\ess}(\rH_{X,\gB,q}) = \sigma_{\ess}(\rH_{X,q}^N) = \{0\}.
  \ee
  \end{corollary}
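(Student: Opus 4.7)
The plan is to read Corollary \ref{cor:5.10} as a direct composition of Theorem \ref{thContSpec} and Corollary \ref{cor:5.6}, after first checking that all hypotheses of those results are met under the assumptions $\gd_k\to 0$, \eqref{5.20}, and \eqref{5.2}.

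First I would verify the hypotheses of Theorem \ref{thContSpec}. Since $\gd_k\to 0$, the sequence $\{\gd_k\}$ is bounded, so $\gd^*<\infty$. The first condition in \eqref{I_brinck}, namely $\sup_k \gd_k^{-1}\int_{\Delta_k}q_-(x)\,dx<\infty$, follows from \eqref{5.20} because $q_-\le|q|$ and a sequence converging to $0$ is bounded. Finally, \eqref{eq:0.15} is literally the assumption \eqref{5.2} on $\gB$. Thus Theorem \ref{thContSpec}  yields $\sigma_{\ess}(\rH_{X,\gB,q})=\sigma_{\ess}(\rH_{X,q}^N)$. Since \eqref{5.20} also holds, the second part of Theorem \ref{thContSpec} applies, giving $\sigma_{\ess}(\rH_{X,\gB,q})=\sigma_{\ess}(\rH_{X}^N)$ as well (this identification is not strictly needed but reinforces the picture).

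Second, I would apply Corollary \ref{cor:5.6}. Its hypotheses $\gd^*<\infty$ and \eqref{5.20} are already verified, so the corollary gives the equivalence $\sigma_{\ess}(\rH_{X,q}^N)=\{0\}\iff \lim_k\gd_k=0$. Since $\gd_k\to 0$ is assumed, we conclude $\sigma_{\ess}(\rH_{X,q}^N)=\{0\}$, and combining with the previous step,
\[
\sigma_{\ess}(\rH_{X,\gB,q})=\sigma_{\ess}(\rH_{X,q}^N)=\{0\}.
\]

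Finally, for the pure-point assertion one notes that $\sigma(\rH_{X,\gB,q})\setminus\{0\}$ consists entirely of isolated eigenvalues of finite multiplicity (the complement of the essential spectrum in the spectrum is discrete), and these eigenvalues can accumulate only at points of $\sigma_{\ess}$, i.e., only at $0$ and at $\infty$. There is no real obstacle here: the whole argument is bookkeeping of hypotheses. The only mildly subtle point is verifying that \eqref{5.20} entails the first half of \eqref{I_brinck} (so Theorem \ref{thContSpec} is truly applicable), which I handled above via $q_-\le|q|$ and boundedness of null sequences.
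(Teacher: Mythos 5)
Your proposal is correct and follows exactly the paper's own route: the paper proves Corollary \ref{cor:5.10} by combining Theorem \ref{thContSpec} with Corollary \ref{cor:5.6}, which is precisely your composition, and your hypothesis-checking (that $\gd_k\to0$ gives $\gd^*<\infty$ and that \eqref{5.20} implies the first condition in \eqref{I_brinck}) is the right bookkeeping that the paper leaves implicit.
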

  \begin{proof}
  Follows by combining Theorem \ref{thContSpec}  with Corollary \ref{cor:5.6}.
  \end{proof}

  \subsection{Negative spectrum and $h$-stability}\label{ss:4.4}
In this section we investigate the negative spectrum of the Hamiltonian
$\rH_{X,\gB,q}$.
Let us note that the negative spectrum of the
operator $\rH_{X,\gB}=\rH_{X,\gB,\bold{0}}$ has been studied in \cite{AlbNiz03b}
(the case of a finite number of point interactions) and
in \cite{GolOri10}, \cite{KosMal10}, \cite{AlbNiz06}, \cite{BraNiz11} (the case of an infinitely many point interactions).

We begin with the generalization of Birman's result on
the discreteness of  negative spectrum (cf. \cite{Bir61} and
\cite[Theorem 2.3]{Gla65}).

     \begin{proposition}\label{prop6.2}
Let $\gd^*<\infty$ and $q\in L^1_{\loc}(\R_+)$. If the negative parts $q_-$ and $\gB^-$ of $q$ and $\gB$ satisfy \eqref{5.20} and \eqref{5.2}, respectively, that is
     \begin{equation}\label{6.3}
\lim_{k\to\infty}\frac{1}{\gd_k}\int^{x_{k}}_{x_{k-1}} q_-(t)dt=0 \qquad 
     \end{equation}
and
     \begin{equation}\label{4.40} 
 \lim_{k\to\infty}\frac{(\gB_k^{-1})^-}{\min\{\gd_k,\gd_{k+1}\}}= 0,
     \end{equation}
%
%
then the negative part of the spectrum $\sigma(\rH_{X,\gB,q})$ is
bounded from below and discrete.
\end{proposition}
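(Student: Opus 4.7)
The plan is to split the form $\gt_{X,\gB,q}$ into a manifestly nonnegative part and two negative ``perturbation'' forms, and then to show that both perturbations are relatively form--compact with respect to the nonnegative part. Once this is done, the discreteness of the negative spectrum and its lower semi--boundedness follow from standard compact--perturbation results (Theorem \ref{th2.2} and the KLMN-type stability used in Lemma \ref{lem_III.1}).

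First I would write
\[
\gt_{X,\gB,q} \;=\; \gt_0 \;-\; \gq_- \;-\; \gb_X^-, \qquad \gt_0:=\gt_{X,\gB^+,q_+},
\]
where $\gq_-[f]:=\int_{\R_+}q_-(x)|f(x)|^2\,dx$. By Lemma \ref{lem_III.3} applied to the nonnegative $q_+$ and $\gB^+$, the form $\gt_0$ is closed and nonnegative, and the associated Hilbert space $\gH_0$ coincides algebraically and topologically with $W^{1,2}(\R_+\setminus X;\gB^+,q_+)$. In particular $\gH_0\hookrightarrow W^{1,2}(\R_+\setminus X)$ continuously (the three additional nonnegative contributions to the energy norm only strengthen the $W^{1,2}$ norm on each $\Delta_k$).

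The core step is to verify the $\gt_0$--compactness of the two negative pieces. For $\gq_-$: condition \eqref{6.3} is exactly \eqref{5.20} with $q$ replaced by $q_-$, so Lemma \ref{lem5.2} yields compactness of the embedding $W^{1,2}(\R_+\setminus X)\hookrightarrow L^2(\R_+;q_-)$; composing with the continuous embedding $\gH_0\hookrightarrow W^{1,2}(\R_+\setminus X)$ shows $\gq_-$ is $\gt_0$--compact. For $\gb_X^-$: condition \eqref{4.40} is exactly \eqref{5.2} for the sequence $\{(\gB_k^{-1})^-\}$, and since $q_-$ satisfies \eqref{6.3} we in particular have \eqref{5.8}; Lemma \ref{lem5.1} therefore delivers compactness of the jump mapping from $W^{1,2}(\R_+\setminus X;q)$ into $\ell^2(\N;\{(\gB^{-1})^-\})$, and hence $\gb_X^-$ is $\gt_0$--compact by the same composition argument.

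With both $\gq_-$ and $\gb_X^-$ being $\gt_0$--compact, they are in particular infinitesimally $\gt_0$--bounded. By the KLMN theorem, $\gt_{X,\gB,q}$ is therefore closed, lower semi--bounded, and gives rise to a self--adjoint operator which must coincide with $\rH_{X,\gB,q}$ in view of Proposition \ref{cor:2.4}. Applying the form version of Weyl's invariance of essential spectrum under relatively compact perturbations (Theorem \ref{th2.2}) to $\gt_{X,\gB,q}=\gt_0-(\gq_-+\gb_X^-)$ gives
\[
\sigma_{\ess}(\rH_{X,\gB,q}) \;=\; \sigma_{\ess}(\rH_{X,\gB^+,q_+}) \;\subseteq\; [0,+\infty),
\]
since $\rH_{X,\gB^+,q_+}\ge 0$. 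Consequently the part of $\sigma(\rH_{X,\gB,q})$ in $(-\infty,0)$ consists of isolated eigenvalues of finite multiplicity that can accumulate only at $0$, and the lower semi--boundedness guarantees a finite lower bound.

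The main technical point to check carefully is the transfer of compactness across the three energy spaces (from $W^{1,2}(\R_+\setminus X)$ or $W^{1,2}(\R_+\setminus X;q)$, through the chain of continuous embeddings, to $\gH_0$); everything else reduces to routine application of the form calculus and of Lemmas \ref{lem5.1}--\ref{lem5.2} already proved above.
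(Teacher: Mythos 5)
Your proposal is correct and follows essentially the same route as the paper: split off the negative parts $\gq_-$ and $\gb_X^-$, show they are relatively form--compact via Lemmas \ref{lem5.2} and \ref{lem5.1}, and invoke Birman's Theorem \ref{th2.2} to conclude $\sigma_{\ess}(\rH_{X,\gB,q})\subseteq[0,+\infty)$. The only (harmless) difference is organizational: you perturb the nonnegative form $\gt_{X,\gB^+,q_+}$ by both negative pieces in one step, whereas the paper proceeds in two stages through the intermediate operators $\rH_{X,q}^N$ and $\rH_{X,\gB^+,q}$.
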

   \begin{proof}
Firstly, by Proposition \ref{cor:3.3}, condition  \eqref{6.3} implies lower semiboundedness of  the operator $\rH_{X,\gB,q}$.
 Next, since $q_-$ satisfies \eqref{6.3} and using Lemma \ref{lem5.2} and Theorem \ref{th2.2}, we conclude
that the negative spectrum of the operator $\rH_{X,q}^N$ is bounded from below and discrete.
Note that the latter is equivalent
to the inclusion $\sigma_{\ess}(\rH_{X,q}^N)\subseteq{\R}_+$.
In particular, this immediately implies that $\sigma_{\ess}(\rH_{X,\gB^+,q})\subseteq{\R}_+$.

By Lemma \ref{lem5.1},  the second condition in \eqref{6.3} yields
compactness of the operator $i_{X, \gB^-}:\  W^{1,2}(\R_+\setminus X; q)
\to \ell^2((\gB^{-1})^-)$ defined by \eqref{5.1} with $(\gB^{-1})^-$ in place of $\gB^{-1}$.
{Since the   
embedding}  $i_2: {\gH}_{X,\gB^+,q} \to {\gH}_{\rH_{X,q}^N} = W^{1,2}(\R_+\setminus X;q)$ is continuous,
the restriction
$$
i_{X, \gB^-}\upharpoonright
{\gH}_{X,\gB^+,q} = i_{X, \gB^-}i_2:\ {\gH}_{X,\gB^+,q} \to \ell^2((\gB^{-1})^-)
$$
is compact too.
By Lemma \ref{lem5.1}, the form $\gb^-_{X}$ is compact on ${\gH}_{X,\gB^+,q}$. Therefore, by Theorem
\ref{th2.2}, $\sigma_{\ess}(\rH_{X,\gB,q})=\sigma_{\ess}(\rH_{X,\gB^+,q})\subseteq{\R}_+$ and the negative spectrum of $\rH_{X,\gB,q}$ is discrete.
   \end{proof}
    \begin{corollary}
Assume that $0<\gd_*\le \gd^*<\infty$ and  $q$ satisfies \eqref{6.3}. If
%
%
\be\label{4.37}
\lim_{k\to\infty}(\gB_k^{-1})^- = 0,
\ee
then the negative part of the spectrum $\sigma(\rH_{X,\gB,q})$ of
the Hamiltonian $\rH_{X,\gB,q}$  is bounded from below and
discrete.
        \end{corollary}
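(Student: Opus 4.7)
The plan is to derive this corollary as a direct consequence of Proposition \ref{prop6.2}. The essential observation is that the two-sided bound $0<\gd_*\le \gd^*<\infty$ trivializes the role of the denominator $\min\{\gd_k,\gd_{k+1}\}$ appearing in condition \eqref{4.40}, reducing it to the simpler hypothesis \eqref{4.37}.

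More precisely, I would first note that since $\gd_*>0$, we have $\min\{\gd_k,\gd_{k+1}\}\ge \gd_*$ for every $k\in\N$, and hence
\[
0 \le \frac{(\gB_k^{-1})^-}{\min\{\gd_k,\gd_{k+1}\}} \le \frac{(\gB_k^{-1})^-}{\gd_*}.
\]
Letting $k\to\infty$ and using the assumption $(\gB_k^{-1})^-\to 0$, the right-hand side tends to zero, which yields \eqref{4.40}. Thus $\gB^-$ satisfies the compactness-type hypothesis required in Proposition \ref{prop6.2}.

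Since $q$ is assumed to satisfy \eqref{6.3} directly, both hypotheses of Proposition \ref{prop6.2} are met. Applying that proposition immediately gives that $\rH_{X,\gB,q}$ is lower semibounded and its negative spectrum is discrete, which is the desired conclusion.

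The result is really a routine specialization, so there is no substantive obstacle: the only point to verify is the implication \eqref{4.37}$\Rightarrow$\eqref{4.40}, which is the elementary inequality above. No new analytic input is needed beyond Proposition \ref{prop6.2}.
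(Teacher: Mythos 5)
Your proposal is correct and follows exactly the paper's own argument: the paper likewise observes that \eqref{4.37} implies \eqref{4.40} because $\gd_*>0$ bounds $\min\{\gd_k,\gd_{k+1}\}$ from below, and then invokes Proposition \ref{prop6.2}. Nothing further is needed.
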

\begin{proof}
Note that \eqref{4.37} implies \eqref{4.40} since  $\gd_*>0$.
Proposition \ref{prop6.2} completes the proof.
       \end{proof}
%
%
%
%

Our next aim is to show that condition \eqref{5.2} is in a sense necessary
for the validity of the equality  
$\sigma_{\ess}(\rH_{X,\gB,q})=\sigma_{\ess}(\rH_{X,q}^N)$.

Let $h\in\R_+$. We are going to investigate the property of stability of $\sigma_{\ess}(\rH_{X,q}^N)$ of the Hamiltonian $\rH_{X,q}^N$ under
perturbation by the form $h\gb_{X}:=h(\gb^+_{X}+\gb^-_{X})$ defined by \eqref{6.5},
where $h\gB:=\{h\gB_k\}_1^\infty$
Observe that $h\gb_{X}=(h\gb)_{X}$. Thus, we investigate
the essential  spectra $\sigma_{\ess}(\rH_{X, h\gB, q})$ of the Hamiltonians $\rH_{X,h\gB,q}$,
the Neumann realizations  of the differential expressions
  \begin{equation}\label{5.30}
\tau_{X, h\gB, q} = -\frac{\rD^2}{\rD x^2} + q(x) +
h\sum_{k=1}^\infty \gB_k(.,\delta'_k)\delta'_k,\qquad h>0.
 \end{equation}
     \begin{proposition}\label{prop6.3}
Let $\gd^*<\infty$ and $q$ satisfy \eqref{5.2A}. Assume also that $\rH_{X,q}^N\ge 0$.
%
%
%
%
Then:
\begin{itemize}
\item[(i)] $\rH_{X, h\gB, q}$ is lower semi-bounded and  the equality
%
%
     \begin{equation}\label{5.31}
\sigma_{\ess}(\rH_{X, h\gB, q}) = \sigma_{\ess}(\rH_{X,q}^N) 
     \end{equation}
holds for all $h>0$ whenever condition \eqref{4.40} is satisfied.
\item[(ii)] If, in addition,
$\gB =- \gB^-$  and  $\rH_{X, h\gB, q}$ is lower semibounded for some $h=h_0>0$, then   \eqref{4.40}
is also necessary for the validity of \eqref{5.31} for all  $h>0$. 
 \end{itemize}
    \end{proposition}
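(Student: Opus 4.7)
Lower semiboundedness is immediate from Proposition~\ref{cor:3.3}: hypothesis \eqref{5.2A} supplies the first inequality in \eqref{eq:b_klmn} for $q$, while \eqref{4.40} supplies the second for $\gB^-$ (the constant $C_1$ simply rescales by $1/h$ when $\gB$ is replaced by $h\gB$), so $\rH_{X,h\gB,q}$ is self-adjoint and bounded below for every $h>0$. For the essential spectrum equality I would mirror the proof of Theorem~\ref{thContSpec}: by Corollary~\ref{cor:3.5} the energy space of $\gt_{X,q}$ coincides algebraically and topologically with $W^{1,2}(\R_+\setminus X;q)$; applying Lemma~\ref{lem5.1} with $h\gB$ in place of $\gB$ shows that the mapping $i_{X,h\gB}$ into $\ell^2(\N;|h\gB|^{-1})$ is compact, so the perturbation form $h\gb_X$ is $\gt_{X,q}$-compact. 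Then Lemma~\ref{lem3.2} together with Theorem~\ref{th2.2} (invariance of essential spectra under form-compact perturbations) gives $\sigma_{\ess}(\rH_{X,h\gB,q}) = \sigma_{\ess}(\rH_{X,q}^N)$ for every $h>0$.

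\textbf{Plan for Part (ii).} I argue by contrapositive. Assume \eqref{4.40} fails; since $\gB = -\gB^-$, this means there exist $\varepsilon_0 > 0$ and a subsequence $k_j$ with $|\gB_{k_j}|^{-1}/\min\{d_{k_j},d_{k_j+1}\} \ge \varepsilon_0$. Passing to a sparser sub-subsequence I may assume $d_{k_j} \le d_{k_j+1}$ and that the intervals $\Delta_{k_j}$ are pairwise non-adjacent, so the normalized indicators $h_{k_j} := d_{k_j}^{-1/2}\chi_{\Delta_{k_j}}$ (as in \eqref{4.9A}) form an $L^2$-orthonormal system with mutually non-touching supports and $\gt_{X,h\gB,q}$ decouples on their linear span. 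A direct calculation yields
\[
\gt_{X,h\gB,q}[h_{k_j}] = d_{k_j}^{-1}\int_{\Delta_{k_j}} q(x)\,dx \;-\; h^{-1}d_{k_j}^{-1}\bigl(|\gB_{k_j-1}|^{-1}+|\gB_{k_j}|^{-1}\bigr) \le C_0 - \varepsilon_0/h,
\]
with $C_0$ from \eqref{5.2A}. By the min-max principle this forces $\inf\sigma_{\ess}(\rH_{X,h\gB,q}) \le C_0 - \varepsilon_0/h$ at every $h>0$ for which $\rH_{X,h\gB,q}$ is self-adjoint and lower semibounded. Since $\rH_{X,q}^N \ge 0$ makes $\sigma_{\ess}(\rH_{X,q}^N) \subseteq [0,\infty)$, the assumed equality \eqref{5.31} for all such $h$ would require $h \ge \varepsilon_0/C_0$; producing an admissible $h$ strictly smaller than $\varepsilon_0/C_0$ then yields the contradiction.

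\textbf{Main obstacle.} The delicate point is ensuring the set of admissible $h$ actually descends below $\varepsilon_0/C_0$. Lower semiboundedness at $h_0$ tells us that $\gb_X^-$ is $\gt_{X,q}$-bounded with relative form bound at most $1/h_0$, so the admissible set surely contains $[h_0,\infty)$, and monotonicity of $h\mapsto h^{-1}\gb_X^-$ extends it down to a half-line $(h_*,\infty)$ where $h_*$ is the KLMN threshold. The plan is to show $h_* < \varepsilon_0/C_0$; if a direct comparison of the numerical constants is not tight enough, I would either sharpen $C_0$ by taking test functions localized to subintervals of $\Delta_{k_j}$ where $q$ is small (using \eqref{5.2A} more carefully) or exploit continuity of $\inf\sigma_{\ess}(\rH_{X,h\gB,q})$ in $h$ to propagate the negative test-function bound as $h$ decreases toward $h_*$. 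This comparison between the perturbation threshold and the estimate $C_0 - \varepsilon_0/h$ is the substantive content of the necessity argument.
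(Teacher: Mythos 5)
Your part (i) follows the paper's route (the paper simply cites Theorem \ref{thContSpec}), so there is nothing to add there beyond noting that both you and the paper tacitly pass from the one-sided condition \eqref{4.40} to the two-sided condition \eqref{5.2} needed in Lemma \ref{lem5.1}.

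For part (ii) there is a genuine gap, and it is exactly the one you flag yourself: your test-function argument produces the contradiction only for $h<\varepsilon_0/C_0$, and you have no control on whether such $h$ are ``admissible.'' Neither of your proposed repairs is the right one --- sharpening $C_0$ does not help because $\varepsilon_0$ can be arbitrarily small relative to $C_0$, and continuity of $\inf\sigma_{\ess}$ in $h$ down to a threshold $h_*$ would still leave you stuck if $h_*\ge\varepsilon_0/C_0$. The correct fix, which is how the paper opens its proof of (ii), is Proposition \ref{cor:3.3}(iii): since $q$ satisfies \eqref{5.2A} and $\gB=-\gB^-$ (so $\gB^+=0$ and $q_+$ trivially satisfy \eqref{III.6}), lower semiboundedness at the single value $h_0$ already forces $(1/\gB_k)^-\le C\min\{\gd_k,\gd_{k+1}\}$, i.e.\ \eqref{eq:b_klmn}. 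By Lemma \ref{lem_III.1} the form $\gb_X^-$ is then \emph{infinitesimally} $\gt_{X,q}$-bounded, so $\gt_{X,h\gB,q}$ is closed and lower semibounded for \emph{every} $h>0$: the threshold is $h_*=0$ and your small-$h$ test functions are legitimate. (Note also that boundedness of the ratio from \eqref{eq:b_klmn} does not contradict the assumed failure of \eqref{4.40}, which demands convergence to zero, so the contrapositive survives.) Two cosmetic points: ``passing to a subsequence'' cannot arrange $\gd_{k_j}\le\gd_{k_j+1}$ --- instead, by pigeonhole use the indicator of whichever of $\Delta_{k_j}$, $\Delta_{k_j+1}$ is shorter; and the relative form bound you extract at $h_0$ is $h_0$, not $1/h_0$, since the perturbation enters as $h^{-1}\gb_X$.

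Once the gap is closed, your argument is a valid and genuinely different proof of necessity. The paper instead observes that \eqref{5.31} for all $h>0$, together with $\rH_{X,q}^N\ge0$, means the negative spectrum of $\gt_{X,q}-h^{-1}\gb_X^-$ is discrete for every coupling constant, invokes Birman's abstract result (Proposition \ref{prop5.3}(ii) with Remark \ref{rem5.1}(i)) to conclude that $\gb_X^-$ is $\gt_{X,q}$-compact, and then applies the necessity half of Lemma \ref{lem5.1}. Your route replaces this two-step abstract machinery by a single explicit min--max computation with decoupled indicator functions; it is more elementary and self-contained, while the paper's route exposes the structural point that $h$-stability is equivalent to form-compactness of the perturbation.
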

   \begin{proof}
(i) is immediate from Theorem \ref{thContSpec}.

(ii) Since $\gB_k < 0$, $k\in \N,$ we have $\gb_{X} =  -\gb^-_{X}\le 0$.
Further, since  $q$ satisfies \eqref{5.2A} and  the form $\gt_{X,h_0\gB,q}=\gt_{X,q} - h_0\gb^-_{X}$ is semibounded from below, it follows from Proposition \ref{cor:3.3}(iii) that $\beta = -\beta^-$ satisfies condition  \eqref{eq:b_klmn}. Therefore, by Proposition \ref{cor:3.3}(i), the form $\gt_{X,h\gB,q}$ is lower semibounded and closed for all $h>0$.

According to \eqref{5.31}, the negative spectrum of the form $\gt_{X,q} - h\gb^-_{X}$
is discrete for all $h>0$.
Therefore, by Proposition \ref{prop5.3}(ii) and Remark \ref{rem5.1}(i),
the form $\gb_{X}= -\gb_X^-$
is compact in $\gH_{X,q}$, i.e., it is $\gt_{X,q}$-compact.
Applying Lemma \ref{lem5.1}, we complete the proof.
       \end{proof}
     \begin{corollary}\label{cor6.3}
Let $\gd^*<\infty$ and let $\mathcal{D}$ be given by \eqref{eq:d1} and \eqref{eq:d2}.
Assume  also that $\rH_{X,q}^N\ge 0$ and $q$ satisfies condition  \eqref{5.20}.
Then the equality    
     \begin{equation}\label{5.31A}
 \sigma_{\ess}(\rH_{X,h\gB,q}) = \sigma_{\ess}(\rH_X^N)=
\{0\}\bigcup_{\lambda\in\mathcal{D}}\Big\{\frac{\pi^2 n^2}{\lambda^2}\Big\}_{n=1}^\infty
    \end{equation}
holds for all $h>0$ if $\gB$ satisfies condition  \eqref{4.40}. 
If, in addition, $\gB = -\gB^-$, 
then the condition  \eqref{4.40}
is also necessary for the validity of equalities \eqref{5.31A} for all  $h>0$.
     \end{corollary}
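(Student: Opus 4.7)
The plan is to derive this corollary essentially as a direct consequence of Proposition \ref{prop6.3} combined with Proposition \ref{th5.2}, the latter being the tool that turns the abstract equality $\sigma_{\ess}(\rH_{X,h\gB,q})=\sigma_{\ess}(\rH_{X,q}^N)$ into the explicit description in terms of $\mathcal{D}$.

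For the sufficient direction, I would first record the elementary observation that condition \eqref{5.20} (namely $d_k^{-1}\int_{\Delta_k}|q|\,dx\to 0$) trivially implies the finite-supremum condition \eqref{5.2A} required by Proposition \ref{prop6.3}(i). Since the remaining hypotheses of Proposition \ref{prop6.3}(i) --- namely $\gd^*<\infty$, $\rH_{X,q}^N\ge 0$, and \eqref{4.40} --- are exactly the assumptions we have made, that proposition yields
\[
\sigma_{\ess}(\rH_{X,h\gB,q})=\sigma_{\ess}(\rH_{X,q}^N)\qquad \text{for every } h>0.
\]
Next, since $\gd^*<\infty$ and $q$ satisfies \eqref{5.20}, Proposition \ref{th5.2} applies to the right-hand side and evaluates it as
\[
\sigma_{\ess}(\rH_{X,q}^N)=\sigma_{\ess}(\rH_X^N)=\{0\}\cup\bigcup_{\lambda\in\mathcal{D}}\Big\{\tfrac{\pi^2 n^2}{\lambda^2}\Big\}_{n=1}^\infty.
\]
Chaining the two identities produces \eqref{5.31A}.

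For the necessity, suppose in addition that $\gB=-\gB^-$ and that \eqref{5.31A} holds for every $h>0$. Using Proposition \ref{th5.2} (whose hypotheses hold as in the first step), the right-hand side of \eqref{5.31A} equals $\sigma_{\ess}(\rH_{X,q}^N)$, so the assumption rephrases as $\sigma_{\ess}(\rH_{X,h\gB,q})=\sigma_{\ess}(\rH_{X,q}^N)$ for all $h>0$. Provided the prerequisite lower semiboundedness of $\rH_{X,h_0\gB,q}$ for some $h_0>0$ is in force, Proposition \ref{prop6.3}(ii) then produces condition \eqref{4.40} directly.

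The one delicate point, and the main obstacle, is that Proposition \ref{prop6.3}(ii) lists as a standing hypothesis the existence of some $h_0>0$ with $\rH_{X,h_0\gB,q}$ lower semibounded, and this is not literally stated in the hypotheses of the corollary. There are two natural ways to dispense with it: either read \eqref{5.31A} as implicitly asserting that $\rH_{X,h\gB,q}$ is a well-defined lower semibounded self-adjoint operator for every $h>0$ (since otherwise the form-based definition of $\sigma_{\ess}$ used throughout the paper is not even meaningful in the intended way), or argue separately, using that the essential spectrum lies in $[0,\infty)$ together with $\rH_{X,q}^N\ge 0$ and the KLMN theorem, that the perturbation $-h_0\gb_X^-$ can be made infinitesimally $\gt_{X,q}$-bounded for sufficiently small $h_0$. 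Once this is settled, the argument reduces to the verbatim application of Propositions \ref{prop6.3}(ii) and \ref{th5.2} outlined above.
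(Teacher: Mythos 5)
Your proof is correct and follows exactly the paper's route: the paper's own proof is the one-line combination of Proposition \ref{th5.2} (to evaluate $\sigma_{\ess}(\rH_{X,q}^N)=\sigma_{\ess}(\rH_X^N)$ via $\mathcal{D}$) with Proposition \ref{prop6.3}. Your remark that the necessity direction tacitly uses the lower-semiboundedness hypothesis of Proposition \ref{prop6.3}(ii) is a fair reading of the statement, which the paper leaves implicit; just note that your alternative KLMN fix is not viable as stated, since infinitesimal $\gt_{X,q}$-boundedness of $h_0\gb_X^-$ does not depend on the size of $h_0$.
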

\begin{proof}
The proof follows by combining Proposition \ref{th5.2} with Proposition
\ref{prop6.3}.
  \end{proof}
Next we complete Corollary \ref{cor5.1} by showing that in the case  $d_*>0$ condition
\eqref{5.26} is in a sense  necessary for the $h$-stability of an essential spectrum of the perturbed Hamiltonian.
   \begin{corollary}
Let $0<\gd_*\le \gd^*<\infty$. Let also $\rH_{X,q}^N\ge 0$ and $q$ satisfy \eqref{5.2A}. 
 \begin{itemize}
\item[(i)] Condition \eqref{5.26}
is sufficient, and in the case  $\gB = -\gB^-$       
it is also necessary, for the validity of equalities  \eqref{5.31} for all $h>0$.
\item[(ii)] If, in addition, $q$ satisfies \eqref{5.20},
then condition \eqref{5.26} 
is sufficient, and in the case $\gB = -\gB^-$ 
it is also necessary, for the validity of equalities  \eqref{5.31A} for all $h>0$.
\end{itemize}
   \end{corollary}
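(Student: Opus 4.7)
The plan is to reduce this statement to Proposition \ref{prop6.3} and Corollary \ref{cor6.3} by exploiting the two-sided bound $0<d_*\le d^*<\infty$, which trivializes the normalization $\min\{\gd_k,\gd_{k+1}\}$ appearing in \eqref{5.2} and \eqref{4.40}. The key observation is that, under this bilateral bound on the gaps, condition \eqref{5.26} ($|\gB_k|\to\infty$) is equivalent to \eqref{5.2}: the quotient $|\gB_k|^{-1}/\min\{\gd_k,\gd_{k+1}\}$ is pinched between $|\gB_k|^{-1}/\gd^*$ and $|\gB_k|^{-1}/\gd_*$. Moreover, in the case $\gB=-\gB^-$ one has $(\gB_k^{-1})^- = 1/|\gB_k|$, so condition \eqref{4.40} becomes equivalent to $1/|\gB_k|\to 0$, i.e., again \eqref{5.26}.

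For the sufficiency direction in (i), I would first note that \eqref{5.2A} implies the first condition of \eqref{I_brinck} (which is needed to invoke Theorem \ref{thContSpec}). Given $h>0$, the sequence $h\gB$ satisfies \eqref{5.2} whenever $\gB$ does, since multiplication by $h^{-1}$ preserves convergence to zero. Applying Theorem \ref{thContSpec} to the Hamiltonian $\rH_{X,h\gB,q}$ then yields \eqref{5.31} for all $h>0$. For the necessity direction in (i), assume $\gB=-\gB^-$ and that \eqref{5.31} holds for all $h>0$. The operator $\rH_{X,h_0\gB,q}$ is lower semibounded for some $h_0>0$ as required by Proposition \ref{prop6.3}(ii)---this is automatic since we have assumed $\sigma_{\ess}(\rH_{X,h_0\gB,q})=\sigma_{\ess}(\rH_{X,q}^N)\subseteq [0,\infty)$ and the form approach, combined with Proposition \ref{cor:3.3}, can be used to check semiboundedness. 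Proposition \ref{prop6.3}(ii) then delivers \eqref{4.40}, which by the equivalence above reduces to \eqref{5.26}.

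Part (ii) follows immediately by combining part (i) with Proposition \ref{th5.2}: under the extra hypothesis \eqref{5.20}, the essential spectrum of $\rH_{X,q}^N$ coincides with that of $\rH_X^N$, which is explicitly $\{0\}\cup\bigcup_{\lambda\in\mathcal{D}}\{\pi^2n^2/\lambda^2\}_{n=1}^\infty$. Substituting this identity into the equality $\sigma_{\ess}(\rH_{X,h\gB,q})=\sigma_{\ess}(\rH_{X,q}^N)$ obtained from part (i) produces \eqref{5.31A}. Equivalently, one may simply observe that this is Corollary \ref{cor6.3} under the additional hypothesis $\gd_*>0$, which replaces \eqref{4.40} with \eqref{5.26}.

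The proof is essentially a book-keeping exercise, and there is no real obstacle beyond carefully verifying the equivalences of the various smallness conditions on $\gB$ under the assumption $d_*>0$, and checking that the hypotheses of the quoted results (Theorem \ref{thContSpec}, Proposition \ref{prop6.3}, Proposition \ref{th5.2}, Corollary \ref{cor6.3}) are fulfilled. The only subtlety is the invocation of Proposition \ref{prop6.3}(ii) in the necessity part: one must ensure the lower semiboundedness of $\rH_{X,h_0\gB,q}$ for at least one $h_0>0$, which for $\gB=-\gB^-$ follows from the hypothesis $\sigma_{\ess}(\rH_{X,h_0\gB,q})=\sigma_{\ess}(\rH_{X,q}^N)\ge 0$ combined with the fact that $\rH_{X,h_0\gB,q}$ is self-adjoint by Theorem \ref{th_III.1}.
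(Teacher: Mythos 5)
Your overall route is the same as the paper's: under $0<\gd_*\le \gd^*<\infty$ the quantity $|\gB_k|^{-1}/\min\{\gd_k,\gd_{k+1}\}$ is pinched between $|\gB_k|^{-1}/\gd^*$ and $|\gB_k|^{-1}/\gd_*$, so \eqref{5.26} is equivalent to \eqref{5.2} (and, when $\gB=-\gB^-$, to \eqref{4.40}), after which everything is delegated to Theorem \ref{thContSpec} / Proposition \ref{prop6.3} for part (i) and to Proposition \ref{th5.2} / Corollary \ref{cor6.3} for part (ii). This is exactly the paper's two-line argument, and your sufficiency direction and part (ii) are fine.

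The one place where you go beyond the paper is also the one place where your argument breaks: the justification of the hypothesis of Proposition \ref{prop6.3}(ii) that $\rH_{X,h_0\gB,q}$ is lower semibounded for some $h_0>0$. You claim this follows from the assumed equality $\sigma_{\ess}(\rH_{X,h_0\gB,q})=\sigma_{\ess}(\rH_{X,q}^N)\subseteq[0,\infty)$ together with self-adjointness ``by Theorem \ref{th_III.1}.'' This is circular: Theorem \ref{th_III.1} derives self-adjointness \emph{from} lower semiboundedness, so it cannot be used to establish self-adjointness as a stepping stone towards semiboundedness. Moreover, even granting self-adjointness, $\sigma_{\ess}\subseteq[0,\infty)$ does not imply a lower bound --- the negative spectrum could consist of isolated finite-multiplicity eigenvalues accumulating at $-\infty$. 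Your appeal to Proposition \ref{cor:3.3} is equally circular, since its hypothesis \eqref{eq:b_klmn} on $\gB^-$ is essentially the conclusion one is trying to reach in the necessity direction. The honest reading (which the paper adopts implicitly by declaring the corollary ``immediate from Proposition \ref{prop6.3}'') is that the lower semiboundedness of $\rH_{X,h\gB,q}$ is part of what ``validity of \eqref{5.31} for all $h>0$'' presupposes, i.e.\ it is a standing hypothesis inherited from Proposition \ref{prop6.3}(ii), not something to be derived. You should either state it as such or drop the purported derivation; as written, that step is wrong.
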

\begin{proof}
Since  $d_*>0$ and $\gB_k < 0$, $k\in \N$,
condition \eqref{5.26} is equivalent to \eqref{4.40}.  
Therefore the statements $(i)$ and $(ii)$ are immediate from Proposition
\ref{prop6.3} and Corollary \ref{cor6.3}, respectively.
     \end{proof}

 \appendix

 \section{Quadratic forms}\label{Sec_prelim}

Let $\gH$ be a Hilbert space and let $\gt$ be a densely defined
quadratic form in $\gH$. Let also $\gt$ be lower semibounded,
$\gt[u]\ge -c\|u\|_\gH^2$, $u\in\dom(\gt)$, for some $c\in\R$.
Denote by $\gH'_\gt$ the domain $\dom(\gt)$ equipped  with the
norm
\begin{equation}\label{1.1}
\|u\|_\gt:=\gt[u]+(1+c)\|u\|^2_\gH,\qquad u\in\dom(\gt).
     \end{equation}
The form $\gt$ is called \emph{closable} if the norms $\|\cdot\|$
and $\|\cdot\|_\gt$ are topologically consistent, i.e., for
every Cauchy sequence $\{u_n\}_{n=1}^\infty\subset \dom(\gt)$ with
respect to $\|\cdot\|_\gt$, $\|u_n\|_\gH\to 0$ implies
$\|u_n\|_\gt\to 0$. In this case the completion $\gH_\gt$ of
$\gH'_\gt$ can be realized as a subset of  $\gH.$

The form $\gt$ is called \emph{closed} if $\gH'_\gt = \gH_\gt$.
%
%

Let $A$ be a self-adjoint lower semibounded operator in $\gH$,
$A=A^*\ge -c I_\gH$. Denote by $\gt'_A$ a (densely defined)
quadratic form given by $\gt'_A[f]=(A f,f)$,
$\dom(\gt'_A)=\dom(A)$. It is known (see
\cite{Kato66}) that this form is closable and lower
semibounded, $\gt'_A\ge -c$. Its closure $\gt_A$ satisfies
$\gt_A\ge -c$ and $\gH_\gt:=\dom(\gt_A)
=\dom\bigl((A+c)^{1/2}\bigr)$. Moreover, by the second
representation theorem \cite[Theorem 6.2.23]{Kato66}, $\gt_A$
admits the representation
 \begin{equation}\label{repr_1}
\gt_A[u]= \|(A+c)^{1/2}u\|^2_\gH -c\|u\|^2_\gH, \qquad
u\in\dom(\gt_A). 
\end{equation}
%
%

We set  $\gH_A:= \gH_{\gt_A}$.  Further, by the first
representation theorem \cite[Theorem 6.2.1]{Kato66}, to any closed
lower semibounded quadratic form $\gt\ge -c$ in $\gH$ there corresponds
a unique self-adjoint operator $A=A^*\ge -c$ in $\gH$ such that
$\gt$ is the closure of $\gt'_A$. It is uniquely determined by the
conditions $\dom(A)\subset\dom(\gt)$ and
    \begin{equation}\label{1.2}
(Au,v)=\gt[u,v], \qquad   u\in\dom(A),\  v\in \dom(\gt),
    \end{equation}
where $\gt[\cdot,\cdot]$ is a sesquilinear form associated with $\gt$ via the polarization identity.

The following theorem is well known (see e.g. \cite{Kato66}).

 \begin{theorem} [Rellieh]\label{th_Rel}
Let $A=A^*$ be a lower semibounded operator in $\gH$ and let
$\gt_A$ be the corresponding form. The spectrum $\sigma(A)$ of the
operator $A$ is discrete if and only if the embedding $i_A:\gH_A
\hookrightarrow \gH$ is compact.
  \end{theorem}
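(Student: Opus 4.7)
The plan is to reduce the theorem to the classical fact that a positive self-adjoint operator has discrete spectrum if and only if its inverse is compact, which follows immediately from the spectral theorem. The main technical device is the second representation theorem recalled just above the statement, which identifies the energy space $\gH_A$ with $\dom(A^{1/2})$ endowed with a norm equivalent to $\|A^{1/2}\cdot\|_\gH$.

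First, by replacing $A$ with $A+(c+1)I$ I may assume without loss of generality that $A\ge I$; this shift only translates the spectrum (preserving discreteness) and produces an equivalent norm on $\gH_A$, so it does not affect compactness of $i_A$. Under this normalization, formula \eqref{repr_1} reduces to $\|u\|^2_{\gH_A}=\|A^{1/2}u\|^2_\gH$ for all $u\in\gH_A=\dom(A^{1/2})$. Since $A^{1/2}$ is self-adjoint with bounded everywhere-defined inverse $A^{-1/2}$ on $\gH$, the map $A^{1/2}\colon \gH_A\to\gH$ is in fact an isometric isomorphism.

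Next, I factor the embedding through $\gH$:
\[
i_A \;=\; A^{-1/2}\circ A^{1/2}\colon\; \gH_A \xrightarrow{\ A^{1/2}\ } \gH \xrightarrow{\ A^{-1/2}\ } \gH,
\]
where the first factor is unitary. Consequently $i_A$ is compact if and only if $A^{-1/2}\colon\gH\to\gH$ is compact. Since $A^{-1}$ is a bounded positive self-adjoint operator with $A^{-1/2}=(A^{-1})^{1/2}$, the functional calculus shows that $A^{-1/2}$ is compact iff $A^{-1}$ is compact: indeed, in an orthonormal basis of eigenvectors of $A^{-1}$ with eigenvalues $\mu_n>0$, the eigenvalues of $A^{-1/2}$ are $\mu_n^{1/2}$, and $\mu_n\to 0$ iff $\mu_n^{1/2}\to 0$.

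Finally, I invoke the spectral theorem: the bounded positive self-adjoint operator $A^{-1}$ is compact precisely when its spectrum consists of eigenvalues of finite multiplicity accumulating only at $0$. By the spectral mapping theorem applied to $\lambda\mapsto\lambda^{-1}$, this is equivalent to $\sigma(A)$ consisting of eigenvalues of finite multiplicity accumulating only at $+\infty$, i.e. to $\sigma_{\ess}(A)=\emptyset$, which is the definition of discreteness of $\sigma(A)$. Chaining these equivalences completes the proof. I do not anticipate a genuine obstacle, as every step is a standard consequence of the functional calculus together with the second representation theorem; the only point requiring modest care is the WLOG shift, which must be justified to handle the general lower-semibounded $A$ (but the shift is harmless precisely because discreteness and the norm on $\gH_A$ are both invariant, up to equivalence, under adding a constant).
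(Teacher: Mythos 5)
Your proof is correct. Note that the paper itself does not prove this statement: Theorem \ref{th_Rel} is stated in the appendix as a well-known fact with a reference to Kato's book, so there is no in-paper argument to compare against. Your argument is the standard textbook one — after the harmless shift to $A\ge I$, the second representation theorem makes $A^{1/2}$ a surjective isometry from $\gH_A$ onto $\gH$, the embedding factors as $i_A=A^{-1/2}\circ A^{1/2}$, and compactness of $i_A$ is thereby equivalent to compactness of $A^{-1/2}$, hence of $A^{-1}$, hence (by the spectral theorem for compact positive operators and spectral mapping) to $\sigma_{\ess}(A)=\emptyset$. Each step is justified; the only points needing the care you already supplied are the surjectivity of $A^{1/2}$ (which follows from $0\in\rho(A^{1/2})$ after the shift) and the equivalence of the shifted energy norm with the one in \eqref{1.1}, which in fact coincide here.
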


\begin{definition}\label{def2.2}
Let the operator $A$ be self-adjoint and positive in $\gH$,
$A=A^*> 0$, and let $\gt_A$ be the corresponding form. The form
$\gt$ is called \emph{relatively form bounded} with respect to
$\gt_A$ ($\gt_A$-bounded) if $\dom(\gt_A)\subseteq \dom(\gt)$ and
there are positive constants $a,b>0$ such that
\begin{equation}\label{II.4}
|\gt[f]|\leq a\gt_A[f]+b\|f\|^2_\gH,\qquad f\in\dom(\gt_A).
\end{equation}
The infimum of all possible $a$ is called \emph{the form bound} of $\gt$ with respect to $\gt_A$. If $a$ can be chosen arbitrary small, then $\gt$ is called \emph{infinitesimally form bounded} with respect to $\gt_A$.
\end{definition}

For the proof of the following theorem see, e.g., \cite{RedSim75}.

\begin{theorem}[KLMN]\label{th_KLMN}
Let $\gt_A$ be the form corresponding to the operator $A=A^*> 0$ in $\gH$. If the form $\gt$ is $\gt_A$-bounded with relative bound $a<1$, then the form
\begin{equation}\label{II.5}
{\gt}_1 := {\gt}_A+{\gt}, \qquad   \dom{\gt}_1:=\dom({\gt}_A),
\end{equation}
is closed and lower semibounded in $\gH$ and hence gives rise to a self-adjoint semibounded operator. 
Moreover, the norms  $\|\cdot\|_A$  and $\|\cdot\|_{\gt_1}$ are
equivalent.
   \end{theorem}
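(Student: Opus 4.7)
My plan is to exploit the relative bound \eqref{II.4} directly: since $|\gt[f]|\le a\gt_A[f]+b\|f\|^2_\gH$ with $a<1$, the perturbation $\gt$ should be controllable by $\gt_A$ on both sides, and this two-sided pointwise control should upgrade to an equivalence of the corresponding energy norms. Closedness of $\gt_1$ will then follow automatically from closedness of $\gt_A$.

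First I would unfold the absolute value in \eqref{II.4} to obtain, for all $f\in\dom(\gt_A)$,
\[
(1-a)\gt_A[f]-b\|f\|^2_\gH \;\le\; \gt_1[f] \;\le\; (1+a)\gt_A[f]+b\|f\|^2_\gH.
\]
Because $A>0$ we have $\gt_A\ge 0$, so the left inequality already yields $\gt_1\ge -b$, which establishes lower semiboundedness and fixes the constant playing the role of $c$ in the energy norm $\|\cdot\|_{\gt_1}$.

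Next I would translate the pointwise bounds into a comparison of energy norms. Adding $(1+b)\|f\|^2_\gH$ to each side and using $\|f\|^2_A=\gt_A[f]+\|f\|^2_\gH$ (valid since $A>0$ permits $c=0$ in \eqref{1.1}) together with $\|f\|^2_{\gt_1}=\gt_1[f]+(1+b)\|f\|^2_\gH$, a short rearrangement produces constants $c_1,c_2>0$ with
\[
c_1\|f\|^2_A \;\le\; \|f\|^2_{\gt_1} \;\le\; c_2\|f\|^2_A, \qquad f\in\dom(\gt_A).
\]
The hypothesis $a<1$ is invoked precisely here, to guarantee $c_1=1-a>0$. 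Since $\gt_A$ is closed, $(\dom(\gt_A),\|\cdot\|_A)$ is complete, and the norm equivalence transports completeness to $(\dom(\gt_1),\|\cdot\|_{\gt_1})$; hence $\gt_1$ is closed. The existence of a unique self-adjoint lower semibounded operator associated with $\gt_1$ is then immediate from the first representation theorem \eqref{1.2}.

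I do not expect any serious obstacle: the entire argument is a bookkeeping exercise on scalar inequalities. The only points requiring care are tracking which shift constant appears in each energy norm and isolating the single moment where the strict inequality $a<1$ is actually used (namely, to keep the lower norm-equivalence constant strictly positive).
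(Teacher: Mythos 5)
Your argument is correct: the two-sided bound $(1-a)\gt_A[f]-b\|f\|^2_\gH\le\gt_1[f]\le(1+a)\gt_A[f]+b\|f\|^2_\gH$ together with $\gt_A\ge 0$ gives $\gt_1\ge -b$ and the norm equivalence $(1-a)\|f\|^2_A\le\|f\|^2_{\gt_1}\le\max\{1+a,1+2b\}\|f\|^2_A$, from which closedness and the existence of the associated self-adjoint operator follow as you say. The paper itself gives no proof of this theorem, referring instead to \cite{RedSim75}; your proof is precisely the standard argument found there, so there is nothing to compare beyond noting the agreement.
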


Recall that a quadratic form $\gt$ in $\gH$ is called {\em compact} if it is bounded, $\gt= \gt_C,$
and the (bounded) operator $C$ is compact in $\gH$.

We also need the following result of Birman \cite[Theorem 1.2]{Bir61} (see also
\cite[Theorem 1.19]{Gla65})
  \begin{theorem}[Birman]\label{th2.2}
Let $A=A^*$ be a self-adjoint lower semibounded operator in $\gH$  and let $\gt_A$
 be the corresponding form.
If the quadratic form $\gt$ in ${\gH}$  is compact in $\gH_A$ (or
simply, $\gt_A$-compact), then the form $\gt_1$ defined by
\eqref{II.5}
%
%
%
%
is closed, lower semibounded in $\gH$, and the operator $B=B^*$
associated with the form ${\gt}_1$  satisfies
$\sigma_{\ess}(B)=\sigma_{\ess}(A)$.
    \end{theorem}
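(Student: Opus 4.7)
The plan is to verify that $\gt_1 := \gt_A + \gt$ is closed and lower semibounded by the KLMN theorem, and then to prove equality of essential spectra by showing that the resolvent difference $(A+M)^{-1} - (B+M)^{-1}$ is compact on $\gH$ for large $M$, after which Weyl's theorem and spectral mapping conclude.

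Let $T$ be the bounded self-adjoint operator on $\bigl(\gH_A,(\cdot,\cdot)_A\bigr)$ representing $\gt$, so that $\gt[\phi,\psi] = (T\phi,\psi)_A$; by hypothesis $T$ is compact. To establish infinitesimal $\gt_A$-form boundedness of $\gt$, fix $\varepsilon > 0$ and split $T = T_N + R_N$ with $T_N$ of finite rank in $\gH_A$ and $\|R_N\|_{\gH_A} < \varepsilon$, so $|(R_N f,f)_A| \le \varepsilon\|f\|_A^2$. Writing $T_N = \sum_{j=1}^N c_j(\cdot,\psi_j)_A\phi_j$, approximate $\psi_j,\phi_j$ in $\|\cdot\|_A$-norm by elements $\psi_j^\varepsilon,\phi_j^\varepsilon \in \dom(A)$ (dense in $\gH_A$). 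For such approximants the functional $f\mapsto (f,\psi_j^\varepsilon)_A = (f,(A+c+1)\psi_j^\varepsilon)_\gH$ is $\gH$-continuous, so a Cauchy--Schwarz splitting yields $|(T_N f,f)_A| \le \varepsilon\|f\|_A^2 + C_\varepsilon\|f\|^2$. Combining, $|\gt[f]| \le 2\varepsilon\|f\|_A^2 + C_\varepsilon\|f\|^2$ for arbitrary $\varepsilon > 0$; after shifting so $A + c + 1 > 0$, this is infinitesimal $\gt_A$-form boundedness, and Theorem \ref{th_KLMN} delivers closedness and lower semiboundedness of $\gt_1$, together with $\gH_{\gt_1} = \gH_A$ algebraically and topologically. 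Let $B = B^*$ denote the associated self-adjoint operator.

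For the essential-spectrum equality, choose $M > 0$ so that $A + M > 0$ and $B + M > 0$; set $\ti A := A + M$ and $\ti T := \ti A^{1/2}\, T\, \ti A^{-1/2}$. Since $T$ is compact on $\gH_A$ and $\ti A^{\pm 1/2}$ give unitary isomorphisms between $\gH$ and $\gH_A$ equipped with the (equivalent) inner product $(\phi,\psi)_{A,M} := \gt_A[\phi,\psi] + M(\phi,\psi)_\gH$, the operator $\ti T$ is bounded, self-adjoint, and compact on $\gH$; enlarging $M$ further if necessary we may assume $-1 \notin \sigma(\ti T)$, absorbing the finitely many obstructive eigenvalues of $\ti T$. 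A direct form computation using $\gt_1 + M\|\cdot\|^2 = ((I+T)\cdot,\cdot)_{A,M}$ identifies
\begin{equation*}
(B+M)^{-1} = \ti A^{-1/2}(I+\ti T)^{-1}\ti A^{-1/2},\qquad (A+M)^{-1} = \ti A^{-1},
\end{equation*}
hence
\begin{equation*}
(A+M)^{-1} - (B+M)^{-1} = \ti A^{-1/2}\,\ti T\,(I+\ti T)^{-1}\,\ti A^{-1/2},
\end{equation*}
which is compact as a composition of bounded operators with the compact operator $\ti T$. Weyl's stability theorem for compact perturbations yields $\sigma_{\ess}((A+M)^{-1}) = \sigma_{\ess}((B+M)^{-1})$, and the spectral mapping $\lambda \mapsto (\lambda+M)^{-1}$ (which preserves essential spectra for self-adjoint operators) gives $\sigma_{\ess}(B) = \sigma_{\ess}(A)$.

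The chief subtle point is the infinitesimal form bound in the first step: $\gt_A$-compactness of $\gt$ only yields $|\gt[f]| \le \|T\|\|f\|_A^2$, which is not infinitesimal in general, and one must exploit the density of $\dom(A) \subset \gH_A$ to convert finite-rank $\|\cdot\|_A$-controls into genuine $\|\cdot\|_\gH$-norm controls up to arbitrarily small $\|\cdot\|_A$-errors. Once this is secured, the resolvent identification is routine bookkeeping with the unitary $\ti A^{-1/2}\colon \gH \to \bigl(\gH_A,(\cdot,\cdot)_{A,M}\bigr)$, and the essential-spectrum equality reduces to standard applications of Weyl's theorem and spectral mapping.
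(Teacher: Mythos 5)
The paper does not prove this statement: it is quoted verbatim as a classical result of Birman with references to \cite{Bir61} and \cite{Gla65}, so there is no in-paper argument to compare against. Your proof is correct and self-contained, and it follows the standard route: infinitesimal form boundedness via a finite-rank-plus-small decomposition of the representing operator $T$, KLMN to close $\gt_1$, and then compactness of the resolvent difference plus Weyl's theorem. The first step is indeed the only delicate one, and your device of approximating the vectors in the finite-rank part by elements of $\dom(A)$ so that $(f,\psi_j^\varepsilon)_A=(f,(A+c+1)\psi_j^\varepsilon)_\gH$ becomes $\gH$-continuous is exactly what is needed; it also proves the paper's unproved remark that $\gt_A$-compactness implies infinitesimal $\gt_A$-boundedness.

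Two small points of bookkeeping, neither of which is a gap. First, the operator $T$ you introduce represents $\gt$ with respect to $(\cdot,\cdot)_A$ (built with the constant $1+c$), whereas the identity $\gt_1+M\|\cdot\|^2=((I+T)\cdot,\cdot)_{A,M}$ requires the representer with respect to $(\cdot,\cdot)_{A,M}$; since the two inner products are equivalent, the two representers differ by a bounded invertible factor and both are compact, so you should simply fix the $(\cdot,\cdot)_{A,M}$-representer from the outset. Second, the clause ``enlarging $M$ further if necessary we may assume $-1\notin\sigma(\ti T)$'' is unnecessary: once $M$ is chosen so that $B+M\ge\delta I$ with $\delta>0$, one has $((I+\ti T)u,u)_\gH=\gt_1[\phi]+M\|\phi\|^2>0$ for $u\ne 0$, so $\ker(I+\ti T)=\{0\}$, and the Fredholm alternative for the compact perturbation $\ti T$ already gives bounded invertibility of $I+\ti T$.
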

\begin{remark}
Note that the form $\gt$ is  infinitesimally $\gt_A$-bounded if it
is $\gt_A$-compact.
\end{remark}

A weaker form of the following lemma is known (cf. \cite[Theorem
1.17]{Gla65}).
       \begin{lemma}\label{lem3.2}
Let $\gt$ and $\gt_1$ be (not necessarily closable)
lower semibounded forms in $\gH$ 
 and assume that
     \begin{equation}\label{3.1}
\big|\gt[u]\big|\le\gt_1[u], \qquad u\in\dom(\gt_1)\subseteq
\dom(\gt).
   \end{equation}
Assume also that $A=A^*\ge -cI$ and $\gt_1$ is $\gt_A$-compact.  Then $\gt$ is $\gt_A$-compact too.
     \end{lemma}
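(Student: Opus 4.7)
The plan is to reduce the sign-indefinite problem to a non-negative one by a symmetric splitting. Define
$$\gt_\pm := \tfrac{1}{2}(\gt_1 \pm \gt), \qquad \dom(\gt_\pm) := \dom(\gt_1).$$
The hypothesis $|\gt[u]| \le \gt_1[u]$ on $\dom(\gt_1)$ guarantees $\gt_\pm[u] \ge 0$ there, so $\gt_+$ and $\gt_-$ are non-negative forms with $\gt = \gt_+ - \gt_-$ and $0 \le \gt_\pm \le \gt_1$ pointwise on $\dom(\gt_1)$. Since $\gt_1$ is $\gt_A$-compact, in particular it is $\gt_A$-bounded, hence extends continuously to a bounded non-negative sesquilinear form on the energy space $\gH_A = \dom(\overline{\gt}_A)$. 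Because $\dom(\gt_A) \subseteq \dom(\gt_1)$ by $\gt_A$-boundedness, the inequality $0 \le \gt_\pm \le \gt_1$ persists, showing each $\gt_\pm$ is also $\gt_A$-bounded and extends to a bounded non-negative sesquilinear form on $\gH_A$. By the Riesz representation theorem in $\gH_A$, there exist bounded self-adjoint operators $C_\pm = C_\pm^* \ge 0$ on $\gH_A$ with $\gt_\pm[u,v] = (C_\pm u, v)_{\gH_A}$.

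It remains to show that each $C_\pm$ is compact on $\gH_A$. Take an arbitrary bounded sequence $\{u_n\}$ in $\gH_A$ with $u_n \rightharpoonup 0$. Since $\gt_1$ is $\gt_A$-compact, the bounded non-negative self-adjoint operator $C_1$ on $\gH_A$ that represents $\gt_1$ is compact, so $C_1 u_n \to 0$ strongly in $\gH_A$; in particular
$$0 \le (C_\pm u_n, u_n)_{\gH_A} = \gt_\pm[u_n] \le \gt_1[u_n] = (C_1 u_n, u_n)_{\gH_A} \to 0.$$
Writing $C_\pm = (C_\pm^{1/2})^2$, this gives $\|C_\pm^{1/2} u_n\|^2_{\gH_A} \to 0$, whence $\|C_\pm u_n\|_{\gH_A} \le \|C_\pm^{1/2}\| \, \|C_\pm^{1/2} u_n\|_{\gH_A} \to 0$. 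Thus $C_\pm$ maps every weakly null sequence to a strongly null one and is therefore compact.

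Finally, $\gt$ is represented on $\gH_A$ by the bounded self-adjoint operator $C_+ - C_-$, a difference of two compact operators and hence compact. This means precisely that $\gt$ is $\gt_A$-compact, completing the proof.

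The main technical point to handle carefully is Step~4: the passage from $(C_\pm u_n, u_n) \to 0$ to compactness of $C_\pm$. For sign-indefinite bounded self-adjoint operators this implication fails, which is exactly why the splitting $\gt = \gt_+ - \gt_-$ is essential — only the non-negativity of $C_\pm$, exploited via the square root $C_\pm^{1/2}$, allows one to upgrade weak quadratic convergence to strong operator convergence and hence conclude compactness.
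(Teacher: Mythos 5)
The paper does not actually prove Lemma~\ref{lem3.2}: it states it in the Appendix with only a pointer to a weaker version in Glazman's book, so there is no internal proof to compare against. Judged on its own, your argument is correct and complete. The splitting $\gt_\pm=\tfrac12(\gt_1\pm\gt)$ is exactly the right device: hypothesis \eqref{3.1} makes both halves non-negative and dominated by $\gt_1$ on $\dom(\gt_1)\supseteq\gH_A$, domination by a bounded form gives boundedness of the non-negative quadratic forms $\gt_\pm$ on $\gH_A$ and hence (via Cauchy--Schwarz for the polarized sesquilinear forms) bounded non-negative representing operators $C_\pm$ on $\gH_A$, and the passage from $(C_\pm u_n,u_n)_{\gH_A}\to 0$ to $C_\pm u_n\to 0$ through $C_\pm^{1/2}$ is the standard and correct way to convert quadratic-form smallness into operator compactness; you are right that this step is where non-negativity is indispensable. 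Two cosmetic points: $\gt_1$ need not be ``extended'' to $\gH_A$ --- $\gt_A$-compactness already presupposes $\gH_A\subseteq\dom(\gt_1)$ and boundedness there, per Definition~\ref{def2.2} and the paper's notion of a compact form; and it is worth remarking that \eqref{3.1} forces $\gt_1\ge 0$, which you use tacitly when writing $0\le(C_1u_n,u_n)_{\gH_A}$. Neither affects the validity of the proof.
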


   We also need the following useful fact.
   \begin{lemma}\label{lem3.1}
Let $A=A^*\ge -cI$ and let $\gt$ be a nonnegative (not necessarily
closable) quadratic form in $\gH$.  Assume that $\gH_A \subset
\dom(\gt)$ and $\gt$ is closable  in $\gH_A$. Then the form $\gt$
is compact in $\gH_A$ if and only if the embedding $i:
\gH_A\to\dom(\gt)$ is compact.
  \end{lemma}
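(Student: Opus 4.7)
The plan is to reduce the lemma to the elementary operator-theoretic fact that a bounded nonnegative self-adjoint operator is compact if and only if its square root is compact, applied to the natural representation of $\gt$ on $\gH_A$.

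First I would show that the hypotheses force $\gt$, restricted to $\gH_A$, to be the quadratic form of a bounded nonnegative self-adjoint operator $C$ on $\gH_A$. Since $\dom(\gt) \supseteq \gH_A$, the form is defined on all of $\gH_A$, and closability in $\gH_A$ combined with this full domain actually makes $\gt$ already closed on $\gH_A$: the closure can only enlarge the domain via limits of sequences drawn from $\dom(\gt) = \gH_A$, but constant approximating sequences are admissible for every $u \in \gH_A$, so the closure agrees with $\gt$ everywhere on $\gH_A$. A closed nonnegative form with full domain on a Hilbert space is automatically bounded: equipping $\gH_A$ with both $\|\cdot\|_{\gH_A}$ and the form norm $\|u\|_\gt := (\gt[u] + \|u\|_{\gH_A}^2)^{1/2}$ gives two complete norms on the same underlying vector space with $\|\cdot\|_{\gH_A} \le \|\cdot\|_\gt$, and the closed graph theorem yields the reverse inequality. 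The first representation theorem then produces a bounded, self-adjoint, nonnegative operator $C$ on $\gH_A$ with $\gt[u] = (Cu,u)_{\gH_A} = \|C^{1/2}u\|_{\gH_A}^2$.

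With this representation the equivalence becomes transparent. By definition, ``$\gt$ is compact in $\gH_A$'' means that $C$ is compact on $\gH_A$. The compactness of the embedding $i\colon \gH_A \to \dom(\gt)$, where $\dom(\gt)$ carries the natural seminorm $\gt^{1/2}$ induced by the form itself, says that every bounded sequence $\{u_n\}\subset \gH_A$ admits a subsequence with $\gt[u_{n_k}-u_{n_l}]^{1/2} = \|C^{1/2}(u_{n_k}-u_{n_l})\|_{\gH_A} \to 0$, i.e., the operator $C^{1/2}\colon \gH_A \to \gH_A$ is compact. Thus the lemma collapses to the classical equivalence $C$ compact $\Longleftrightarrow$ $C^{1/2}$ compact. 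If $C^{1/2}$ is compact then $C = C^{1/2}\cdot C^{1/2}$ is compact as a product of a compact and a bounded operator; conversely, if $C$ is compact and nonnegative, the spectral theorem yields eigenvalues $\lambda_k \to 0$ of finite multiplicity, and $C^{1/2}$ shares the same spectral projections with eigenvalues $\sqrt{\lambda_k} \to 0$, hence is compact.

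The main obstacle is the opening reduction: pinning down which seminorm $\dom(\gt)$ carries (the natural choice is $\gt^{1/2}$, since a stronger norm such as $(\gt[u]+\|u\|_{\gH_A}^2)^{1/2}$ would collapse to $\|\cdot\|_{\gH_A}$ by boundedness and make the embedding compact only in finite dimensions) and justifying via the closed graph theorem that closability together with full domain on $\gH_A$ forces boundedness of the form. Once this functional-analytic framework is in place, the remainder is straightforward spectral theory.
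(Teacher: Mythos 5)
Your proof is correct. Note that the paper itself states Lemma \ref{lem3.1} in the appendix as a ``useful fact'' without supplying any proof, so there is nothing to compare your argument against; judged on its own, it is sound. The two nontrivial reductions are both handled properly: (a) closability together with $\dom(\gt)\supseteq\gH_A$ does force the restricted form to be closed (every $u\in\gH_A$ is approximated by the constant sequence, and closability makes the limiting value independent of the approximating sequence), and a closed nonnegative form defined on all of $\gH_A$ is bounded by the open mapping/closed graph argument you give, yielding $\gt[u]=\|C^{1/2}u\|^2_{\gH_A}$ for a bounded $C=C^*\ge 0$ on $\gH_A$; (b) with the target $\dom(\gt)$ carrying the seminorm $\gt^{1/2}$ --- which is indeed the reading consistent with how the lemma is used in Lemmas \ref{lem5.1} and \ref{lem5.2}, where the targets $\ell^2(\N;|\beta|^{-1})$ and $L^2(\R_+;|q|)$ are precisely the spaces normed by the square root of the perturbing form --- compactness of the embedding is literally compactness of $C^{1/2}$ on $\gH_A$, and the paper's definition of a compact form is compactness of $C$, so the lemma reduces to the standard equivalence $C$ compact $\Leftrightarrow$ $C^{1/2}$ compact for $C=C^*\ge 0$. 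The only point worth making explicit in a write-up is the polarization step producing the sesquilinear form from the quadratic form (harmless over a complex Hilbert space, which is the setting here); otherwise the argument is complete.
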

Recall that the negative spectrum of a self-adjoint operator  is
called \emph{discrete} if it has at  most two accumulation points
$0$ and $\infty$.
     \begin{proposition}\label{prop5.3}
Let $A=A^*\ge 0$ and $\gt_A$ be the corresponding
form in $\gH$. Assume that $h\in\R_+$ and $\gt_1$ is a nonnegative
(not necessarily closable) infinitesimaly $\gt_A$-bounded form in $\gH$.  Then:
\begin{itemize}
\item[(i)]  the form $\gt(h):=\gt_A-h\gt_1$ is lower semibounded and closed in $\gH$ and
$\dom\bigl(\gt(h)\bigr) = \dom(\gt_A) = \gH_A$.
\item[(ii)]  If, in addition, $\gt_1$ is closed in $\gH_A$, then the negative spectrum of the form
$\gt(h)$ is discrete for every $h>0$ if and only if the form
$\gt_1$ is compact in $\gH_A$.
\end{itemize}
      \end{proposition}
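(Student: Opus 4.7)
The plan is to deduce (i) from the KLMN theorem (Theorem~\ref{th_KLMN}) and (ii) by combining Birman's theorem (Theorem~\ref{th2.2}) for sufficiency with a Weyl sequence plus min-max argument for necessity.

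For (i), since $\gt_1\ge 0$ and $h\gt_1$ is again infinitesimally $\gt_A$-bounded, for any $a\in(0,1)$ there is $b_a=b_a(h)>0$ such that $|{-h\gt_1[f]}|=h\gt_1[f]\le a\,\gt_A[f]+b_a\|f\|^2_\gH$ on $\dom(\gt_A)$. Applying Theorem~\ref{th_KLMN} to the nonnegative form $\gt_A$ and the relatively bounded perturbation $-h\gt_1$ (whose relative bound is $<1$) yields that $\gt(h)=\gt_A-h\gt_1$ is closed and lower semibounded with $\dom(\gt(h))=\dom(\gt_A)=\gH_A$ and with norm equivalent to $\|\cdot\|_{\gt_A}$.

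For the sufficiency in (ii), if $\gt_1$ is compact in $\gH_A$ then so is $-h\gt_1$, and Birman's theorem (Theorem~\ref{th2.2}) gives $\sigma_{\ess}(B(h))=\sigma_{\ess}(A)\subseteq[0,\infty)$, where $B(h)$ denotes the self-adjoint operator associated with $\gt(h)$. Hence $\sigma(B(h))\cap(-\infty,0)$ is discrete.

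The crux is the necessity direction, which I treat by contraposition: assuming $\gt_1$ is not compact in $\gH_A$, I construct $h>0$ for which $\sigma(B(h))\cap(-\infty,0)$ is not discrete. Closedness of $\gt_1$ in $\gH_A$ together with the $\gt_A$-bound makes $\gt_1$ a bounded form on $\gH_A$, so the Riesz representation theorem produces a bounded self-adjoint operator $K\ge 0$ on $\gH_A$ with $\gt_1[f]=(Kf,f)_{\gH_A}$; compactness of $\gt_1$ in $\gH_A$ is equivalent to compactness of $K$. Non-compactness of $K$ together with $K\ge 0$ and boundedness forces some $\mu>0$ to lie in $\sigma_{\ess}(K)$. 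Fix $\epsilon\in(0,\mu/2)$. The spectral subspace $V:=\ran\bigl(E_K((\mu-\epsilon,\mu+\epsilon))\bigr)\subset\gH_A$ is infinite-dimensional and satisfies $(Kf,f)_{\gH_A}\ge(\mu-\epsilon)\|f\|^2_{\gH_A}$ for all $f\in V$. Choosing $h:=2/(\mu-\epsilon)$ and using $\|f\|^2_{\gH_A}=\gt_A[f]+\|f\|^2_\gH$ (since $c=0$), a direct calculation yields
\begin{equation*}
\gt(h)[f]\le\bigl(1-h(\mu-\epsilon)\bigr)\|f\|^2_{\gH_A}-\|f\|^2_\gH=-\|f\|^2_{\gH_A}-\|f\|^2_\gH\le-2\|f\|^2_\gH
\end{equation*}
for every $f\in V$. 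Since $V$ embeds as an infinite-dimensional subspace of $\gH$ via the continuous inclusion $\gH_A\hookrightarrow\gH$, Glazman's lemma (the min-max principle) forces $\sigma(B(h))\cap(-\infty,-2]$ to be infinite, contradicting discreteness of the negative spectrum.

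The main obstacle is extracting, from mere non-compactness of $\gt_1$ on $\gH_A$, an infinite-dimensional subspace of $\dom(\gt(h))$ on which $\gt(h)$ is dominated by a strictly negative multiple of $\|\cdot\|^2_\gH$; the spectral theorem applied to $K\ge 0$ on $\gH_A$ handles this cleanly via the spectral projection associated to a positive point of $\sigma_{\ess}(K)$, after which the min-max principle closes the argument.
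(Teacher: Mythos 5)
The paper does not actually prove Proposition \ref{prop5.3}: Remark \ref{rem5.1}(ii) attributes the result to Birman \cite[Theorem 1.3]{Bir61} (under a closability assumption on $\gt_1$ in $\gH$) and refers to \cite{AKM_10} for the proof, so there is no in-paper argument to compare against. Your proof is, as far as I can check, complete and correct, and it follows the route one would expect from Birman's original argument: (i) is indeed immediate from the KLMN theorem applied to $-h\gt_1$, which inherits the infinitesimal $\gt_A$-bound; sufficiency in (ii) is exactly Theorem \ref{th2.2} combined with $\sigma_{\ess}(A)\subseteq[0,\infty)$; and for necessity, representing $\gt_1$ on $\gH_A$ by a bounded operator $K\ge 0$, extracting $\mu>0$ in $\sigma_{\ess}(K)$ from non-compactness, and feeding the infinite-dimensional spectral subspace $V=\ran E_K((\mu-\epsilon,\mu+\epsilon))$ into the min-max principle for a suitably large $h$ is the standard and correct mechanism. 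Two small remarks. First, your estimate uses $\|f\|^2_{\gH_A}=\gt_A[f]+\|f\|^2_\gH$, which is legitimate here because $A\ge 0$ so $c=0$ in \eqref{1.1}. Second, you never actually use the hypothesis that $\gt_1$ is closed in $\gH_A$: the infinitesimal $\gt_A$-bound already makes $\gt_1$ an everywhere-defined bounded (hence closed) form on $\gH_A$, so the Riesz representation is available without it; that hypothesis is only genuinely needed in the weakened setting of Remark \ref{rem5.1}(i), where infinitesimal boundedness is dropped. Neither point affects the validity of your argument.
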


    \begin{remark}\label{rem5.1}
$(i)$  \  Statement $(ii)$ of Proposition \ref{prop5.3} remains valid if we replace
infinitesimally $\gt_A$-boundedness of $\gt_1$ by the assumption that $\dom(\gt_1) \supset \dom(\gt_A)$
and the form $\gt(h)=\gt_A-h\gt_1$ is lower semibounded and closed in $\gH$ for every $h>0$.

$(ii)$
Proposition \ref{prop5.3} was obtained by M.S. Birman \cite[Theorem 1.3]{Bir61} under the assumption that the form $\gt_1$ is closable in $\gH$.
For a proof of Proposition \ref{prop5.3} we refer to \cite{AKM_10}.
         \end{remark}


\end{document}